\newcommand\blfootnote[1]{%
  \begingroup
  \renewcommand\thefootnote{}\footnote{#1}%
  \addtocounter{footnote}{-1}%
  \endgroup
}
\newtheorem{theorem}{Theorem}
\newtheorem{lemma}{Lemma}
\newtheorem{proposition}{Proposition}
\newtheorem{corollary}{Corollary}
\theoremstyle{definition}%
\newtheorem{example}{Example}
\newtheorem{remark}{Remark}%
\newtheorem{definition}{Definition}%
\newtheorem{assumption}{Assumption}%
\newtcolorbox{tcbstrikeout}{breakable,
 enhanced jigsaw,
 opacityback=0,
 parbox=false,
 boxrule=0mm,
 top=0mm,bottom=0pt,left=0pt,right=0pt,
 boxsep=0pt,
 frame hidden,
 finish={\fill[pattern=mystrikeout,pattern color=red] (frame.north west) rectangle (frame.south east);}
}
\let\N\Natural
\let\R\Real
\title{Improving the Security of United States Elections with
Robust Optimization}
\author
{Braden L. Crimmins$^{1}$, J. Alex Halderman$^{1}$, Bradley Sturt$^{2}$\\
\\
\normalsize{$^{1}$Computer Science and Engineering, University of Michigan, Ann Arbor}\\
\normalsize{$^{2}$Information and Decision Sciences, University of Illinois Chicago}
}
\date{}
\begin{document} 



\maketitle

\begin{abstract}
For more than a century, election officials across the United States have inspected voting machines before elections using a procedure called Logic and Accuracy Testing (LAT). This procedure consists of election officials casting a test deck of ballots into each voting machine and confirming the machine produces the expected vote total for each candidate. We bring a scientific perspective to LAT by introducing the first formal approach to designing test decks with rigorous security guarantees. Specifically, our approach employs robust optimization to find test decks that are guaranteed to detect any voting machine misconfiguration that would cause votes to be swapped across candidates. Out of all the test decks with this security guarantee, our robust optimization problem yields the test deck with the minimum number of ballots, thereby minimizing implementation costs for election officials. To facilitate deployment at scale, we develop a practically efficient exact algorithm for solving our robust optimization problems based on the cutting plane method. In partnership with the Michigan Bureau of Elections, we retrospectively applied our approach to all 6928 ballot styles from Michigan’s November 2022 general election; this retrospective study reveals that the test decks with rigorous security guarantees obtained by our approach require, on average, only 1.2\% more ballots than current practice. Our approach has since been piloted in real-world elections by the Michigan Bureau of Elections as a low-cost way to improve election security and increase public trust in democratic institutions.\looseness=-1  
\end{abstract}
\blfootnote{First version: August 4, 2023. 
Revisions submitted on May 21, 2024 and September 3, 2024.
Accepted for publication on September 27, 2024.}

\clearpage

\section{Introduction} \label{sec:intro} 

Computerized voting machines are widely used to scan  ballots and determine election outcomes throughout the United States and around the world. Voting machines are used instead of hand counting because voters are often invited to participate in a large number of contests in an election---including political offices from the President to local school boards---which causes hand counting to be impractically costly and time consuming. In this paper, we develop a low-cost approach to reducing the security risks of voting machines and improving public trust in democratic institutions by drawing on  techniques from the field of robust optimization. 

\subsection{Background} \label{sec:intro:background}
For voting machines that scan ballots to count votes accurately, they must be configured
with correct mappings between the voting targets on the ballot---i.e., the boxes or ovals that voters mark---and the candidates who should receive the votes.  If a voting machine is  configured with an incorrect mapping, then the machine may count votes for the wrong candidates.  As illustrated by the following examples, voting machines  misconfigurations  can produce {dramatically} wrong vote totals and damage public trust in elections:
\begin{itemize}
    \item 
During the 2020 election, voting machines in Antrim County, Michigan were accidentally misconfigured with mappings  that caused votes for Republicans to be tallied for Democrats and votes for Democrats to go uncounted~\cite{halderman2021antrim}.  The erroneous vote totals announced as a result of this flaw received widespread media coverage \cite{HowaMich33,Official52}, and this incident  served as the basis for a draft executive order, later obtained by the Congressional committee investigating the events of January 6, 2021, that would have directed the Secretary of Defense to seize voting machines~\cite{SWAN_2022}.  
\item Similar accidental misconfigurations affected announced election results in Pennsylvania~\cite{corasaniti2019} and Georgia~\cite{dekalb2022} in the past five years.  Although the errors were quickly caught and corrected, they similarly resulted in the initial publication of incorrect vote totals and generated significant negative publicity for the affected jurisdictions. In a particularly recent example, an accidental misconfiguration in Northampton County, Pennsylvania during their November 2023 election caused votes to be swapped across two judge contests, leading to voter confusion and long lines on election day~\cite{politiconorthamp,missingnorthampton}.  
\item Misconfigurations could also be induced deliberately by adversaries with very little technical expertise. Indeed, a group that contends the outcome of the 2020 presidential election was fraudulent recently released a video that demonstrates exactly how one could strategically induce these misconfigurations to manipulate future election outcomes~\cite{deperno}. Such deliberate manipulations would allow an adversary to sow doubt in election systems, influence who wins prominent political offices, and---in at least 24 states~\cite{ncsl-initiative}---directly affect the passage of laws on matters ranging from environmental policy to abortion rights.
\end{itemize}

Past work has sought to address the potential dangers of compromised voting machines through post-election interventions such as risk-limiting audits and cryptographic systems that make announced results publicly verifiable~\cite{lindeman2012gentle,bernhard2017public}. These post-election procedures are increasingly being implemented in the United States and have received  attention in popular media \cite{youtube}. 
However, no prior work has developed a procedure that is guaranteed to detect important classes of possible attacks \emph{before} an election takes place. Such a procedure could help safeguard election integrity and public confidence by detecting attacks before they affect reported vote totals.  Our paper  develops a rigorous and low-cost pre-election defense against misconfiguration-based cyberattacks by applying robust optimization to a  widely-used testing procedure called Logic and Accuracy Testing (LAT).\looseness=-1

\begin{figure}[t]
\centering
\includegraphics[width=0.9\linewidth]{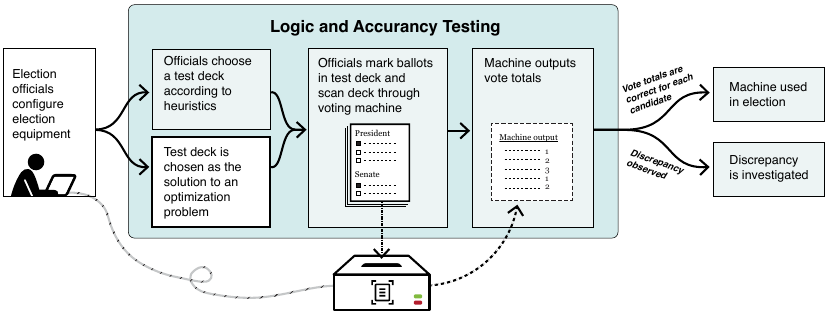} 
\caption{ Visualization of Logic and Accuracy Testing (LAT).  The procedure is conducted chronologically from left to right on voting machines before each election. The modification to LAT proposed in this paper is denoted by the white box with the text ``Test deck is chosen as the solution to an optimization problem".} \label{fig:lat}
\end{figure}

\subsection{Logic and Accuracy Testing}

For more than a century,  election officials throughout the United States have used LAT to  inspect voting machines prior to elections. 
  The idea behind LAT is simple: officials prepare a set of ballots
with known votes—dubbed a test deck—then cast the ballots through each voting
machine and confirm that the machine outputs the expected tallies for all candidates (see Figure~\ref{fig:lat}).
Any discrepancy indicates a potential malfunction, which can be addressed
before the machine is used to count real votes. LAT was initially developed in the early 1900s
to protect against breakdowns of mechanical lever-based voting machines and is today
required by law before each election in all fifty states~\cite{Jones_Simons_2012,walker2022lat}. 

\begin{figure}[t]
\centering
\includegraphics[width=0.6\linewidth]{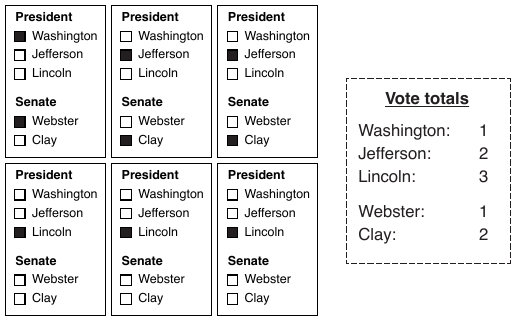}

\caption{ A test deck composed of six  ballots for a simple election with two contests. The first contest  is a presidential contest with three candidates; the second contest is a senatorial contest with two candidates. In each contest, a voter is allowed to vote for at most one candidate.  } \label{fig:testdeck}

\end{figure}

Despite the widespread use of LAT,  no prior work has used LAT for detecting attacks on {modern} computerized voting machines. In fact, LAT is not an obvious candidate for securing  modern elections;  it cannot, for example, detect  malicious alterations to a voting machine's software that cause the voting machine to operate fraudulently only after testing has concluded.\footnote{Such manipulations are sometimes called ``Volkswagen attacks'', in reference to the 2015 Volkswagen emissions scandal wherein vehicle motors were programmed to reduce their emission  levels only when the vehicles were undergoing testing for compliance with environmental efficiency regulations~\cite{burki2015diesel}.}  
 Nonetheless,  LAT has a number of properties that make it potentially attractive for election security. 
First, the legally mandated use of LAT across the United States means that repurposing this procedure as a modern security tool would require little investment on already-overburdened election administrators.  Second, the fact that LAT is performed prior to elections means that it is well-situated to detect cyberattacks before they affect the public.  Third, developing sophisticated cyberattacks that cannot be detected by LAT requires technical capabilities that are out of reach for many would-be adversaries. In particular,  we show in this paper that  LAT has
the potential to be an effective defense against less sophisticated (yet still practically significant) classes of attacks that are based on deliberate misconfiguration of voting machines.

The set of misconfigurations which would be detected by LAT  hinges on the design of the test deck, 
i.e., the decision of which voting targets to fill out on each ballot. Until
now, test decks throughout the United States have been designed following
simple heuristics that are based on human intuition~\cite{walker2022lat}. For example, Figure~\ref{fig:testdeck} shows an example of a test deck 
constructed by a common heuristic that gives each candidate within each contest a different
number of votes. However, Figure~\ref{fig:latesting_simpleexample} shows the output of voting machines using the test
deck from Figure~\ref{fig:testdeck} under three examples of misconfigured mappings between voting targets
and candidates, including one misconfiguration which  this test deck would not
detect. This demonstrates that this simple heuristic is not guaranteed to secure
the voting machine from misconfiguration attacks. 
 If each candidate on the ballot received a different number of votes, then all misconfigured mappings between voting targets and candidates would be detected, but this strategy for designing test decks is not used in practice because it requires impractically many ballots for real-world elections (see Appendix~\ref{appx:upperbound}). The difficulty of marking and scanning test decks scales with the number of ballots included, so short test decks are imperative for practical implementation.

 \begin{figure}[bp]
\centering 
\subfloat[]{%
\includegraphics[width=0.7\linewidth]{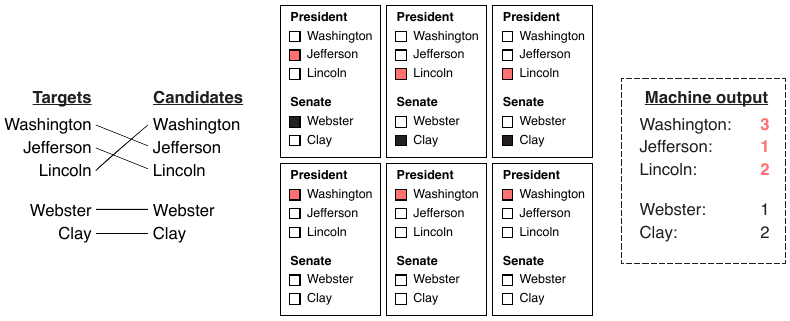}
\label{fig:latesting_simpleexample:swap_jefferson_lincoln}
}

\subfloat[]{%
\centering
\includegraphics[width=0.7\linewidth]{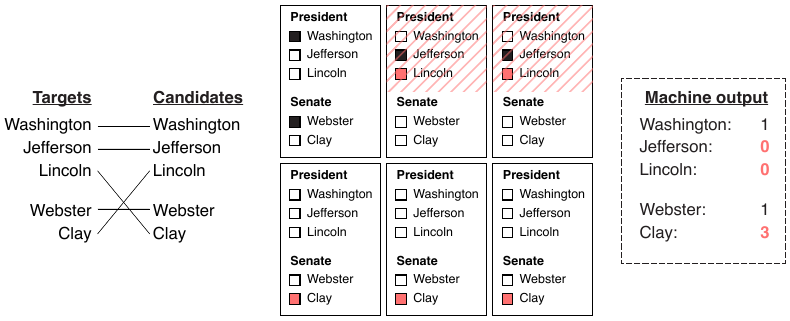}
\label{fig:latesting_simpleexample:swap_lincoln_clay}
}


\subfloat[\label{fig:latesting_simpleexample:swap_jefferson_clay}]{%
\includegraphics[width=0.7\linewidth]{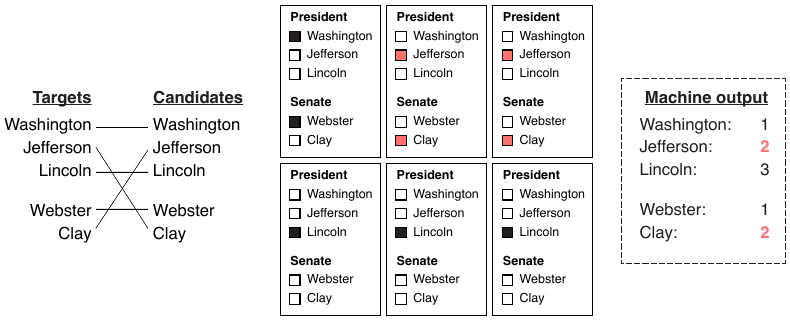}
}

\caption{ Each example shows a misconfiguration of the mapping between voting targets and candidates (left), the misconfigured voting machine's interpretation of the test deck from Figure~\ref{fig:testdeck} (center), and the vote tally that is output by the misconfigured voting machine (right).   The color red indicates the aspects of the interpretation of the test deck and the machine output that are impacted by the misconfiguration of the voting machine. Diagonal lines through a contest indicate that the filled-out ballot is interpreted as containing an overvote in that contest, in which case the voting machine interprets the filled-out ballot as if no candidates were selected in that contest. \textnormal{\textbf{(a)}}
The misconfiguration is detected because the output of the voting machine  includes incorrect vote totals for Washington, Jefferson, and Lincoln. \textnormal{\textbf{(b)}}  The misconfiguration is detected because the output of the voting machine  includes incorrect vote totals for Jefferson, Lincoln, and Clay. \textnormal{\textbf{(c)}}  The misconfiguration is not detected because the output of the voting machine includes correct vote totals for all candidates (see Figure~\ref{fig:testdeck}).\looseness=-1
} \label{fig:latesting_simpleexample}
\end{figure}

\subsection{Contributions}

We bring a scientific perspective to LAT by introducing the first formal approach to designing test decks for LAT with rigorous security guarantees. 
Specifically, our approach employs mathematical optimization---rather than heuristics---to find test decks that are guaranteed to detect any misconfiguration that swaps votes between candidates. 
Moreover, out of all the test decks that are guaranteed to detect these swaps, our approach yields a test deck with the minimum number of ballots, thereby minimizing implementation difficulties for election officials.

In greater detail, our approach to designing test decks consists of constructing and solving a robust optimization problem \cite{ben2009robust,bertsimas2011theory}. The input to the robust optimization problem is a ``ballot style,'' which for our purposes means the set of contests that appear on a given ballot, the set of candidates who are running in each of those contests, the maximum number of candidates that a voter is allowed to select in each contest, and the correct mapping of voting targets to candidates.\footnote{Because the contests available on a ballot depend on granular political subdivisions like county, municipality, legislative district, and school district, there are often many thousands of ballot styles used across the different jurisdictions of a state in any given election. Voting machines across the state are configured separately for each of the different ballot styles, and a different test deck must be used to evaluate each such configuration.} The output of the robust  optimization problem is the design of a minimum-length test deck that is guaranteed to detect whether a voting machine has an incorrect bijective mapping from candidates to voting targets for the ballot style in question. The robust optimization problem is stated formally in \S\ref{sec:math:swap}. \looseness=-1

One of the key difficulties in solving our  robust optimization problem  lies in the large number of incorrect bijective mappings from candidates to voting targets. In a ballot style with $N$ candidates and voting targets, there are $ N! - 1$ possible ways that voting targets can be swapped, one swap for each bijection over candidates {less the single correct bijection}. In United States elections, the number of candidates across the contests of a ballot style is often greater than one hundred. Consequently, formulating our robust optimization problem often requires more than $100! -1 \approx 10^{157}$ constraints, a number far greater  than the estimated number of particles in the observable universe~\cite{whittaker1945eddington}. An optimization problem that explicitly encodes all possible swaps thus cannot be represented nor solved on any extant computer for many real-world elections. It is currently unknown whether  the robust optimization problem  is NP-hard, and it is unknown whether there exists a  mixed-integer linear programming reformulation of the robust optimization problem of size that is polynomial in the number of candidates $N$.\looseness=-1

We contend with the above computational challenge by developing an exact algorithm for the robust optimization problem inspired by the cutting plane method (see \S\ref{appx:cutting}). The cutting plane method is a classical technique for solving optimization problems with many constraints by solving a sequence of optimization problems with small numbers of constraints. In our setting, each iteration of the cutting plane method solves a relaxation of the robust optimization problem that contains a small subset of the $N! - 1$ swaps. If the optimal test deck of the relaxed optimization problem detects all of the $N! -1$ swaps of the original problem, then the algorithm terminates. Otherwise, the algorithm finds a swap that is undetected by the optimal test deck of the relaxed problem, adds the undetected swap into the relaxed problem, and then solves the relaxed problem again. This process repeats until a feasible solution for the original robust optimization problem is obtained.

To make the cutting plane method terminate in practical computation times in real-world elections, we make a number of novel  algorithmic developments. First, we reformulate the relaxed optimization problem as well as the problem of finding an undetected swap as mixed-integer linear optimization problems (see \S\ref{appx:mip}). These reformulations enable the cutting plane method to be easily implemented using widely available open-source and commercial optimization software such as Gurobi and Mosek.  Second, we offer a variety of theoretically-justified improvements to our mixed-integer linear optimization formulations (see \S\ref{appx:improvements}) that aim to decrease the number of iterations and decrease the per-iteration computation time of the cutting plane method. 
These improvements include dynamically identifying and removing unnecessary decision variables from the mixed-integer linear optimization problems (\S\ref{appx:improvements:reducingdecisions}), adding  constraints that impose the structure of optimal test decks into the mixed-integer linear optimization problems (\S\ref{appx:improvements:symmetry} and \S\ref{appx:improvements:n_choose_k}), developing a combinatorial framework for identifying which swap to add to the relaxed optimization problem in each iteration (\S\ref{appx:improvements:swap}), and combining all contests that are not competitive (\S\ref{appx:improvements:noncompetitive}). 
In Appendix~\ref{appx:additional_experiments},  we demonstrate  via experiments on synthetic elections that each of our improvements yields significant decreases in the computation time and number of iterations of the cutting plane method.\looseness=-1

We conclude by  showcasing the value of our robust optimization approach in application to real world elections.  
In partnership with the Michigan Bureau of Elections, we applied our approach to each of the state's 6928 ballot styles from the  November 2022 general election.  Our results for this election (see  \S\ref{sec:experiments}) reveal that our approach only required a 1.2\% average increase in the number of test ballots  compared to current practice across the state's 6928 ballot styles. Hence, our approach can be deployed with minimal financial cost or operational overhead while providing significant  security benefits to election jurisdictions. Moreover, our cutting plane method for solving the robust optimization problems enabled our approach to obtain optimal test decks for all 6928 ballot styles in less than seven hours. These findings demonstrate that our cutting plane method can find optimal test decks for all of the ballot styles across a state in computation times that are practical from the perspective of election officials.  Our approach described in this paper has  been piloted by the Michigan Bureau of Elections in real-world elections during the summer of 2023, and  
we hope that  our approach will be adopted by more states and countries in upcoming elections as a low-cost tool to improving the security
and increasing public confidence in election outcomes. 

An open source portion of the code from this paper is available at \url{https://github.com/ballotiq/deck-checker}.

\section{Vulnerabilities of Existing Heuristics for Designing Test Decks} \label{sec:vulnerabilities}
Our proposed approach to designing test decks with rigorous security guarantees is presented in \S\ref{sec:math}. To motivate our approach, we begin in this section by describing three examples of misconfiguration attacks against United States voting machines. We show in each of the three examples how the attack could be strategically deployed by an adversary to undermine public trust or change the outcome of an election. Finally, we show how the examples of attacks could evade detection by LAT when test decks are designed by commonly used heuristics. 

\paragraph{Swaps of Individual Candidates.} Suppose that the goal of an adversary is to decrease the number of votes received by a specific candidate in a high-stakes contest near the top of the ballot (such as a presidential contest). In this case, an example of a misconfiguration that would be appealing to the adversary is one that swaps the voting target of the specific candidate with the target of a candidate from a contest that is lower on the ballot (such as the contest to elect a sanitation commissioner). Because fewer people vote in downballot contests~\cite{kimball2008voting}, this misconfiguration could result in the adversary's disfavored presidential candidate receiving fewer votes than they should. Moreover, if the test deck for LAT is designed using a common heuristic in which a single ballot contains votes for the first candidate in each contest, two ballots contain votes for the second candidate in each contest, and so on, then LAT would not detect any misconfiguration that swaps the targets for two candidates at corresponding indices in their respective contests. An example of a test deck constructed by this common heuristic is shown in Figure~\ref{fig:testdeck}, and the misconfiguration depicted in Figure~\ref{fig:latesting_simpleexample:swap_jefferson_clay} is an example of such a swap that goes undetected, since it swaps the second candidate in the presidential contest with the second candidate in the senatorial contest. 

\paragraph{Swaps of Entire Contests.} In many states, elections put certain yes-or-no questions---commonly called initiatives, proposals, or referendums---directly to voters. The effect of these contests range from modifying a state's constitution on matters such as abortion rights~\cite{abortion} and environmental policy~\cite{nyt,scienceballot2016} to recalling sitting politicians from their office~\cite{wsj}. If an adversary wished to swap the outcome of two such contests, they could  misconfigure the voting machine to swap the voting targets for `yes' and for `no' between the two contests. Moreover, if LAT is conducted with a test deck that includes the same number of votes for `yes' and the same number of votes for `no' in each of the two contests---which is the case under every common heuristic for test deck preparation used today~\cite{walker2022lat}---then this misconfiguration would not be detected by LAT (see \S\ref{appx:improvements:n_choose_k}). This attack could thus be used to ensure a favored proposal passes or a disfavored proposal fails, and would allow an adversary to directly influence the laws or constitution of a jurisdiction.

\paragraph{Deliberately Flawed Test Decks.} It is common for jurisdictions to contract outside vendors to configure their voting machines as well as design the test decks used to conduct LAT. If this vendor is untrustworthy, they could misconfigure the machine according to their own preference, then deliberately construct a test deck which would fail to detect the modification. Indeed, we show in Appendix~\ref{appx:malicious-decks}  that a vendor has significant freedom in the misconfiguration they choose, even when the test deck they produce is constrained by some of the most stringent  legal requirements in use by states today.

\vspace{.5cm}
\noindent In the following section, we introduce an approach to designing test decks that enables LAT to become a rigorous pre-election defense against an important class of misconfiguration attacks. This class includes, among many others, the three examples of attacks described above.

\section{Robust Logic and Accuracy Testing}\label{sec:math}

In this section, we  introduce  Robust Logic and Accuracy Testing (RLAT), an optimization-based framework for designing test decks in LAT with rigorous security guarantees. 
This section has the following organization. 
 \S\ref{sec:math:notation} develops the terminology and mathematical notation that will be used throughout the paper. 
 \S\ref{sec:math:optimization} presents a general formulation of RLAT and discusses its value from the perspective of various stakeholders in United States elections.
 \S\ref{sec:math:swap} uses  the RLAT framework to derive our robust optimization problem~\eqref{prob:robust} for finding  a minimum-length test deck that will detect whether a voting machine is misconfigured to swap votes between candidates.
 \S\ref{sec:math:discussion} establishes the fundamental structural properties  of test decks that are feasible for the robust optimization problem~\eqref{prob:robust}.

\subsection{Preliminaries}  \label{sec:math:notation}
 A ballot style is composed of a set of contests $\mathcal{C} \triangleq \{1,\ldots,C\}$ and a set of candidates $\mathcal{N} \triangleq  \{1,\ldots,N\}$. For each contest $c \in \mathcal{C}$, we let $\mathcal{N}_c \subseteq \mathcal{N}$ denote the subset of candidates that appear in contest $c$, and we let $v_c$ denote the maximum number of candidates in that contest that may be legally selected by a voter. For example, for a contest that corresponds to the senatorial election, the set $\mathcal{N}_c$ would contain the indices of the  candidates that are running for Senator, and the equality $v_c = 1$ would denote that each voter is permitted to select at most one candidate in the contest. 
In a contest for a local school board with five vacancies, we would alternatively have the equality  $v_c=5$.  We assume that each candidate $i \in \mathcal{N}$ appears in exactly one contest.  We say that a contest $c$ is noncompetitive if the maximum number of votes $v_c$ is equal to the number of candidates $| \mathcal{N}_c|$ in the contest.\footnote{Noncompetitive contests often arise when an incumbent to some local office runs unopposed for re-election. This is especially common in states which elect judges, since there is a strong normative prohibition against challenging a sitting judge's re-election bid~\cite{olson2022incumbency}. 
}  In real-world elections such as those from Michigan (see Figure~\ref{fig:michigan_data}),   the number of contests in each ballot style typically satisfies $15 \le C \le 40$, and the number of candidates in each ballot style typically satisfies $60 \le N \le 120$. 

 \begin{figure}[t]
    \centering
\includegraphics[width=0.9\linewidth]{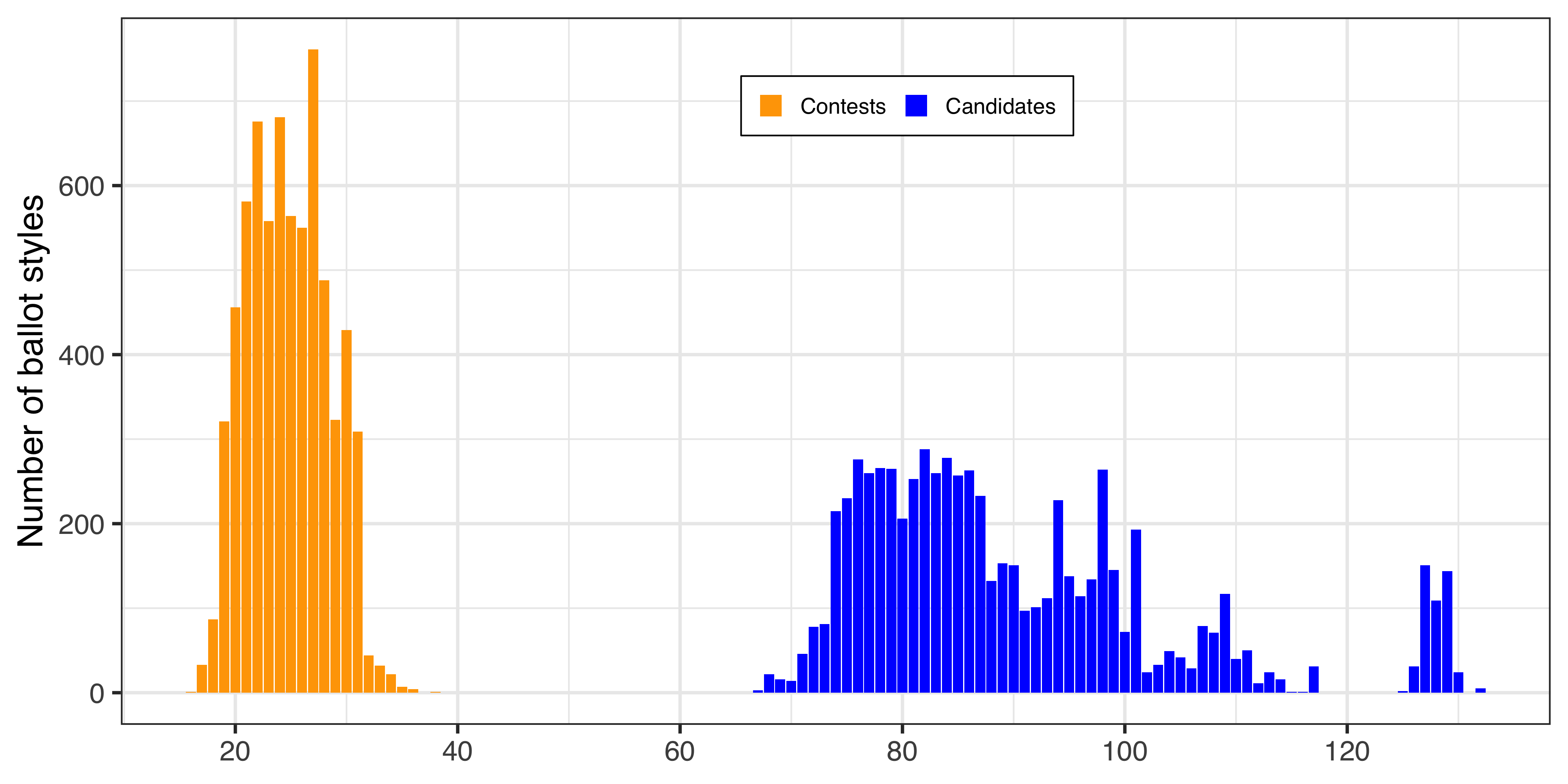}  

\caption{Histogram of the total number of contests (orange) and total number of candidates (blue) that appeared across  the 6928 ballot styles in Michigan's November 2022 general election.}\label{fig:michigan_data}

\end{figure}

A ballot refers to a physical document  that contains a box or oval beside each candidate, termed targets, that are used by voters to record their choices. With a slight abuse of notation, we denote the targets on a ballot by  $\mathcal{N}  \triangleq \{1,\ldots,N\}$, where each target $i \in \mathcal{N}$ refers to the box or oval that is beside candidate $i \in \mathcal{N}$. A {filled-out ballot}  is represented by a subset of targets $\beta \subseteq \mathcal{N}$, with the interpretation that the  filled-out ballot satisfies $i \in \beta$ if and only if the filled-out ballot selected  target $i$. It follows  that  the number of targets beside candidates in contest $c \in \mathcal{C}$ that are selected by a filled-out ballot $\beta \subseteq \mathcal{N}$ is equal to $|\mathcal{N}_c \cap \beta|$. A deck refers to any finite-length sequence of filled-out ballots $(\beta_1,\ldots,\beta_B)$.\looseness=-1

 When a voting machine operates correctly, the machine will receive a deck of filled-out ballots as its input, and the machine will output the total number of targets that are selected for each candidate in the filled-out ballots that do not have an overvote in that candidate's contest. For any input deck $(\beta_1,\ldots,\beta_B)$, we denote the output of a voting machine that operates correctly  by the vector-valued function $$T^*(\beta_1,\ldots,\beta_B) \equiv (T^*_1(\beta_1,\ldots,\beta_B), \ldots, T^*_N(\beta_1,\ldots,\beta_B)),$$ 
 with the output 
 for each candidate $i \in \mathcal{N}_c$ in each contest $c \in \mathcal{C}$ defined as 
 \begin{align*}T_i^*(\beta_1,\ldots,\beta_B) & \triangleq \sum_{b=1}^B \mathbb{I} \left \{ i \in \beta_b \textnormal{ and }\left| \mathcal{N}_c \cap \beta_b \right| \le v_c \right \}.
 \end{align*}
 In the above definition, and throughout the rest of this paper, we let   $\mathbb{I} \left \{ \cdot \right \}$ represent the indicator function that is equal to one if $\cdot$ is true and is equal to zero if $\cdot$ is false. The inclusion $i \in \beta_b$ holds if and only if filled-out ballot $\beta_b$ has selected the target that is beside candidate $i$, and   the inequality 
$\left| \mathcal{N}_c \cap \beta_b \right| \le v_c$ holds if and only if filled-out ballot $\beta_b$ has selected at most $v_c$ of the targets that are beside the candidates in contest $c$. In other words, the inequality $\left| \mathcal{N}_c \cap \beta_b \right| \le v_c$ holds if and only if filled-out ballot $\beta_b$ is interpreted  by the voting machine that operates correctly as not containing an overvote in contest $c$.  
For notational convenience, we denote the set of ballots that do not overvote {any} contest by
$\mathscr{B} \triangleq \left \{ \beta \subseteq \mathcal{N}: \left| \mathcal{N}_c \cap \beta \right| \le v_c \; \forall c \in \mathcal{C} \right \}.$

\begin{remark} \label{remark:overvotes}
If the filled-out ballots in an input deck do not contain  overvotes,  then the output of the voting machine that operates correctly  will equal the number of filled-out ballots that select the target   associated with  each candidate. In other words, if $\beta_1,\ldots,\beta_B \in \mathscr{B}$, then the equality $T^*_i(\beta_1,\ldots,\beta_B) = | \left \{ b \in \{1,\ldots,B \}: i \in \beta_b \right \}|$ holds for each candidate $i \in \mathcal{N}$. 
\end{remark}

To represent the output of a specific voting machine that may or may not be operating correctly, we use the vector-valued function $$\widehat{T}(\beta_1,\ldots,\beta_B)  \equiv (\widehat{T}_1(\beta_1,\ldots,\beta_B), \ldots, \widehat{T}_N(\beta_1,\ldots,\beta_B)).$$ This function represents the output of the voting machine for any input deck of filled-out ballots $(\beta_1,\ldots,\beta_B)$. We say that the voting machine represented by the vector-valued function $\widehat{T}(\cdot)$ is not operating correctly if there exists a deck $(\beta_1,\ldots,\beta_B)$ and a candidate $i \in \mathcal{N}$ such that $\widehat{T}_i(\beta_1,\ldots,\beta_B) \neq  T^*_i(\beta_1,\ldots,\beta_B)$. If the voting machine is not operating correctly, and if the  voting machine is used to count votes in an actual election, then it could produce results inconsistent with the actual ballots cast and change the outcome of the election.

\subsection{Formulation of RLAT} \label{sec:math:optimization}
We now introduce the mathematical description of Robust Logic and Accuracy Testing (RLAT), an optimization-based framework for designing test decks in LAT with rigorous security guarantees.  Specifically, given an uncertainty set $\mathcal{U}$ of possible ways that a voting machine might be operating incorrectly, RLAT designs the test deck by solving the following optimization problem:
\begin{align}\tag{RO} \label{prob:robust_general}
\begin{aligned}
\underset{B \in \N, \; \beta_1,\ldots,\beta_B \in \mathscr{B}}{\textnormal{minimize}}\quad & B\\
\textnormal{subject to} \quad &  \widehat{T}(\beta_1,\ldots,\beta_B) \neq T^*(\beta_1,\ldots,\beta_B) \quad \forall \widehat{T}(\cdot) \in \mathcal{U}. 
\end{aligned}
\end{align}
The optimization problem~\eqref{prob:robust_general}  yields a minimum-length test deck that is {guaranteed} to detect whether a voting machine is operating incorrectly in any of the ways specified by the uncertainty set.\looseness=-1

In greater detail, the decision variables of the optimization problem~\eqref{prob:robust_general} consist of the length of the test deck, $B \in \N$, as well as the test deck of filled-out ballots without any overvotes, $\beta_1,\ldots,\beta_B \in \mathscr{B}$.
The constraints of the optimization problem~\eqref{prob:robust_general} ensure that if the test deck is cast into a voting machine,  and if the voting machine is operating incorrectly in any of the ways specified by the uncertainty set, then the output of the voting machine will be different from the output of a voting machine that is operating correctly. In other words, if $(B,\beta_1,\ldots,\beta_B)$ is an optimal solution for the optimization problem~\eqref{prob:robust_general}, then we have a guarantee that the output of a voting machine  $\widehat{T}(\beta_1,\ldots,\beta_B)$ will be different from the output of a voting machine that operates  correctly  $T^*(\beta_1,\ldots,\beta_B)$ whenever the voting machine $\widehat{T}(\cdot)$ is operating incorrectly in any of the ways specified by the uncertainty set $\mathcal{U}$. We elaborate on the construction of the uncertainty set in \S\ref{sec:math:swap}. The objective of the optimization problem~\eqref{prob:robust_general} is find a test deck that satisfies the constraints that consists of the fewest number of ballots. 

The optimization problem~\eqref{prob:robust_general} for designing test decks in LAT can be  viewed as attractive  from the perspective of the relevant  stakeholders including election administrators, policy makers, voters, and the computer security community. We elaborate below on the attractiveness and the design of the optimization problem~\eqref{prob:robust_general} through the perspectives of these various stakeholders: 

\paragraph{Election Administrators.} The administration of U.S.\ elections is a complicated endeavor, conducted in parallel by thousands of local officials across the country. Across so diverse and decentralized a system, even marginal increases to the difficulty or complexity of election procedures carry a very high administrative cost. This cautions against testing procedures that are substantially more difficult or resource intensive than those already in use.

 From an implementation standpoint, RLAT aims to minimize the burden the solution confers to election administrators. By finding a test deck that minimizes the number of ballots, the optimization problem~\eqref{prob:robust_general} yields a suitable test deck that minimizes the time it takes to fill out and insert the decks into a machine. Moreover, a computer algorithm for solving the optimization problem~\eqref{prob:robust_general} can be integrated seamlessly at many stages, like in the vendor provided Election Management System (EMS) or by third-party ballot-printing companies that are often contracted to prepare test decks under current practice. This means we can implement RLAT through the operational processes that election administrators already have in place, with minimal change to the official's direct experience. Finally, using a computer algorithm can relieve election workers from the arduous task of manually designing test decks.

\paragraph{Policy Makers and Voters.} RLAT is attractive for policy makers and voters because it enables LAT to provide strong and interpretable guarantees regarding the security of an election and the legitimacy of its outcome. Indeed, the optimization problem~\eqref{prob:robust_general} provides policy makers with the flexibility to specify the uncertainty set of possible ways that a voting machine may operate incorrectly. Policy makers can make this decision based on their evaluation of the cost-security trade-offs for their own state or jurisdiction, and based on factors like the known traits of the voting machines that are used in their elections (see Remark~\ref{remark:uncertainty_set} in \S\ref{sec:math:swap}). The optimization problem~\eqref{prob:robust_general} can also easily  integrate minimum legal requirements on test decks that are specified by policy makers, as we elaborate in Appendix~\ref{appx:statelaws}. Moreover, given a defined uncertainty set $\mathcal{U}$ and a test deck $(\beta_1, \ldots, \beta_B)$, any voter can verify that the test  deck satisfies the constraints of the optimization problem~\eqref{prob:robust_general}. This can enhance voter confidence that testing is being conducted fairly, and provides concrete and voter-verifiable assurances that the outcome of the election has not been accidentally or maliciously altered in ways similar to those from the examples given in \S\ref{sec:vulnerabilities}. 

\paragraph{Computer Security Community.} In the election security community, and in the computer security  community more broadly, risk is defined and minimized by considering a hypothetical {adversary}. This adversary aims to interfere with a system, and is constrained by a {threat model} which specifies the scope of their capabilities. This allows for the development of security interventions which have a definite effect with respect to certain assumptions about the options available to an adversary. The  optimization problem~\eqref{prob:robust_general} thus works constructively with the computer security mindset. Indeed, the uncertainty set $\mathcal{U}$ is in essence a formalization of the threat model---it describes the potential modifications to the machine, which the adversary is able to choose between. By changing the uncertainty set, this formulation can flexibly substitute threat models as needed.

\subsection{RLAT with the  Swap Uncertainty Set} \label{sec:math:swap}
The key to achieving strong and interpretable security  guarantees through RLAT  is selecting an appropriate uncertainty set $\mathcal{U}$ in the optimization problem~\eqref{prob:robust_general}. If the uncertainty set accounts for only a small number of possible misconfigurations or errors, then the security guarantees afforded by RLAT will be limited. On the other hand, if the uncertainty set is overly expansive, then \eqref{prob:robust_general} might yield a test deck comprised of an impractically large number of ballots.  Naturally, the task of choosing an uncertainty set that strikes an appropriate balance between the expressiveness  and  conservatism is a central challenge when constructing robust optimization problems such as \eqref{prob:robust_general}
.
\looseness=-1

We focus throughout this paper on solving the optimization problem~\eqref{prob:robust_general} with a  specific construction of the uncertainty set that we henceforth refer to as the {swap uncertainty set}. The swap uncertainty set consists of all of the voting machines that have an incorrect bijective mapping from candidates to voting targets. Hence, the optimization problem~\eqref{prob:robust_general} with the swap uncertainty set will  yield the shortest test deck  that is guaranteed to detect whether a voting machine is swapping votes across candidates. The optimization problem~\eqref{prob:robust_general} with the swap uncertainty set  is stated formally at the end of the present \S\ref{sec:math:swap} as the optimization problem~\eqref{prob:robust}.

The swap uncertainty set is attractive from a security standpoint because it encompasses a general class of  misconfigurations that would be difficult to detect for a well-implemented voting machine. 
The premise of the swap uncertainty set when scanning hand-marked ballots is that a voting machine is configured with a $(x,y)$ coordinate for each candidate, which specifies the location of that candidate's voting target on the physical ballot. A well-implemented voting machine's software should perform two basic sanity checks of this configuration. First, the voting machine should not allow any candidate to be associated with multiple targets. Second, the voting machine should not allow different candidates' targets to overlap. This ensures a bijective mapping between candidates and targets. The premise of the swap uncertainty set when scanning ballots produced with a ballot-marking device (BMD) is that the BMD and optical scanner may be configured with inconsistent data representations of the candidates~\cite{dekalb2022}, such that votes encoded by the BMD as corresponding to one candidate may be read by the scanner as corresponding to another. Well-implemented software should enforce that each candidate has precisely one data representation, so this mismatch must also be a bijection.
The swap uncertainty set is thus a natural choice for the RLAT problem under either of these models, since it describes each possible mapping from candidates to targets.

The formal definition of the swap uncertainty set requires the following additional notation. Let $\Sigma$  denote the set of all {non-identity bijections} of the form $\sigma: \mathcal{N} \to \mathcal{N}$, where we say that the function $\sigma(\cdot)$  is a  non-identity bijection if and only if the function satisfies the following two criteria:\looseness=-1
\begin{enumerate}
    \item For every target $j \in \mathcal{N}$, there exists one candidate $i \in \mathcal{N}$ that satisfies $\sigma(i) = j$. 
    \item There exists $i \in \mathcal{N}$ that satisfies $\sigma(i) \neq i$. 
\end{enumerate}
Each non-identity bijection can be understood as an incorrect mapping from candidates to targets.\footnote{We note that the voting machine that operates correctly can be represented by the identity function $* : \mathcal{N} \to \mathcal{N}$, defined as the function that satisfies the equality $*(i) = i$ for all $i \in \mathcal{N}$.}  The output of a voting machine whose mapping from  candidates to targets is the bijection $\sigma: \mathcal{N} \to \mathcal{N}$ is given for each candidate $i \in \mathcal{N}_c$ in each contest $c \in \mathcal{C}$ by 
\begin{align*}
    T^\sigma_i(\beta_1,\ldots,\beta_B) \triangleq \sum_{b=1}^B  \mathbb{I}\left \{   \sigma(i) \in \beta_b  \textnormal{ and } \left| \left \{ \sigma(j) \in \beta_b: j \in \mathcal{N}_c \right \}  \right| \leq v_c \right \}. 
\end{align*}
To make sense of the above definition, we remark that the inclusion $\sigma(i) \in \beta_b$ holds if and only if the $b$-th filled-out ballot in the test deck is interpreted by a voting machine with mapping $\sigma$ to  contain a vote for candidate $i$. Similarly, we observe that the inequality $\left| \left \{ \sigma(j) \in \beta_b: j \in \mathcal{N}_c \right \}  \right| \le v_c$ holds if and only if the $b$-th filled-out ballot in the test deck is interpreted by a voting machine with mapping $\sigma$ as containing votes for at most $v_c$ candidates in contest $c$.

In view of the above notation, we define the swap uncertainty set as the set of the voting machines that correspond to each of the non-identity bijections: 
\begin{align*}
\mathcal{U}  \triangleq \left \{ T^\sigma(\cdot) \equiv (T^\sigma_1(\cdot),\ldots,T^\sigma_N(\cdot)) : \sigma \in \Sigma \right \}. 
\end{align*}
 Hence, we conclude that a test deck comprised of filled-out ballots $\beta_1,\ldots,\beta_B \in \mathscr{B}$ will satisfy the constraints of the optimization problem~\eqref{prob:robust_general} with the swap uncertainty set if and only if the test deck is guaranteed to detect whether a voting machine has been misconfigured to swap votes across candidates.\looseness=-1

Equipped with the swap uncertainty set,  we are  ready to formally state the key optimization problem of this paper, that is, the optimization problem of finding a minimum-length test deck that is guaranteed to detect whether a voting machine has been misconfigured to swap votes across candidates. This optimization problem~\eqref{prob:robust_general} with the swap uncertainty set is stated below as \eqref{prob:robust}:\looseness=-1 
  \begin{align} \label{prob:robust} \tag{RO-$\Sigma$}
\begin{aligned}
\underset{B \in \N, \; \beta_1,\ldots,\beta_B \in \mathscr{B}}{\textnormal{minimize}}\quad & B\\
\textnormal{subject to} \quad &  T^{\sigma}(\beta_1,\ldots,\beta_B) \neq T^*(\beta_1,\ldots,\beta_B) \quad \forall \sigma \in \Sigma.
\end{aligned}
\end{align}

Having established that the test decks obtained by \eqref{prob:robust} offer  attractive and rigorous security guarantees, we show in the rest of this paper  that \eqref{prob:robust} leads to test decks that can be   practically deployed  in real world elections. In \S\ref{appx:cutting_sec} and \S\ref{appx:improvements}, we develop an exact algorithm for solving the optimization problem~\eqref{prob:robust}. In  \S\ref{sec:experiments}, we show that our exact algorithm scales to Michigan's November 2022 elections and that the test decks obtained by  \eqref{prob:robust} in those elections are not much longer than the test decks produced according to the heuristics Michigan currently uses. Hence, RLAT with the swap uncertainty set strikes a balance  between producing test decks that account for a large number of possible voting machine misconfigurations and producing test decks with a practically small number of ballots.

\begin{remark} \label{remark:uncertainty_set}
Although this paper focuses on solving \eqref{prob:robust}, 
we note that RLAT offers election officials the flexibility to use uncertainty sets that include a more expansive or narrow model of the ways in which a voting machine could be wrong.
 For instance, in  states that currently use weaker heuristics than Michigan's to prepare their test decks, election officials may be accustomed to using very short test decks and thus might balk at the lengths of test decks produced by \eqref{prob:robust}. To accommodate election officials in such states, one can solve \eqref{prob:robust_general} with an uncertainty set that is a subset of the swap uncertainty set to obtain shorter test decks with weaker, albeit still rigorous defined, security guarantees (e.g. by opting to ignore the possibility of swaps between candidates in noncompetitive contests). Conversely, the swap uncertainty set can be made more expansive (e.g. by considering cases where the mapping of targets to candidates need not be bijective for voting machines whose software implementation allows the same target to be associated with multiple candidates, or vice versa). That being said, we emphasize that the algorithms presented in this paper are designed for solving  \eqref{prob:robust}, i.e., the specific case of  \eqref{prob:robust_general} in which the uncertainty set is the swap uncertainty set.    
\end{remark}

 \subsection{Discussion} \label{sec:math:discussion}

We conclude \S\ref{sec:math} by characterizing the key structural properties of test decks that satisfy the constraints of the optimization problem~\eqref{prob:robust}.  Specifically, the main contribution of \S\ref{sec:math:discussion} is a technical result, denoted below by Theorem~\ref{thm:diff_votes},   that characterizes  the  situations in which the output of a voting machine that operates correctly will be different from the output of a voting machine whose mapping from candidates  
  to targets is a non-identity bijection. The characterization established by the following theorem will be used extensively for designing algorithms in the rest of the sections. 
\begin{theorem}\label{thm:diff_votes}
Let $\beta_1, \ldots, \beta_B \in \mathscr{B}$ and $\sigma \in \Sigma$. Then  $T^\sigma(\beta_1,\ldots,\beta_B) \neq T^*(\beta_1,\ldots,\beta_B)$ if and only if at least one of the following two conditions hold:
\begin{itemize}
    \item  There exists a candidate $i \in \mathcal{N}$ that satisfies $$| \{ b \in \{1,\ldots,B\}: i \in \beta_b \}|  \neq  |\{ b \in \{1,\ldots,B\}: \sigma(i) \in \beta_b \}|.$$
    \item There exist a contest $c \in \mathcal{C}$ and a filled-out ballot $\beta_b$ for some $b \in \{1,\ldots,B\}$ that satisfy $$\left| \left \{ \sigma(j) \in \beta_b: j \in \mathcal{N}_c \right \}  \right| > v_c.$$
\end{itemize}
\end{theorem}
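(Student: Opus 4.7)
The strategy is to reduce everything to per-candidate ballot counts. I would introduce the shorthand $n_i \triangleq |\{b \in \{1,\ldots,B\} : i \in \beta_b\}|$ for each target $i \in \mathcal{N}$. Because $\beta_1,\ldots,\beta_B \in \mathscr{B}$, Remark~\ref{remark:overvotes} immediately gives $T^*_i(\beta_1,\ldots,\beta_B) = n_i$. From the definition of $T^\sigma_i$, one obtains the pointwise inequality $T^\sigma_i(\beta_1,\ldots,\beta_B) \le n_{\sigma(i)}$, with equality precisely when no ballot containing $\sigma(i)$ induces an overvote under $\sigma$ in the unique contest $c$ with $i \in \mathcal{N}_c$.

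For the ``$\Rightarrow$'' direction I would argue by contrapositive: assume neither condition of the theorem holds. The failure of the second condition implies that no ballot overvotes under $\sigma$ in any contest, so the definition of $T^\sigma_i$ collapses to $T^\sigma_i = n_{\sigma(i)}$ for every candidate $i$. The failure of the first condition gives $n_i = n_{\sigma(i)}$ for every $i$. Chaining these equalities yields $T^\sigma_i = n_i = T^*_i$ for all $i$, contradicting $T^\sigma(\beta_1,\ldots,\beta_B) \neq T^*(\beta_1,\ldots,\beta_B)$.

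For the ``$\Leftarrow$'' direction I would split into two exhaustive cases. First, if condition~1 holds and condition~2 fails, then as above $T^\sigma_i = n_{\sigma(i)}$ for all $i$, and the witness $i^*$ supplied by condition~1 immediately yields $T^\sigma_{i^*} = n_{\sigma(i^*)} \neq n_{i^*} = T^*_{i^*}$. Second, if condition~2 holds (regardless of condition~1), I would sum over all candidates and exploit that $\sigma$ is a bijection of $\mathcal{N}$, so that $\{n_{\sigma(i)}\}_{i \in \mathcal{N}}$ is just a permutation of $\{n_i\}_{i \in \mathcal{N}}$ and hence $\sum_{i \in \mathcal{N}} n_{\sigma(i)} = \sum_{i \in \mathcal{N}} n_i$. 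A swap of summation then yields the identity
$$\sum_{i \in \mathcal{N}} \bigl(n_{\sigma(i)} - T^\sigma_i\bigr) \;=\; \sum_{b=1}^{B}\sum_{c \in \mathcal{C}} |\sigma(\mathcal{N}_c) \cap \beta_b| \cdot \mathbb{I}\{|\sigma(\mathcal{N}_c) \cap \beta_b| > v_c\},$$
whose right-hand side is strictly positive by condition~2 (the offending pair $(b,c)$ contributes $|\sigma(\mathcal{N}_c) \cap \beta_b| > v_c \ge 0$). Hence $\sum_{i} T^\sigma_i < \sum_i n_{\sigma(i)} = \sum_i n_i = \sum_i T^*_i$, forcing $T^\sigma_i < T^*_i$ at some $i$, and therefore $T^\sigma \neq T^*$.

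The main obstacle is the bookkeeping in the second case of ``$\Leftarrow$'': one must carefully swap orders of summation and track the fact that a single overvoted ballot-contest pair $(b,c)$ under $\sigma$ simultaneously suppresses the tally of every candidate $i \in \mathcal{N}_c$ with $\sigma(i) \in \beta_b$. The conceptual key that rescues the argument is that $\sigma$ being a \emph{bijection} on all of $\mathcal{N}$ (not merely within each contest) makes the two global sums $\sum_i n_{\sigma(i)}$ and $\sum_i n_i$ agree automatically, so the entire gap between $\sum_i T^\sigma_i$ and $\sum_i T^*_i$ is attributable to the overvote deficit; without this global bijectivity the summation trick would fail.
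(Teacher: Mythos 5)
Your proof is correct and uses essentially the same argument as the paper: the pointwise identification $T^*_i = n_i$ via Remark~\ref{remark:overvotes}, the collapse $T^\sigma_i = n_{\sigma(i)}$ when no contest is overvoted under $\sigma$, and the global summation over all candidates exploiting bijectivity of $\sigma$ to show that any overvote forces $\sum_i T^\sigma_i \neq \sum_i T^*_i$. The paper packages the ``$\Leftarrow$'' direction as a contrapositive (equal outputs imply no overvotes and equal counts) rather than your direct two-case argument, but the underlying computation is the same.
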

\noindent The proof of this theorem and all other technical proofs in this paper can be found in Appendix~\ref{appx:proofs}.

In words, the above theorem establishes that the output of a voting machine that operates correctly will not equal  the output of a voting machine whose mapping from candidates   to targets is a non-identity bijection $\sigma \in \Sigma$ if and only if the test deck comprised of filled-out ballots $\beta_1,\ldots,\beta_B \in \mathscr{B}$ satisfies at least one of two conditions. The
 first condition is that there exists a candidate $i \in \mathcal{N}$ such that the number of filled-out ballots that selected target $i$,  $| \{ b \in \{1,\ldots,B\}: i \in \beta_b \}|$,  is different from the number of filled-out ballots that selected target $\sigma(i)$, $|\{ b \in \{1,\ldots,B\}: \sigma(i) \in \beta_b \}|$. The second condition is that there exists a contest $c \in \mathcal{C}$ in one of the ballots $b \in \{1,\ldots,B\}$  that is interpreted as overvoted by the voting machine whose mapping from candidates to targets is $\sigma$. 
 As an immediate corollary of Theorem~\ref{thm:diff_votes}, we obtain the following characterization of the test decks that are feasible for the optimization problem~\eqref{prob:robust}. 

 \begin{corollary} \label{cor:diff_votes}
A  tuple  $(B,\beta_1, \ldots, \beta_B)$ is feasible for the  optimization problem~\eqref{prob:robust} if and only if  $B \in \N$, $\beta_1,\ldots,\beta_B \in \mathscr{B}$, and for every $\sigma \in \Sigma$, at least one of the following two conditions hold:
\begin{itemize}
    \item  There exists a candidate $i \in \mathcal{N}$ that satisfies $$| \{ b \in \{1,\ldots,B\}: i \in \beta_b \}|  \neq  |\{ b \in \{1,\ldots,B\}: \sigma(i) \in \beta_b \}|.$$
    \item { There exist a contest $c \in \mathcal{C}$ and a filled-out ballot $\beta_b$ for some $b \in \{1,\ldots,B\}$ that satisfy $$\left| \left \{ \sigma(j) \in \beta_b: j \in \mathcal{N}_c \right \}  \right| > v_c.$$}
\end{itemize}
\end{corollary}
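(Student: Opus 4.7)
The plan is to prove both directions of the biconditional, relying crucially on Remark~\ref{remark:overvotes}: since each $\beta_b \in \mathscr{B}$ never overvotes under the \emph{correct} mapping, we immediately have the simplification
\[
T^*_i(\beta_1,\ldots,\beta_B) = |\{b \in \{1,\ldots,B\} : i \in \beta_b\}|
\]
for every candidate $i \in \mathcal{N}$. This reduces the theorem to comparing two integer-valued quantities depending only on how $\beta_b$ intersects the sets $\{\sigma(j) : j \in \mathcal{N}_c\}$.

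For the ($\Rightarrow$) direction I would argue by contrapositive. Assume neither condition holds. Failure of the second condition means $|\{\sigma(j) \in \beta_b : j \in \mathcal{N}_c\}| \le v_c$ for every $b$ and $c$, so the indicator in the definition of $T^\sigma_i$ is always $1$; hence $T^\sigma_i(\beta_1,\ldots,\beta_B) = |\{b : \sigma(i) \in \beta_b\}|$. Failure of the first condition then gives $|\{b : i \in \beta_b\}| = |\{b : \sigma(i) \in \beta_b\}|$ for every $i$, and combining these with the expression for $T^*_i$ above yields $T^\sigma = T^*$.

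For the ($\Leftarrow$) direction I would split into two subcases. If condition 2 fails but condition 1 holds, the same collapse of $T^\sigma_i$ to $|\{b : \sigma(i) \in \beta_b\}|$ applies, and the witnessing candidate $i$ from condition 1 directly satisfies $T^\sigma_i \neq T^*_i$. The harder subcase is when condition 2 holds, because the overvote is a property of a ballot-contest pair and no single candidate is immediately distinguished. Here I would use a global counting argument: define $A_{b,c} = |\mathcal{N}_c \cap \beta_b|$ and $B_{b,c} = |\{j \in \mathcal{N}_c : \sigma(j) \in \beta_b\}|$. Because $\mathcal{N} = \bigsqcup_c \mathcal{N}_c$ and $\sigma$ is a bijection on $\mathcal{N}$, both $\sum_c A_{b,c}$ and $\sum_c B_{b,c}$ equal $|\beta_b|$ for every $b$. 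Summing the defining formulas over all candidates gives $\sum_{i \in \mathcal{N}} T^*_i = \sum_{b,c} A_{b,c}$ and $\sum_{i \in \mathcal{N}} T^\sigma_i = \sum_{b,c : B_{b,c} \le v_c} B_{b,c}$, and subtracting while invoking the conservation identity yields
\[
\sum_{i \in \mathcal{N}} T^*_i - \sum_{i \in \mathcal{N}} T^\sigma_i = \sum_{b,c \,:\, B_{b,c} > v_c} B_{b,c}.
\]
Under condition 2 this sum contains at least one term with $B_{b,c} > v_c \ge 0$, hence is strictly positive, forcing $T^\sigma_i \neq T^*_i$ for some $i$.

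The main obstacle is the second subcase above, since the overvote condition does not point to a specific coordinate where the vote totals disagree. The conservation identity $\sum_c A_{b,c} = \sum_c B_{b,c} = |\beta_b|$, which simultaneously uses the partition of candidates into contests and the bijectivity of $\sigma$, is precisely what allows the discrepancy on a single ballot-contest pair to propagate into a strict inequality between the aggregated totals and hence into some coordinate.
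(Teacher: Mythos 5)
Your argument is correct and is essentially the paper's: the paper obtains this corollary immediately from Theorem~\ref{thm:diff_votes}, whose proof uses exactly the conservation identity you isolate (summing the vote totals over all candidates and invoking both the partition of $\mathcal{N}$ into contests and the bijectivity of $\sigma$ to show that an overvote under $\sigma$ forces the aggregate totals, and hence some coordinate, to differ). You have simply inlined the proof of that theorem rather than citing it, with the same case structure presented in contrapositive form.
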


 Corollary~\ref{cor:diff_votes} implies  that the optimization problem~\eqref{prob:robust}  always has a feasible solution. Specifically, it follows  from Corollary~\ref{cor:diff_votes} that any test deck that gives a distinct total number of votes to each of the candidates across all of the contests  is a feasible solution for the optimization problem~\eqref{prob:robust} (a formal proof of this can be found in the proof of Proposition~\ref{prop:heuristic1} from Appendix~\ref{appx:upperbound}). We note that while the Corollary~\ref{cor:diff_votes} implies that a feasible solution for optimization problem~\eqref{prob:robust} can be obtained by simply giving every candidate across every contest a distinct number of votes, we show in Appendix~\ref{appx:upperbound} using real-world data that  heuristics based on  assigning a distinct number of votes to each candidate will result in test decks that contain too many ballots to be implementable in practice. Thus motivated, we proceed in \S\ref{appx:cutting_sec} to develop an exact algorithm which solves the optimization problem~\eqref{prob:robust} in order to find test decks that are feasible solutions for \eqref{prob:robust} with the fewest possible number of ballots.

\section{Exact Algorithm} \label{appx:cutting_sec}
In this section, we present our exact algorithm for solving the optimization problem~\eqref{prob:robust}.

\subsection{Overview of Exact Algorithm}\label{appx:cutting}
 To begin our discussion of our exact algorithm, we recall from \S\ref{sec:intro:background} that one of the key challenges in solving the optimization problem~\eqref{prob:robust}  is that the problem contains an enormous number of constraints. Indeed, we observe that the number of constraints in the optimization problem~\eqref{prob:robust} is driven by the cardinality of the set of non-identity bijections $\Sigma$, and it follows readily from \S\ref{sec:math:swap} that the number of non-identity bijections satisfies $| \Sigma| = N!-1$  for a ballot style with $N$ candidates.  Because the number of candidates in real-world ballot styles often satisfies $N \ge 100$, an optimization problem that explicitly encodes all possible non-identity bijections thus cannot be represented nor solved on any extant computer for many real-world elections. It is currently unknown whether  \eqref{prob:robust}  is NP-hard, and it is unknown whether there exists a  mixed-integer linear programming reformulation of \eqref{prob:robust} of size that is polynomial in the number of candidates $N$.

To contend with the computational challenge of solving the optimization problem~\eqref{prob:robust}, we draw inspiration from  an algorithmic strategy known as the {cutting plane method} \cite{kelley1960cutting,gilmore1961linear}. The goal of the cutting plane method is to circumvent the need to solve an optimization problem with a large number of constraints by solving a sequence of optimization problems with small numbers of constraints. The  application of the cutting plane method to the optimization problem~\eqref{prob:robust} takes the form of an iterative algorithm that is 
described below and visualized in Figure~\ref{fig:cutting}.

\begin{figure}[t]
    \centering
\includegraphics[width=0.85\linewidth]{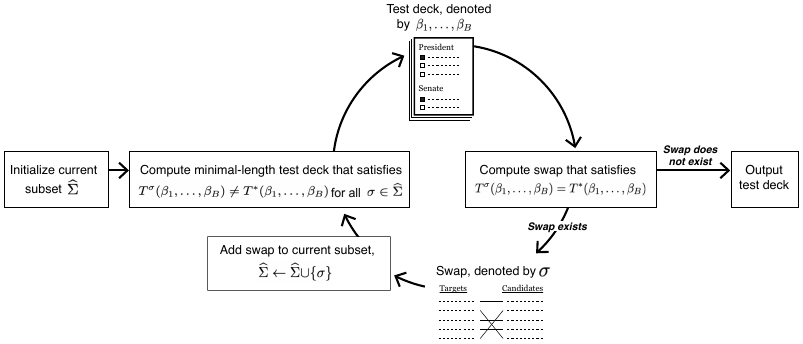}

\caption{Visualization of our exact algorithm from \S\ref{appx:cutting} for solving the optimization problem~\eqref{prob:robust}. } \label{fig:cutting}

\end{figure}

In each iteration of our algorithm, we start with a  subset of non-identity bijections $\widehat{\Sigma} \subseteq \Sigma$, and  we solve the following variant  of the optimization problem~\eqref{prob:robust}: 
\begin{align} \tag{RO-$\widehat{\Sigma}$} \label{prob:robust-subset}
\begin{aligned}
\underset{B \in \N, \; \beta_1,\ldots,\beta_B \in \mathscr{B}}{\textnormal{minimize}}\quad & B\\
\textnormal{subject to} \quad &  T^{\sigma}(\beta_1,\ldots,\beta_B) \neq T^*(\beta_1,\ldots,\beta_B) \quad \forall \sigma \in \widehat{\Sigma}.
\end{aligned}
\end{align}
To make sense of the optimization problem~\eqref{prob:robust-subset}, let us reflect on the relationship between \eqref{prob:robust-subset} and \eqref{prob:robust}. We observe that the optimization problem~\eqref{prob:robust-subset} is nearly identical to the optimization problem~\eqref{prob:robust}, with the only  difference being that the former only has a constraint for each $\sigma \in \widehat{\Sigma}$ instead of a constraint for each $\sigma \in \Sigma$. The optimization problem~\eqref{prob:robust-subset} can thus be viewed as a relaxation of the optimization problem~\eqref{prob:robust}, in the sense that the optimal objective value of the optimization problem~\eqref{prob:robust-subset} is less than or equal to the optimal objective value of the optimization problem~\eqref{prob:robust-subset}, but an optimal solution for the optimization problem~\eqref{prob:robust-subset} might not be a feasible solution for the optimization problem~\eqref{prob:robust}. The potential attractiveness of the optimization problem~\eqref{prob:robust-subset} can be attributed to  practical tractability: if the cardinality of $\widehat{\Sigma}$ is significantly less than the cardinality of $\Sigma$, then it will be possible to solve the  optimization problem~\eqref{prob:robust-subset}   much faster by a computer compared to the optimization problem~\eqref{prob:robust}.

After computing an optimal solution for the optimization problem~\eqref{prob:robust-subset}, the next step of the current iteration of our algorithm is determining whether  the optimal solution for  the optimization problem~\eqref{prob:robust-subset} is a feasible solution for the optimization problem~\eqref{prob:robust}. This step is performed by solving the optimization problem
\begin{equation} \tag{CUT} \label{prob:oracle}
\begin{aligned}
&\underset{\sigma \in \Sigma}{\textnormal{minimize}}&& 0\\
&\textnormal{subject to}  &&T^\sigma(\beta_1,\ldots,\beta_{B}) = T^*(\beta_1,\ldots,\beta_{B}),
\end{aligned}
\end{equation}
where $(B,\beta_1,\ldots,\beta_{B})$ denotes the optimal solution of the optimization problem~\eqref{prob:robust-subset}. The  optimization problem~\eqref{prob:oracle}  has two possible outputs. First, if the  optimization problem~\eqref{prob:oracle}  outputs an optimal solution $\sigma \in \Sigma$, then we  conclude that $(B,\beta_1,\ldots,\beta_{B})$ is not a feasible solution of the optimization problem~\eqref{prob:robust}, since the constraint  $T^{\sigma}(\beta_1,\ldots,\beta_B) \neq T^*(\beta_1,\ldots,\beta_B)$  in the optimization problem~\eqref{prob:robust} is violated by the test deck $(\beta_1,\ldots,\beta_B)$. 
Second, if the   optimization problem~\eqref{prob:oracle} does not have any optimal solution, then we conclude that $(B,\beta_1,\ldots,\beta_{B})$ is a feasible solution for the optimization problem~\eqref{prob:robust}. 

The final step of each iteration of the algorithm depends on the output of the optimization problem~\eqref{prob:oracle}. If that optimization problem does not have a feasible solution, then we observe that $(B,\beta_1,\ldots,\beta_{B})$ must be a feasible solution for the optimization problem~\eqref{prob:robust}. Moreover, since $(B,\beta_1,\ldots,\beta_{B})$ was an optimal solution for the optimization problem~\eqref{prob:robust-subset}, and since the optimization problem~\eqref{prob:robust-subset} is a relaxation of the optimization problem~\eqref{prob:robust}, it must be the case that $(B,\beta_1,\ldots,\beta_{B})$ is also an optimal solution for the optimization problem~\eqref{prob:robust}. Hence, if the optimization problem~\eqref{prob:oracle} does not have a feasible solution, then we have found an optimal solution to the optimization problem~\eqref{prob:robust}, and the algorithm terminates. Otherwise, if the optimization problem~\eqref{prob:oracle} outputs $\sigma \in \Sigma$,     then we conclude the current iteration by updating $\widehat{\Sigma} \leftarrow \widehat{\Sigma} \cup \{ \sigma \}$ and starting a new iteration of the algorithm. 

It follows from straightforward arguments that the algorithm described above will terminate at an optimal solution for the optimization problem~\eqref{prob:robust} after finitely many iterations, regardless of the choice of the subset $\widehat{\Sigma} \subseteq \Sigma$ in the first iteration. Indeed, the finite convergence of the algorithm follows from the fact that $\Sigma$ is a finite set and from the fact that the optimization problem~\eqref{prob:oracle} will never output a non-identity bijection $\sigma \in \Sigma$ that is an element of $\widehat{\Sigma}$ when the test deck  $(\beta_1,\ldots,\beta_{B})$ satisfies the constraints of the optimization problem~\eqref{prob:robust-subset}. Therefore, the number of iterations of the algorithm is always upper bounded by $| \Sigma| = N! - 1$. In our numerical experiments throughout this paper, we initialize $\widehat{\Sigma}$ in the first iteration to be the empty set.

In order for the algorithm described above to be practically efficient in real-world elections, three important properties must hold. First, it must be possible to quickly solve the optimization problem~\eqref{prob:robust-subset} when $| \widehat{\Sigma}| \ll | \Sigma|$. Second, the algorithm must terminate after a relatively small number of iterations, as this property is essential for ensuring that the cardinality of $\widehat{\Sigma}$ remains much smaller than the cardinality of $ \Sigma$. Third, it must be possible to have a fast implementation of the optimization problem~\eqref{prob:oracle} for finding a constraint that is violated by the test deck obtained by solving the optimization problem~\eqref{prob:robust-subset}. In the subsequent \S\ref{appx:mip} and \S\ref{appx:improvements}, we show that these three important properties for obtaining a practically efficient algorithm can be achieved simultaneously.\looseness=-1

\subsection{Mixed-Integer Reformulations} \label{appx:mip}

In each iteration of the cutting plane method from \S\ref{appx:cutting}, we are tasked with solving the optimization problems~\eqref{prob:robust-subset} and \eqref{prob:oracle}. Here, we show that optimal solutions for these two optimization problems can be obtained by solving mixed-integer linear optimization problems. In doing so, this subsection enables the cutting plane method from \S\ref{appx:cutting} to be easily implementable using widely available open-source and commercial optimization software such as Gurobi and Mosek. Improvements to the mixed-integer linear optimization reformulations from the present subsection are proposed and analyzed in the subsequent \S\ref{appx:improvements}.

\subsubsection{Mixed-Integer Reformulation of \eqref{prob:robust-subset}} \label{appx:mip:robust-subset}
At a high level, our procedure for solving the optimization problem~\eqref{prob:robust-subset} consists of the following steps. First,  we  fix $B$ to be an integer that is less than or equal to the optimal objective value of the optimization problem~\eqref{prob:robust-subset}. We then solve the following optimization problem:
\begin{align} \tag{RO-$\widehat{\Sigma}$-$B$} \label{prob:robust-subset-B}
\begin{aligned}
\underset{\beta_1,\ldots,\beta_B \in \mathscr{B}}{\textnormal{minimize}}\quad & 0\\
\textnormal{subject to} \quad &  T^{\sigma}(\beta_1,\ldots,\beta_B) \neq T^*(\beta_1,\ldots,\beta_B) \quad \forall \sigma \in \widehat{\Sigma}.
\end{aligned}
\end{align}
If the optimization problem~\eqref{prob:robust-subset-B} does not have any feasible solutions, then we observe that $B$ must be strictly less than the optimal objective value of the optimization problem~\eqref{prob:robust-subset}. In that case, we update $B \leftarrow B+1$ and re-solve the optimization problem~\eqref{prob:robust-subset-B} with the new value for the parameter $B$. We repeat this loop until the optimization problem~\eqref{prob:robust-subset-B} is feasible, at which point~\eqref{prob:robust-subset-B} will yield an optimal solution for the optimization problem~\eqref{prob:robust-subset}. 
\begin{remark}
    We use the above procedure to solve the optimization problem~\eqref{prob:robust-subset} because the number of decision variables in the optimization problem~\eqref{prob:robust-subset} depends on the integer $B$, where the integer $B$ is itself a decision variable in the optimization problem~\eqref{prob:robust-subset}. In contrast, the number of decision variables in the optimization problem~\eqref{prob:robust-subset-B} is known because the integer $B$ is fixed externally. Because the number of decision variables is known a priori,  the optimization problem~\eqref{prob:robust-subset-B} can be  reformulated as a mixed-integer linear optimization problem.
\end{remark}
\begin{remark}
    The above procedure requires $B$ to be initialized to an integer that is less than or equal to the optimal objective value of the optimization problem~\eqref{prob:robust-subset}. In our implementation of the procedure, we initialize the integer to $B \leftarrow 1$ in the first iteration of the cutting plane method. In all subsequent iterations of the cutting plane method, we initialize $B$ to the optimal objective value of the optimization problem~\eqref{prob:robust-subset} from the previous iteration of the cutting plane method.\footnote{We recall that  $\widehat{\Sigma}$ in the current iteration of the cutting plane method is a superset of $\widehat{\Sigma}$ from the previous iteration of the cutting plane method. As a result, the constraints of the  optimization problem~\eqref{prob:robust-subset} in the current iteration of the cutting plane method is always a strict superset of the constraints of the optimization problem~\eqref{prob:robust-subset} in the previous iteration of the cutting plane method. This implies that the optimal objective value of the optimization problem~\eqref{prob:robust-subset}  from the previous iteration of the cutting plane method is always less than or equal to the optimal objective value of the optimization problem~\eqref{prob:robust-subset} in the current iteration of the cutting plane method.   }
\end{remark}

In the remainder of \S\ref{appx:mip:robust-subset}, we show that the optimization problem~\eqref{prob:robust-subset-B} can be reformulated as a mixed-integer linear optimization problem.   Indeed, let  $\mathcal{B} \triangleq \{1,\ldots,B\}$ and  $\mathcal{B}_0 \triangleq \{ 0 \} \cup \mathcal{B}$. With this notation, we first observe that the optimization problem~\eqref{prob:robust-subset-B} can be rewritten equivalently  as the following intermediary optimization problem:  
\begin{subequations} \label{prob:complete_transposition} 
\begin{align}
\underset{\beta \in \{0,1\}^{\mathcal{B} \times \mathcal{N}}}{\textnormal{minimize}}\quad & 0 \label{prob:complete_transposition:obj}\\
\textnormal{subject to} \quad & \sum_{i \in \mathcal{N}_c} \beta_{b,i} \le v_c &&\forall b \in \mathcal{B}, c \in \mathcal{C} \label{prob:complete_transposition:beta_feas}\\
& T^\sigma(\{i: \beta_{1,i} = 1 \},\ldots,\{i: \beta_{B,i} = 1 \}) \notag \\ 
&\;\; \neq T^*(\{i: \beta_{1,i} = 1 \},\ldots,\{i: \beta_{B,i} = 1 \}) && \forall \sigma \in \widehat{\Sigma}. \tag{--} \label{prob:complete_transposition:beta_not_defeated}
\end{align}
The optimization problem~\eqref{prob:complete_transposition} can be interpreted as follows. Each  binary decision variable $\beta_{b,i} \in \{0,1\}$ is equal to one if and only if target $i$ is selected in the $b$-th filled-out ballot in the test deck.  Hence, each vector $\beta_b \equiv (\beta_{b,1},\ldots,\beta_{b,N}) \in \{0,1\}^{\mathcal{N}}$ serves as a binary encoding of the targets that are selected in the $b$-th filled-out ballot. 
  Constraint~\eqref{prob:complete_transposition:beta_feas} ensures that each filled-out ballot in the test deck is feasible, that is, there does not exist a filled-out ballot that contains more votes for candidates in a contest than are allowed. Constraint~\eqref{prob:complete_transposition:beta_not_defeated} says that the output of a voting machine whose mapping from  candidates to targets is  $\sigma \in \widehat{\Sigma}$ must be different from the output of a correctly operating voting machine when using the test deck $(\{i: \beta_{1,i} = 1 \},\ldots,\{i: \beta_{B,i} = 1 \})$.

We observe from inspection that the above optimization problem~\eqref{prob:complete_transposition} consists exclusively of binary decision variables. Moreover, the objective function~\eqref{prob:complete_transposition:obj} and the constraints~\eqref{prob:complete_transposition:beta_feas}  are linear functions of the decision variables. Therefore, the final step in our reformulation of the optimization problem~\eqref{prob:robust-subset-B} as a mixed-integer linear optimization problem is to reformulate the constraint~\eqref{prob:complete_transposition:beta_not_defeated}.  This can be done through the introduction of new decision variables $\gamma$, $y$, and $p$, resulting in the following mixed-integer linear optimization problem:
\begin{align}
\underset{\substack{\beta \in \{0,1\}^{\mathcal{B} \times \mathcal{N}}\\
\gamma \in \{0,1\}^{\mathcal{N} \times \mathcal{B}_0}, \; y \in \R_{\ge 0}^{\mathcal{N} \times \mathcal{N}}\\
p^\sigma \in \{0,1\}^{\mathcal{B} \times \mathcal{C}} \forall \sigma \in \widehat{\Sigma}}}{\textnormal{minimize}}\quad & \eqref{prob:complete_transposition:obj} \notag \\
\textnormal{subject to} \quad &  \eqref{prob:complete_transposition:beta_feas} \notag \\
& \sum_{g \in \mathcal{B}_0} \gamma_{i,g} = 1 && \forall i \in \mathcal{N} \label{prob:complete_transposition:gamma_1}\\
& \sum_{b \in \mathcal{B}} \beta_{b,i} = \sum_{g \in \mathcal{B}_0} g \gamma_{i,g}  && \forall i \in \mathcal{N} \label{prob:complete_transposition:gamma_2}\\
& y_{i,j} \ge -1 + \gamma_{i,g} + \gamma_{j,g}  && \forall i,j \in \mathcal{N}, g \in \mathcal{B}_0 \label{prob:complete_transposition:y}\\
& p^\sigma_{b,c} \geq 1 - \frac{1}{v_c + 1}  \sum_{i \in \mathcal{N}_c} \beta_{b, \sigma(i)} && \forall \sigma \in \widehat{\Sigma}, b \in \mathcal{B}, c \in \mathcal{C}\label{prob:complete_transposition:p}\\
& \sum_{i \in \mathcal{N} } \left(1 -  y_{i,\sigma(i)} \right) + \sum_{b \in \mathcal{B}} \sum_{c \in \mathcal{C}} \left(1 - p^\sigma_{b,c} \right) \ge 1&& \forall \sigma \in \widehat{\Sigma}.\label{prob:complete_transposition:defeated}
\end{align}
  \end{subequations}
 The decision variables of the above optimization problem can be interpreted as follows.
 
 First, we observe that  constraints~\eqref{prob:complete_transposition:gamma_1} and \eqref{prob:complete_transposition:gamma_2} together enforce that each decision variable $\gamma_{i,g} \in \{0,1\}$ will be equal to one if and only if candidate $i$ appears in exactly $g$ ballots. 
 
Second, constraint~\eqref{prob:complete_transposition:y} requires that $y_{i,j}$ must be 
greater than or equal to one if candidates $i$ and $j$ receive the same number of votes across $\mathcal{B}$, and may be as low as zero if they receive a different number of votes. Similarly, constraint~\eqref{prob:complete_transposition:p} requires that $p_{b,c}^\sigma$ must be equal to one if contest $c$ receieves $v_c$ or fewer votes on ballot $b$ under swap $\sigma$, and may be zero if the contest is instead overvoted.\looseness=-1

Finally, constraint~\eqref{prob:complete_transposition:defeated} ensures that a feasible solution for the optimization problem~\eqref{prob:complete_transposition} exists if and only if at least one $y_{i,\sigma(i)}$ or $p_{b,c}^\sigma$ is zero for each $\sigma \in \widehat{\Sigma}$. In other words, the problem has a feasible solution if and only if, for each swap in $\widehat{\Sigma}$, there exists at least one candidate that is mapped to a target with a different number of votes \emph{or} there exists at least one ballot that is unexpectedly interpreted as containing an overvote. This ensures that any feasible solution to this optimization problem corresponds to a deck of ballots which will detect every swap in our subset.
 
In summary, we have shown in the present \S\ref{appx:mip:robust-subset} that the optimization problem~\eqref{prob:robust-subset} can be solved by a procedure that consists of fixing the integer $B$ to a lower bound on the optimal objective value of the optimization problem~\eqref{prob:robust-subset} and then incrementing $B$ until the optimization problem~\eqref{prob:robust-subset-B} has a feasible solution. Moreover, for each fixed choice of the integer $B$, we showed that the optimization problem~\eqref{prob:robust-subset-B} can be reformulated as the mixed-integer linear optimization problem~\eqref{prob:complete_transposition}. Thus, we have shown that an optimal solution for the optimization problem~\eqref{prob:robust-subset} can be obtained by a procedure that consists of solving one or more mixed-integer linear optimization problems. 

\subsubsection{Mixed-Integer Reformulation of \eqref{prob:oracle}} \label{appx:mip:oracle}
We conclude \S\ref{appx:mip} by reformulating the optimization problem~\eqref{prob:oracle} as a mixed-integer linear optimization problem.\footnote{More precisely, our reformulation~\eqref{prob:oracle_mip} of the optimization problem~\eqref{prob:oracle} is a \emph{binary} linear optimization problem.} Given any test deck $(\beta_1,\ldots,\beta_{B})$, our mixed-integer linear optimization reformulation of the optimization problem~\eqref{prob:oracle}  is the following:
\begin{subequations} \label{prob:oracle_mip}
    \begin{align}
    \underset{x \in \{0,1\}^{\mathcal{N} \times \mathcal{N}}}{\textnormal{minimize}} \quad\quad & 0 \notag \\
    \textnormal{subject to} \quad\quad  & \sum_{j \in \mathcal{N}} x_{i,j} = 1 && \forall i \in \mathcal{N}\label{prob:oracle_mip:bijective_1} \\
    & \sum_{i \in \mathcal{N}} x_{i,j} = 1 && \forall j \in \mathcal{N}\label{prob:oracle_mip:bijective_2} \\
         & \sum_{i \in \mathcal{N}} x_{i,i} \leq |\mathcal{N}|-2 \label{prob:oracle_mip:dishonest}\\
    &  \sum_{i \in \mathcal{N}_c} \sum_{j \in \beta_b} x_{i,j} \leq v_c && \forall b \in \mathcal{B}, c \in \mathcal{C} \label{prob:oracle_mip:overvote}\\
     & x_{i,j} = 0 && \forall i,j \in \mathcal{N} : \; \left| \left \{b \in \mathcal{B}: i \in \beta_b \right \} \right|  \neq \left| \left \{b \in \mathcal{B}: j \in \beta_b \right \} \right|. 
     \label{prob:oracle_mip:vote_total} 
    \end{align}
\end{subequations}
The constraints \eqref{prob:oracle_mip:bijective_1}-\eqref{prob:oracle_mip:dishonest} enforce that the decision variables $x \in \{0,1\}^{\mathcal{N} \times \mathcal{N}}$ are a binary encoding of a non-identity bijection $\sigma \in \Sigma$. Indeed,  constraints  \eqref{prob:oracle_mip:bijective_1} and  \eqref{prob:oracle_mip:bijective_2}  ensure that each feasible solution of \eqref{prob:oracle_mip}  can be transformed into a bijection $\sigma: \mathcal{N} \to \mathcal{N}$ using the rule that $\sigma(i) = j$ if and only if $x_{i,j} = 1$ for each $i,j \in \mathcal{N}$. Constraint~\eqref{prob:oracle_mip:dishonest} enforces that there exists $i \in \mathcal{N}$ that satisfies $\sigma(i) \neq i$. 

The last two constraints~\eqref{prob:oracle_mip:overvote} and \eqref{prob:oracle_mip:vote_total} enforce that the non-identity bijection $\sigma \in \Sigma$  corresponding to the decision variables $x \in \{0,1\}^{\mathcal{N} \times \mathcal{N}}$ satisfies the equality $T^\sigma(\beta_1,\ldots,\beta_{B}) = T^*(\beta_1,\ldots,\beta_{B})$. To see why this is the case, we first observe that constraint~\eqref{prob:oracle_mip:overvote} enforces that the inequality $\left| \left \{ \sigma(j) \in \beta_b: j \in \mathcal{N}_c \right \}  \right| \le v_c$ holds for all contests $c \in \mathcal{C}$ and $b \in \mathcal{B}$. Moreover,  constraint~\eqref{prob:oracle_mip:vote_total} enforces that the equality $| \{ b \in \{1,\ldots,B\}: i \in \beta_b \}| =  |\{ b \in \{1,\ldots,B\}: \sigma(i) \in \beta_b \}|$  holds for all candidates $i \in \mathcal{N}$. Therefore, it follows from Theorem~\ref{thm:diff_votes} in \S\ref{sec:math:discussion} that  constraints~\eqref{prob:oracle_mip:overvote} and \eqref{prob:oracle_mip:vote_total} are satisfied if and only if the non-identity bijection $\sigma \in \Sigma$  corresponding to the decision variables $x \in \{0,1\}^{\mathcal{N} \times \mathcal{N}}$ satisfies the equality $T^\sigma(\beta_1,\ldots,\beta_{B}) = T^*(\beta_1,\ldots,\beta_{B})$.

\section{Improvements to Exact Algorithm} \label{appx:improvements}
In this section, we present five improvements to the mixed-integer linear optimization reformulations from \S\ref{appx:mip} that significantly increase the practical efficiency of the cutting plane method  from \S\ref{appx:cutting}. Our five improvements to the mixed-integer linear optimization reformulations are  presented  and analyzed in the subsequent \S\ref{appx:improvements:reducingdecisions}-\S\ref{appx:improvements:noncompetitive}. 
In  Appendix~\ref{appx:additional_experiments}, we demonstrate  via numerical experiments on synthetic elections that   each of the five improvements from this section, when applied in isolation, generates between a $20$x to $3000$x speedup to the cutting plane method.   

\subsection{Improvement 1: Reducing Number of Decision Variables and Constraints} \label{appx:improvements:reducingdecisions}
As our first  step in increasing the practical efficiency of the cutting plane method  from \S\ref{appx:cutting}, we show that a number of the decision variables and constraints in the mixed-integer linear optimization problem~\eqref{prob:complete_transposition} from \S\ref{appx:mip:robust-subset} can be removed without any loss of generality. By removing these unnecessary decision variables and constraints from the mixed-integer linear optimization problem~\eqref{prob:complete_transposition}, we demonstrate through numerical experiments in Appendix~\ref{appx:additional_experiments} that the computation time of each iteration of the cutting plane method can be significantly decreased. 

To motivate our subsequent developments in \S\ref{appx:improvements:reducingdecisions}, we begin by analyzing the size of the mixed-integer linear optimization problem~\eqref{prob:complete_transposition}. Indeed, we observe that the number of binary decision variables, the number of continuous decision variables, and the number of constraints in  the mixed-integer linear optimization problem~\eqref{prob:complete_transposition} are as follows:
\begin{align*}
   & \begin{aligned}\text{\# binary decision variables} =& \underbrace{\left( | \mathcal{B}| \times | \mathcal{N}| \right)}_{ \beta}  + \underbrace{\left(|\mathcal{B}_0| \times | \mathcal{N}| \right)}_{\gamma} + \underbrace{\left( | \mathcal{B}| \times | \mathcal{C}| \times | \widehat{\Sigma} | \right)}_{p}\\
    =&BN +  (B+1)N  +BC | \widehat{\Sigma} |\\
    =& \mathcal{O} \left( BN  +BC | \widehat{\Sigma}|  \right);\end{aligned}
    \\
   & \begin{aligned}
\text{\# continuous decision variables} &= \underbrace{| \mathcal{N}| \times |\mathcal{N}|}_{y} 
\\&=  N^2;\end{aligned}\\
&\begin{aligned}\text{\# constraints} =& \underbrace{\left( | \mathcal{B}| \times| \mathcal{C}| \right)}_{\eqref{prob:complete_transposition:beta_feas}} + \underbrace{|\mathcal{N}|}_{\eqref{prob:complete_transposition:gamma_1}} + \underbrace{|\mathcal{N}|}_{\eqref{prob:complete_transposition:gamma_2}}+ \underbrace{\left( |\mathcal{N}| \times |\mathcal{N}| 
 \times |\mathcal{B}_0| \right) }_{\eqref{prob:complete_transposition:y}}+  \underbrace{\left( | \mathcal{B}| \times | \mathcal{C}| \times \left| \widehat{\Sigma} \right| \right)}_{\eqref{prob:complete_transposition:p}} + \underbrace{| \widehat{\Sigma}|}_{\eqref{prob:complete_transposition:defeated}}\\
 =& BC + 2N + N^2(B+1) + BC | \widehat{\Sigma}| + | \widehat{\Sigma}| \\
 =& \mathcal{O} \left(B   N^2  + BC | \widehat{\Sigma}|  \right). 
 \end{aligned}
\end{align*}
In real-world elections such as those from Michigan, the number of contests typically satisfies $15 \le C \le 40$  (see Figure~\ref{fig:michigan_data} in \S\ref{sec:math:notation}), the number of candidates typically satisfies $60 \le N \le 120$  (see Figure~\ref{fig:michigan_data} in \S\ref{sec:math:notation}),   the number of ballots in an optimal test deck for \eqref{prob:robust} typically satisfies $20 \le B \le 50$ (see Figure~\ref{fig:michigan} in \S\ref{sec:experiments:length}), and the number of iterations of our cutting plane method typically satisfies  $50 \le | \widehat{\Sigma} | \le 300$ (see Figure~\ref{fig:time_vs_iteration} in \S\ref{sec:experiments:time}). Combining the equations derived above with the real-world data observed from Michigan,  we conclude that the size of the mixed-integer linear optimization problem~\eqref{prob:complete_transposition} is driven primarily by the binary decision variables $p$, the  continuous decision variables $y$, and the constraints \eqref{prob:complete_transposition:y} and \eqref{prob:complete_transposition:p}.

In view of the above motivation, we first show that a number of the binary decision variables $p^\sigma_{b,c} \in \{0,1\}$ and constraints~\eqref{prob:complete_transposition:p} can be removed  from the mixed-integer linear optimization problem~\eqref{prob:complete_transposition} without loss of generality. Indeed, we recall from the discussion in \S\ref{appx:mip:robust-subset}  that there  always exists an optimal solution of the mixed-integer linear optimization problem~\eqref{prob:complete_transposition} in which each binary decision variable $p^\sigma_{b,c}$  satisfies the equality $p^\sigma_{b,c} = 0$ if and only if contest $c$ in ballot $b$ is interpreted as containing an overvote by the voting machine whose mapping is $\sigma$. To decrease the number of these binary decision variables, we utilize the following intermediary result:

\begin{lemma} \label{lem:decrease_p}
Let $\sigma \in \widehat{\Sigma}$ and $c \in \mathcal{C}$. If  the inequality 
$\sum_{c' \in \mathcal{C}} \min \left \{ \left| \left \{ \sigma(i) \in \mathcal{N}_{c'}: i \in \mathcal{N}_c \right \} \right|, v_{c'} \right \} \le v_c $ holds, then every feasible solution  of the mixed-integer linear optimization problem~\eqref{prob:complete_transposition} satisfies the equality  $p^\sigma_{b,c} = 1$ for all $b \in \mathcal{B}$. 
\end{lemma}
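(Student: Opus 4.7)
The plan is to derive the conclusion directly from constraints \eqref{prob:complete_transposition:beta_feas} and \eqref{prob:complete_transposition:p}, using the hypothesis to bound $\sum_{i \in \mathcal{N}_c} \beta_{b,\sigma(i)}$ uniformly across feasible solutions. The key observation is that since $p^\sigma_{b,c}$ is binary, constraint \eqref{prob:complete_transposition:p} forces $p^\sigma_{b,c}=1$ as soon as its right-hand side is strictly positive, i.e., as soon as $\sum_{i \in \mathcal{N}_c} \beta_{b,\sigma(i)} \le v_c$. So the entire task reduces to showing that in every feasible $\beta$, and for every ballot $b \in \mathcal{B}$, the inequality $\sum_{i \in \mathcal{N}_c} \beta_{b,\sigma(i)} \le v_c$ is implied by the hypothesis.

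First I would rewrite the sum by partitioning the index set according to which contest each $\sigma(i)$ lands in:
\[
\sum_{i \in \mathcal{N}_c} \beta_{b,\sigma(i)} = \sum_{c' \in \mathcal{C}} \sum_{\substack{i \in \mathcal{N}_c \\ \sigma(i) \in \mathcal{N}_{c'}}} \beta_{b,\sigma(i)}.
\]
For each inner sum corresponding to contest $c'$, I would apply two simultaneous bounds. The first uses that $\beta$ is binary and the number of summands equals $|\{\sigma(i) \in \mathcal{N}_{c'} : i \in \mathcal{N}_c\}|$. The second uses constraint \eqref{prob:complete_transposition:beta_feas}, which gives $\sum_{j \in \mathcal{N}_{c'}} \beta_{b,j} \le v_{c'}$, and since the inner sum is a sum over a subset of $\mathcal{N}_{c'}$ (because $\sigma$ is a bijection, the $\sigma(i)$'s are distinct), it is also bounded by $v_{c'}$. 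Taking the minimum of the two bounds yields $\min\{|\{\sigma(i) \in \mathcal{N}_{c'} : i \in \mathcal{N}_c\}|, v_{c'}\}$.

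Summing this bound over $c' \in \mathcal{C}$ and invoking the hypothesis of the lemma gives $\sum_{i \in \mathcal{N}_c} \beta_{b,\sigma(i)} \le v_c$. Plugging this into constraint \eqref{prob:complete_transposition:p} yields
\[
p^\sigma_{b,c} \;\ge\; 1 - \frac{1}{v_c + 1} \sum_{i \in \mathcal{N}_c} \beta_{b,\sigma(i)} \;\ge\; 1 - \frac{v_c}{v_c + 1} \;=\; \frac{1}{v_c+1} \;>\; 0,
\]
and since $p^\sigma_{b,c} \in \{0,1\}$, this forces $p^\sigma_{b,c} = 1$, as desired.

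I do not anticipate any real obstacle here: the argument is a short chain of bounding steps, and the only subtle ingredient is the observation that injectivity of $\sigma$ allows the feasibility constraint on $\beta$ to be applied to the inner sum. The hypothesis has been crafted to combine exactly the two elementary bounds (cardinality and contest feasibility), and after that combination the conclusion follows immediately from the binary nature of $p^\sigma_{b,c}$.
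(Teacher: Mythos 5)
Your proof is correct and uses essentially the same argument as the paper: both rest on partitioning $\sum_{i \in \mathcal{N}_c} \beta_{b,\sigma(i)}$ by the contest containing $\sigma(i)$ and bounding each piece by $\min\{|\{\sigma(i) \in \mathcal{N}_{c'} : i \in \mathcal{N}_c\}|, v_{c'}\}$ via the cardinality bound and constraint~\eqref{prob:complete_transposition:beta_feas}. The only cosmetic difference is that the paper runs the same chain of inequalities as a contrapositive (assuming $p^\sigma_{b,c}=0$ and deriving $v_c+1 \le \sum_{c'}\min\{\cdot,\cdot\}$), whereas you argue in the forward direction.
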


\noindent To make sense of the above lemma, we remark that $\sum_{c' \in \mathcal{C}} \min \left \{ \left| \left \{ \sigma(i) \in \mathcal{N}_{c'}: i \in \mathcal{N}_c \right \} \right|, v_{c'} \right \}$ is equal to the maximum number of votes that may be mapped to contest $c$ under mapping $\sigma$ for any filled-out ballot in $\mathscr{B}$.  If this summation is less than or equal to $v_c$, then contest $c$ will never be overvoted by a filled-out ballot from $\mathscr{B}$ under mapping $\sigma$.

In view of Lemma~\ref{lem:decrease_p}, we now demonstrate that a subset of the binary decision variables of the form $p^\sigma_{b,c}$ and a subset of the constraints~\eqref{prob:complete_transposition:p} can be removed from the mixed-integer linear optimization problem~\eqref{prob:complete_transposition} without loss of generality. Indeed, 
for each non-identity bijection $\sigma \in \widehat{\Sigma}$, let the subset of contests that have the possibility of being overvoted  under a voting machine with mapping $\sigma$ be denoted by
\begin{align*}
     \widehat{\mathcal{C}}^\sigma \triangleq \left \{c \in \mathcal{C}: \;  \sum_{c' \in \mathcal{C}} \min \left \{ \left| \left \{ \sigma(i) \in \mathcal{N}_{c'}: i \in \mathcal{N}_c \right \} \right|, v_{c'} \right \} \ge v_c + 1  \right \}. 
\end{align*}
We observe that the subset of contests $\widehat{\mathcal{C}}^\sigma$ for each $\sigma \in \widehat{\Sigma}$ can be efficiently precomputed.\footnote{By \emph{precomputed}, we mean that the set $\widehat{\mathcal{C}}^\sigma$ can be computed independent of $B$ and only needs to be computed once per $\sigma$. Hence, it suffices to compute $\widehat{\mathcal{C}}^\sigma$ when $\sigma$ is first added by the cutting plane method into the set $\widehat{\Sigma}$. Moreover, the set $\widehat{\mathcal{C}}^\sigma$ can be computed in $\mathcal{O}(C^2 + N)$ time by the following straightforward algorithm: (1) initialize an  $C \times C$-dimension array of all zeros; (2) for each   $i \in \mathcal{N}$,  increment the value in the array at position $(c,c')$ if $i \in \mathcal{N}_c$ and $\sigma(i) \in \mathcal{N}_{c'}$; (3) for each $c \in \mathcal{C}$, calculate the quantity $\sum_{c' \in \mathcal{C}} \min \left \{ \left| \left \{ \sigma(i) \in \mathcal{N}_{c'}: i \in \mathcal{N}_c \right \} \right|, v_{c'} \right \}$ by summing the minimum of the value of the array at position $(c,c')$ and $v_{c'}$ over all $c' \in \mathcal{C}$.}
 Using these subsets of contests, it follows immediately from  Lemma~\ref{lem:decrease_p} that constraints~\eqref{prob:complete_transposition:p} and \eqref{prob:complete_transposition:defeated} can without loss of generality be replaced by the following constraints:
\begin{subequations}
    \begin{align}
        &p^\sigma_{b,c} \geq 1 -  \frac{1}{v_c + 1} \sum_{i \in \mathcal{N}_c} \beta_{b, \sigma(i)}
        && \forall \sigma \in \widehat{\Sigma}, b \in \mathcal{B}, c \in \widehat{\mathcal{C}} ^
        \sigma \label{prob:reduced_variables:p}\\
&\sum_{i \in \mathcal{N} } \left(1 -  y_{i,\sigma(i)} \right)  + \sum_{b \in \mathcal{B}} \sum_{c \in \widehat{\mathcal{C}}^\sigma} \left(1 - p^\sigma_{b,c} \right) \ge 1 && \forall \sigma \in \widehat{\Sigma}. \label{prob:reduced_variables:defeated}
    \end{align}
In particular, we observe that the binary decision variable $p^\sigma_{b,c} \in \{0,1\}$  for each $\sigma \in \widehat{\Sigma}$, $b \in \mathcal{B}$, and $c \in \mathcal{C}$ that satisfies  $c \notin \widehat{\mathcal{C}}^\sigma$ no longer appears in the  mixed-integer linear optimization problem~\eqref{prob:complete_transposition} and can thus be eliminated.

Next, we show that a number of the continuous decision variables $y_{i,j}$ and constraints~\eqref{prob:complete_transposition:y} can be removed from the mixed-integer linear optimization problem~\eqref{prob:complete_transposition} without loss of generality. Indeed, we recall from the discussion in \S\ref{appx:mip:robust-subset} that each decision variable $y_{i,j} \in \R_{\ge 0}$ will at optimality be equal to zero only if candidates $i, j \in \mathcal{N}$ do not appear in the same number of ballots. 
Moreover, we observe that variable $y_{i,j}$ is only referenced in the constraint~\eqref{prob:complete_transposition:defeated} by the terms $y_{i,\sigma(i)}$ for each $i \in \mathcal{N}$ and $\sigma \in \widehat{\Sigma}$. Therefore, we observe that the decision variable $y_{i,j}$ only needs to be defined for the pairs of candidates $(i,j)$ in the set 
\begin{align*}
\mathcal{P}(\widehat{\Sigma} ) \triangleq \left \{(i,j) \in \mathcal{N}^2:  i \neq j \textnormal{ and there exists } \sigma \in \widehat{\Sigma} \textnormal{ that satisfies  } \sigma(i) = j \right \},
\end{align*}
and we can replace the constraint~\eqref{prob:complete_transposition:y} with 
\begin{align}
y_{i,j} \ge -1 + \gamma_{i,g} + \gamma_{j,g}  && \forall (i,j) \in \mathcal{P}(\widehat{\Sigma}), g \in \mathcal{B}_0. \label{prob:reduced_variables:y}
\end{align}
\end{subequations}

We conclude that the mixed-integer linear optimization problem~\eqref{prob:complete_transposition} can be reduced to an optimization problem with the following number of binary decision variables, number of continuous decision variables, and number of constraints: 
\begin{align*}
   &\begin{aligned} \text{\# binary decision variables} =& \underbrace{\left( | \mathcal{B}| \times | \mathcal{N}| \right)}_{ \beta}  + \underbrace{\left( | \mathcal{B}_0| \times | \mathcal{N}| \right)}_{\gamma} + \underbrace{\left( | \mathcal{B}|  \times \sum_{\sigma \in \widehat{\Sigma}} \left| \widehat{\mathcal{C}}^\sigma \right| \right)}_{p}\\
    =& BN + (B+1)N  +B \sum_{\sigma \in \widehat{\Sigma}} \left| \widehat{\mathcal{C}}^\sigma \right|\\
    =& \mathcal{O} \left( BN  +B \sum_{\sigma \in \widehat{\Sigma}} \left| \widehat{\mathcal{C}}^\sigma \right|   \right);
    \end{aligned}\\
    \end{align*}
    \begin{align*}
&\begin{aligned}\text{\# continuous decision variables} =& \underbrace{| \mathcal{P}(\widehat{\Sigma})|}_{y};\end{aligned}\\
&\begin{aligned}\text{\# constraints} =&  \underbrace{\left( | \mathcal{B}| \times| \mathcal{C}| \right)}_{\eqref{prob:complete_transposition:beta_feas}} + \underbrace{|\mathcal{N}|}_{\eqref{prob:complete_transposition:gamma_1}} + \underbrace{|\mathcal{N}|}_{\eqref{prob:complete_transposition:gamma_2}}+  \underbrace{\left( | \mathcal{B}| \times \sum_{\sigma \in \widehat{\Sigma}} \left| \widehat{\mathcal{C}}^\sigma \right|  \right)}_{\eqref{prob:reduced_variables:p}} + \underbrace{| \widehat{\Sigma}|}_{\eqref{prob:reduced_variables:defeated}} +  \underbrace{\left( |\mathcal{P}(\widehat{\Sigma})| 
 \times |\mathcal{B}_0| \right) }_{\eqref{prob:reduced_variables:y}} \\
 =& BN + BC + 2N + B\sum_{\sigma \in \widehat{\Sigma}} \left| \widehat{\mathcal{C}}^\sigma \right|  + | \widehat{\Sigma}|+ |\mathcal{P}(\widehat{\Sigma})|  (B+1)  \\
 =& \mathcal{O} \left(B  |\mathcal{P}(\widehat{\Sigma})|   + B\sum_{\sigma \in \widehat{\Sigma}} \left| \widehat{\mathcal{C}}^\sigma \right| \right). 
 \end{aligned}
\end{align*}
As we show in Appendix~\ref{appx:additional_experiments}, the above reductions in the number of decision variables and constraints lead to a significant decrease in the computation time for solving the mixed-integer linear optimization problem~\eqref{prob:complete_transposition} in each iteration of the cutting plane method. 

\subsection{Improvement 2: Distinct Votes for Candidates in the Same Contest} \label{appx:improvements:symmetry}
As our second  step in increasing the practical   efficiency of the  cutting plane method, we add a set of extra constraints into the  mixed-integer linear optimization problem~\eqref{prob:complete_transposition}. 
These extra constraints force the mixed-integer linear optimization problem~\eqref{prob:complete_transposition} to output a test deck that  proactively satisfies many of the constraints $\sigma \in \Sigma$ from the optimization problem~\eqref{prob:robust} that were not explicitly included in subset $\widehat{\Sigma}$.  
As we demonstrate through numerical experiments in Appendix~\ref{appx:additional_experiments}, the addition of this set of extra constraints leads to a significant decrease in the number of iterations of the cutting plane method without any meaningful increase in the computation time for solving the mixed-integer linear optimization problem~\eqref{prob:complete_transposition} in each iteration.

The motivation for the set of extra constraints is given by the following Lemma~\ref{lem:symmetry_withincontest} and  Proposition~\ref{prop:symmetry_withincontest}. In Lemma~\ref{lem:symmetry_withincontest}, we establish a structural property that is satisfied by every test deck that satisfies the constraints of the optimization problem~\eqref{prob:robust}. Specifically, the following lemma shows that a test deck satisfies the constraints of the optimization problem~\eqref{prob:robust} only if all of the candidates that appear in the same contest receive a different number of votes, where we say that two candidates $i,j  \in \mathcal{N}$ appear in the same contest if there exists a contest $c \in \mathcal{C}$ that satisfies  $i,j \in \mathcal{N}_c$.\looseness=-1
\begin{lemma} \label{lem:symmetry_withincontest}
If $(B,\beta_1,\ldots,\beta_B)$ is a feasible solution for the optimization problem~\eqref{prob:robust}, then   $| \{b \in \{1,\ldots,B\}: i \in \beta_b \}| \neq | \{b \in \{1,\ldots,B\}: j \in \beta_b \}|$  for all candidates $i < j$ that appear in the same contest.
\end{lemma}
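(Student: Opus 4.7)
The plan is to prove the lemma by contrapositive: assume there exist candidates $i < j$ that appear in the same contest $c \in \mathcal{C}$ and that satisfy $|\{b : i \in \beta_b\}| = |\{b : j \in \beta_b\}|$, and then construct a non-identity bijection $\sigma \in \Sigma$ that certifies infeasibility of $(B,\beta_1,\ldots,\beta_B)$ for \eqref{prob:robust} via Corollary~\ref{cor:diff_votes}.

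The natural choice for $\sigma$ is the transposition that swaps the two targets $i$ and $j$ and fixes every other target, i.e., $\sigma(i) = j$, $\sigma(j) = i$, and $\sigma(k) = k$ for all $k \in \mathcal{N} \setminus \{i,j\}$. This is clearly a non-identity bijection, so $\sigma \in \Sigma$. I then plan to verify that neither of the two disjunctive conditions in Corollary~\ref{cor:diff_votes} can hold for this $\sigma$, which by the corollary will force $T^\sigma(\beta_1,\ldots,\beta_B) = T^*(\beta_1,\ldots,\beta_B)$ and thereby contradict feasibility.

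For the first condition, I would check that $|\{b : k \in \beta_b\}| = |\{b : \sigma(k) \in \beta_b\}|$ holds for every candidate $k \in \mathcal{N}$. For $k \notin \{i,j\}$ this is immediate since $\sigma(k) = k$, and for $k \in \{i,j\}$ it follows directly from the hypothesis that $i$ and $j$ receive the same number of votes across the deck. For the second condition, I would use the assumption (from the definition of a ballot style in \S\ref{sec:math:notation}) that each candidate appears in exactly one contest, which combined with $i,j \in \mathcal{N}_c$ implies that $\sigma$ maps $\mathcal{N}_c$ to itself and acts as the identity on every other $\mathcal{N}_{c'}$. Consequently, for every contest $c'$ and every ballot $b$, the multiset $\{\sigma(k) : k \in \mathcal{N}_{c'}\} \cap \beta_b$ equals $\mathcal{N}_{c'} \cap \beta_b$, whose size is at most $v_{c'}$ because $\beta_b \in \mathscr{B}$; thus no new overvote is introduced.

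The main subtlety, and the step I would be most careful to write explicitly, is the second condition: it is essential that $i$ and $j$ belong to the \emph{same} contest, because this is what guarantees that $\sigma$ preserves each contest's set of targets setwise and therefore cannot create an overvote that a correctly configured machine would not also see. Once both conditions of Corollary~\ref{cor:diff_votes} have been ruled out for this specific $\sigma$, the conclusion $T^\sigma = T^*$ follows, contradicting feasibility of $(B,\beta_1,\ldots,\beta_B)$ and completing the contrapositive argument.
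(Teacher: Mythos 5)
Your proposal is correct and follows essentially the same route as the paper: both take the same transposition $\sigma$ swapping targets $i$ and $j$, verify that it preserves every candidate's vote count and introduces no overvotes (using that $i,j$ lie in the same contest so $\sigma$ fixes each $\mathcal{N}_{c'}$ setwise), and conclude $T^\sigma = T^*$ via Theorem~\ref{thm:diff_votes}/Corollary~\ref{cor:diff_votes}, contradicting feasibility. Your explicit emphasis on why the same-contest hypothesis is needed for the overvote condition is exactly the point the paper's case analysis establishes.
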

\noindent Equipped with the above intermediary lemma, we  show in the following  Proposition~\ref{prop:symmetry_withincontest} that there always exists an optimal solution for the optimization problem~\eqref{prob:robust} in which all of the candidates that appear in the same contest have a strictly increasing number of votes. 
\begin{proposition} \label{prop:symmetry_withincontest}
There exists an optimal solution for the optimization problem~\eqref{prob:robust} that satisfies  $| \{b \in \{1,\ldots,B\}: i \in \beta_b \}| < | \{b \in \{1,\ldots,B\}: j \in \beta_b \}|$   for all candidates $i < j$ that appear in the same contest. 
\end{proposition}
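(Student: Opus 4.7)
The plan is to start from an arbitrary optimal solution $(B,\beta_1,\ldots,\beta_B)$ of \eqref{prob:robust} and transform it into another optimal solution of the same length whose per-contest vote totals are increasing in the candidate index. Lemma~\ref{lem:symmetry_withincontest} already guarantees that the vote totals within each contest are pairwise distinct, so the only remaining task is to \emph{reorder} the targets within each contest so that the ordering of vote totals matches the ordering of candidate indices.

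Concretely, for each contest $c \in \mathcal{C}$ I would define a permutation $\pi_c$ of $\mathcal{N}_c$ that sends the $k$-th smallest index in $\mathcal{N}_c$ to the candidate in $\mathcal{N}_c$ receiving the $k$-th smallest vote total under $(\beta_1,\ldots,\beta_B)$; the distinctness from Lemma~\ref{lem:symmetry_withincontest} makes this well-defined. Because every candidate lies in exactly one contest, the maps $\pi_c$ assemble into a single bijection $\pi : \mathcal{N} \to \mathcal{N}$ that permutes each $\mathcal{N}_c$ into itself. I would then define the new test deck by $\beta'_b \triangleq \pi^{-1}(\beta_b)$ for each $b$, so that $|\{b : i \in \beta'_b\}| = |\{b : \pi(i) \in \beta_b\}|$. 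By the choice of $\pi$, this exactly gives the desired monotonicity within each contest, and since $\pi$ permutes each $\mathcal{N}_c$ it preserves $|\mathcal{N}_c \cap \beta_b| = |\mathcal{N}_c \cap \beta'_b| \leq v_c$, so every $\beta'_b$ still lies in $\mathscr{B}$. The length $B$ is unchanged, so if the new deck is still feasible for \eqref{prob:robust} it is again optimal.

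The main obstacle is verifying that the new deck $(\beta'_1,\ldots,\beta'_B)$ still satisfies the full collection of constraints in \eqref{prob:robust}. I would argue by contrapositive: suppose $T^{\sigma}(\beta'_1,\ldots,\beta'_B) = T^*(\beta'_1,\ldots,\beta'_B)$ for some $\sigma \in \Sigma$, and consider the conjugated bijection $\sigma' \triangleq \pi \circ \sigma \circ \pi^{-1}$. Since $\sigma$ is not the identity, neither is $\sigma'$, so $\sigma' \in \Sigma$. Using $\beta'_b = \pi^{-1}(\beta_b)$ together with the fact that $\pi$ preserves every $\mathcal{N}_c$, one checks directly that
\begin{align*}
|\{b : \sigma'(j) \in \beta_b\}| &= |\{b : \pi(\sigma(\pi^{-1}(j))) \in \beta_b\}| = |\{b : \sigma(\pi^{-1}(j)) \in \beta'_b\}|,\\
|\{\sigma'(j) \in \beta_b : j \in \mathcal{N}_c\}| &= |\{\sigma(i) \in \beta'_b : i \in \mathcal{N}_c\}|,
\end{align*}
so that neither of the two conditions in Theorem~\ref{thm:diff_votes} holds for $\sigma'$ against the original deck. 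By Theorem~\ref{thm:diff_votes}, this would give $T^{\sigma'}(\beta_1,\ldots,\beta_B) = T^*(\beta_1,\ldots,\beta_B)$, contradicting the feasibility of the original optimal solution. Hence $(\beta'_1,\ldots,\beta'_B)$ is feasible, and by construction it satisfies the desired strict monotonicity, completing the proof.
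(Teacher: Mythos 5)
Your proof is correct and follows essentially the same route as the paper's: take an arbitrary optimal deck, permute the candidate indices within each contest so that vote totals are sorted, and invoke Lemma~\ref{lem:symmetry_withincontest} to upgrade the ordering to strict inequalities. The only difference is that the paper treats the relabeling as an immediate without-loss-of-generality symmetry, whereas you explicitly verify feasibility of the permuted deck via the conjugation $\sigma' = \pi \circ \sigma \circ \pi^{-1}$ and Theorem~\ref{thm:diff_votes} --- a welcome extra level of rigor, but not a different argument.
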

\noindent Hence, the above proposition implies that we can, without loss of generality, restrict the solution spaces of the optimization problems~\eqref{prob:robust} and \eqref{prob:robust-subset} by adding extra constraints which enforce that  $| \{b \in \{1,\ldots,B\}: i \in \beta_b \}| < | \{b \in \{1,\ldots,B\}: j \in \beta_b \}|$   for all candidates $i < j$ that appear in the same contest.

Motivated by the structure of optimal solutions for the optimization problem~\eqref{prob:robust} that is established by Proposition~\ref{prop:symmetry_withincontest}, we now describe the set of extra constraints that we add into the mixed-integer linear optimization problem~\eqref{prob:complete_transposition}. The purpose of this set of extra constraints is to ensure that every feasible solution $(\beta,\gamma,y,p)$  of the mixed-integer linear optimization problem~\eqref{prob:complete_transposition} satisfies the inequality \looseness=-1
\begin{align*}
| \{b \in \{1,\ldots,B\}: \beta_{b,i} = 1  \}| < | \{b \in \{1,\ldots,B\}: \beta_{b,j} = 1 \}|  
\end{align*}
for all candidates $i < j$ that appear in the same contest. 
We accomplish this by adding the following set of extra constraints~\eqref{prob:complete_transposition:extra_withincontest} into the   mixed-integer linear optimization problem~\eqref{prob:complete_transposition}. In the following extra constraints, we use the shorthand notation  $\mathcal{N}_c^k$ to denote the candidate with the $k$th smallest index among all candidates in the $c$th contest (with the indexing of candidates that appear in the same contest starting at index 1). 
\begin{align}
    \sum_{b \in \mathcal{B}} \left( \beta_{b,\mathcal{N}^{k+1}_c} - \beta_{b,\mathcal{N}^{k}_c} \right) \geq 1
        \quad \forall c \in \mathcal{C} \text{ and } k \in \{1,\ldots,|\mathcal{N}_c|-1\}.\label{prob:complete_transposition:extra_withincontest}
\end{align}
Indeed, we observe that the set of  extra constraints~\eqref{prob:complete_transposition:extra_withincontest} ensure that the number of votes for each candidate $i$ is strictly less than the number of votes for candidate $j > i$ whenever candidates $i$ and $j$ appear in the same contest. 

\subsection{Improvement 3: Distinct Votes for Candidates in Similar Contests} \label{appx:improvements:n_choose_k}
As our third  step in increasing the practical   efficiency of the  cutting plane method, we add a second set of extra constraints into the  mixed-integer linear optimization problem~\eqref{prob:complete_transposition}. Similarly as \S\ref{appx:improvements:symmetry}, the set of extra constraints from the present \S\ref{appx:improvements:n_choose_k} force the mixed-integer linear optimization problem~\eqref{prob:complete_transposition} to output a test deck that  proactively satisfies many of the constraints $\sigma \in \Sigma$ from the optimization problem~\eqref{prob:robust} that were not explicitly included in subset $\widehat{\Sigma}$.  
As we demonstrate through numerical experiments in Appendix~\ref{appx:additional_experiments}, the addition of this second set of extra constraints  leads to a significant decrease in the number of iterations of the cutting plane method without any meaningful increase in the computation time for solving the mixed-integer linear optimization problem~\eqref{prob:complete_transposition} in each iteration. 

To describe our second set of extra constraints,  we require some additional terminology. We begin with the following Definition~\ref{defn:equivalence}, which provides a way of referring to  contests that are similar to one another.  
\begin{definition}[Equivalence of contests] \label{defn:equivalence} 
    We say that  two contests $c,c' \in \mathcal{C}$ are equivalent, denoted by $c \equiv c'$, if and only if they satisfy  $| \mathcal{N}_c| = | \mathcal{N}_{c'}|$ and  $v_c = v_{c'}$.
\end{definition}
\noindent  In other words, we say that two contests are equivalent if and only if the contests have the same number of candidates and the same maximum number of votes. Next, recall from \S\ref{appx:improvements:symmetry} that $\mathcal{N}^k_c$ refers to the candidate with the $k$th smallest index among all candidates in contest $c$. Equipped with this notation, the second additional terminology, which is denoted below by Definition~\ref{defn:lexicographic}, provides a way to compare the votes received by candidates in two equivalent contests. 
\begin{definition}[Lexicographic ordering of contests] \label{defn:lexicographic}
We say that two contests $c,c' \in \mathcal{C}$ are lexicographically ordered with respect to a test deck $\beta_1,\ldots,\beta_B \in \mathscr{B}$, denoted by $c \prec_{\beta_1 \cdots \beta_B} c'$, if and only if $c \equiv c'$ and there exists  $k \in \{1,\ldots,| \mathcal{N}_c| \}$ that satisfies  
\begin{gather*}
\left| \left \{ b \in \{1,\ldots,B\}: \mathcal{N}^{| \mathcal{N}_{c}|}_c \in \beta_b \right \} \right| = \left| \left \{ b \in \{1,\ldots,B\}: \mathcal{N}^{| \mathcal{N}_{c}|}_{c'} \in \beta_b \right \} \right|  \\
\vdots \\
\left| \left \{ b \in \{1,\ldots,B\}: \mathcal{N}^{k+1}_c \in \beta_b \right \} \right| = \left| \left \{ b \in \{1,\ldots,B\}: \mathcal{N}^{k+1}_{c'} \in \beta_b \right \} \right|  \\
\left| \left \{ b \in \{1,\ldots,B\}: \mathcal{N}^{k}_c \in \beta_b \right \} \right| < \left| \left \{ b \in \{1,\ldots,B\}: \mathcal{N}^{k}_{c'} \in \beta_b \right \} \right|.
\end{gather*}
\end{definition}
\noindent  In other words, we say that two contests are lexicographically ordered with respect to a test deck if and only if the contests are equivalent and the number of votes received by candidates in the first contest is lexicographically less than the number of votes received by candidates in the second contest, beginning with the highest-indexed candidates in the contests.

In view of the additional terminology given by Definitions~\ref{defn:equivalence} and \ref{defn:lexicographic}, we now describe the second set of extra constraints that we add into the mixed-integer linear optimization problem~\eqref{prob:complete_transposition}.  The motivation for this second set of extra constraints is given by the following Lemma~\ref{lem:symmetry_acrosscontests} and Proposition~\ref{prop:symmetry_acrosscontests}. In Lemma~\ref{lem:symmetry_acrosscontests}, we establish a structural property that is satisfied by every test deck that satisfies the constraints of the optimization problem~\eqref{prob:robust}. Specifically, the following lemma shows that a test deck satisfies the constraints of the optimization problem~\eqref{prob:robust} only if all equivalent contests are lexicographically distinct.
    \begin{lemma} \label{lem:symmetry_acrosscontests}
If $(B,\beta_1,\ldots,\beta_B)$ is a feasible solution for the optimization problem~\eqref{prob:robust}, then  $c \prec_{\beta_1 \cdots \beta_B} c'$ or $c' \prec_{\beta_1 \cdots \beta_B} c$ for all  contests $c < c'$ that satisfy $c \equiv c'$. 
\end{lemma}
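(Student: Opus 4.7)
The plan is to prove the contrapositive by contradiction: assume $c \equiv c'$ but neither $c \prec_{\beta_1 \cdots \beta_B} c'$ nor $c' \prec_{\beta_1 \cdots \beta_B} c$, and then construct an explicit non-identity bijection $\sigma \in \Sigma$ that cannot be detected by $(\beta_1, \ldots, \beta_B)$, contradicting feasibility for \eqref{prob:robust} via Corollary~\ref{cor:diff_votes}.

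First, I would unpack what it means for two equivalent contests $c \equiv c'$ to be neither lexicographically ordered. Inspecting Definition~\ref{defn:lexicographic}, the negation of $c \prec_{\beta_1 \cdots \beta_B} c'$ and $c' \prec_{\beta_1 \cdots \beta_B} c$ (given $c \equiv c'$) forces the vote totals to match position-by-position, that is,
\[
\left|\{b : \mathcal{N}_c^k \in \beta_b\}\right| = \left|\{b : \mathcal{N}_{c'}^k \in \beta_b\}\right| \quad \text{for all } k \in \{1, \ldots, |\mathcal{N}_c|\}.
\]
(If any index $k$ had strict inequality in one direction, pick the largest such $k$ to produce one of the two lex-orderings.)

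Second, I would define $\sigma : \mathcal{N} \to \mathcal{N}$ to swap $\mathcal{N}_c^k \leftrightarrow \mathcal{N}_{c'}^k$ for every $k \in \{1, \ldots, |\mathcal{N}_c|\}$ and leave all other candidates fixed. Because $c \equiv c'$ implies $|\mathcal{N}_c| = |\mathcal{N}_{c'}|$, this is a well-defined bijection, and since $c \neq c'$ it is non-identity, so $\sigma \in \Sigma$.

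Third, I would verify that neither condition of Corollary~\ref{cor:diff_votes} holds for this $\sigma$, which shows $T^\sigma(\beta_1,\ldots,\beta_B) = T^*(\beta_1,\ldots,\beta_B)$ and yields the contradiction. For the vote-total condition, if $i \notin \mathcal{N}_c \cup \mathcal{N}_{c'}$ then $\sigma(i) = i$ trivially; otherwise $i = \mathcal{N}_c^k$ or $i = \mathcal{N}_{c'}^k$, and the established equality of vote counts gives $|\{b : i \in \beta_b\}| = |\{b : \sigma(i) \in \beta_b\}|$. For the overvote condition, notice that $\{\sigma(j) : j \in \mathcal{N}_c\} = \mathcal{N}_{c'}$ and $\{\sigma(j) : j \in \mathcal{N}_{c'}\} = \mathcal{N}_c$, while $\{\sigma(j) : j \in \mathcal{N}_{c''}\} = \mathcal{N}_{c''}$ for any other $c''$. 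Thus for each contest $\tilde{c}$ and each $b$, $|\{\sigma(j) \in \beta_b : j \in \mathcal{N}_{\tilde{c}}\}| = |\mathcal{N}_{\tilde{c}'} \cap \beta_b|$ for some contest $\tilde{c}'$ with $v_{\tilde{c}'} = v_{\tilde{c}}$ (using $c \equiv c'$), and this is $\le v_{\tilde{c}'} = v_{\tilde{c}}$ because $\beta_b \in \mathscr{B}$.

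The main obstacle is purely bookkeeping: carefully translating the negation of the lexicographic definition into position-wise equality of vote counts, and checking that the swap $\sigma$ preserves the overvote structure on every contest (not just $c$ and $c'$). Both steps are essentially immediate once the correct swap has been identified, so the substance of the lemma is the construction, not the verification.
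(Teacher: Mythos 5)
Your proposal is correct and follows essentially the same argument as the paper: negate both lexicographic orderings to get position-wise equality of vote totals, build the swap $\sigma$ exchanging $\mathcal{N}_c^k \leftrightarrow \mathcal{N}_{c'}^k$, and check via Theorem~\ref{thm:diff_votes} (equivalently Corollary~\ref{cor:diff_votes}) that $\sigma$ is undetected, contradicting feasibility. The only cosmetic difference is that you explicitly justify why the negation forces position-wise equality, a step the paper folds directly into its supposition.
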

\noindent Equipped with the above intermediary lemma, we  show in the following  Proposition~\ref{prop:symmetry_acrosscontests} that there always exists an optimal solution for the optimization problem~\eqref{prob:robust} in  which all of the equivalent contests are lexicographically ordered according to the indices of the contests. 
    \begin{proposition} \label{prop:symmetry_acrosscontests}
There exists an optimal solution for the optimization problem~\eqref{prob:robust} that satisfies $c \prec_{\beta_1 \cdots \beta_B} c'$ for all  contests $c < c'$ that satisfy $c \equiv c'$. 
\end{proposition}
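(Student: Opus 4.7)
The plan is to start from an arbitrary optimal solution $(B, \beta_1, \ldots, \beta_B)$ of \eqref{prob:robust} and construct from it another optimal solution in which equivalent contests are lexicographically ordered in agreement with their indices. The key tool will be a judiciously chosen candidate permutation $\pi : \mathcal{N} \to \mathcal{N}$ that, when applied component-wise to the test deck, produces a new test deck $\beta'_b \triangleq \pi(\beta_b) = \{\pi(i) : i \in \beta_b\}$ of the same length and which is still feasible for \eqref{prob:robust}.

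First I would specify $\pi$. By Lemma~\ref{lem:symmetry_acrosscontests}, within each equivalence class $\mathcal{E} = \{c_1 < c_2 < \cdots < c_m\}$ of contests, the relation $\prec_{\beta_1 \cdots \beta_B}$ gives a strict total order, so there is a permutation $\tau$ of $\{1,\ldots,m\}$ with $c_{\tau(1)} \prec_{\beta_1 \cdots \beta_B} c_{\tau(2)} \prec_{\beta_1 \cdots \beta_B} \cdots \prec_{\beta_1 \cdots \beta_B} c_{\tau(m)}$. Define $\pi$ on each such equivalence class by $\pi(\mathcal{N}_{c_{\tau(l)}}^k) = \mathcal{N}_{c_l}^k$ for every $l$ and every $k \in \{1,\ldots,|\mathcal{N}_{c_l}|\}$, and let $\pi$ be the identity on all remaining candidates. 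Since $c_l \equiv c_{\tau(l)}$, this is a bijection on $\mathcal{N}$; moreover, the induced contest permutation $\phi$ (with $\phi(c_{\tau(l)}) = c_l$) sends each contest to an equivalent one, so $v_{\phi(c)} = v_c$ for all $c \in \mathcal{C}$.

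Next I would verify the two needed properties of $\beta'$. \textbf{Length and admissibility:} for every $b$, $|\pi(\beta_b) \cap \mathcal{N}_c| = |\beta_b \cap \pi^{-1}(\mathcal{N}_c)| = |\beta_b \cap \mathcal{N}_{\phi^{-1}(c)}| \le v_{\phi^{-1}(c)} = v_c$, so $\beta'_b \in \mathscr{B}$. \textbf{Feasibility:} I would use Corollary~\ref{cor:diff_votes} together with the substitution $\sigma' \triangleq \pi^{-1} \sigma \pi$. The map $\sigma \mapsto \sigma'$ is a bijection of $\Sigma$ onto itself (it is an automorphism and preserves non-identity), and a direct computation shows $|\{b : i \in \beta'_b\}| = |\{b : \pi^{-1}(i) \in \beta_b\}|$ and $|\{\sigma(j) \in \beta'_b : j \in \mathcal{N}_c\}| = |\sigma'(\mathcal{N}_{\phi^{-1}(c)}) \cap \beta_b|$. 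Substituting $i \mapsto \pi(i')$ in the first identity and $c \mapsto \phi(c'')$ in the second converts the two conditions of Corollary~\ref{cor:diff_votes} applied to $(\beta', \sigma)$ into exactly the two conditions applied to $(\beta, \sigma')$. Since $\beta$ is feasible, one of the two conditions holds for $(\beta, \sigma')$, hence for $(\beta', \sigma)$, establishing feasibility of $\beta'$.

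Finally I would read off the conclusion. By construction, for each equivalence class and each $l$, the vote count of target $\mathcal{N}_{c_l}^k$ in $\beta'$ equals the vote count of $\mathcal{N}_{c_{\tau(l)}}^k$ in $\beta$; consequently $c_l \prec_{\beta'_1 \cdots \beta'_B} c_{l'}$ iff $c_{\tau(l)} \prec_{\beta_1 \cdots \beta_B} c_{\tau(l')}$, so the lexicographic order on the equivalence class in $\beta'$ coincides with the index order. Since $\beta'$ has the same length $B$ as the optimal $\beta$, it is itself optimal and satisfies the desired condition. The main obstacle in this proof is the bookkeeping in the feasibility verification, i.e.\ carefully checking that the conjugation identity $\sigma' = \pi^{-1}\sigma\pi$ correctly translates both the per-candidate vote-count condition and the overvote condition of Corollary~\ref{cor:diff_votes} under the relabeling $\beta \mapsto \pi(\beta)$; everything else is bookkeeping about permutations within equivalence classes.
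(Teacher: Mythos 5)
Your proof is correct and follows essentially the same route as the paper's: permute the labels within each equivalence class of contests so that the index order matches the lexicographic order induced by the given optimal deck, using Lemma~\ref{lem:symmetry_acrosscontests} to guarantee that $\prec_{\beta_1\cdots\beta_B}$ is a strict total order on each class. The paper simply asserts this relabeling is without loss of generality, whereas you carry it out explicitly via the candidate permutation $\pi$ and verify feasibility of the relabeled deck through the conjugation $\sigma' = \pi^{-1}\sigma\pi$ and Corollary~\ref{cor:diff_votes} --- a correct and more complete justification of the step the paper leaves implicit.
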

\noindent Hence, the above proposition implies that we can, without loss of generality, restrict the solution spaces of the optimization problems~\eqref{prob:robust} and \eqref{prob:robust-subset} by adding extra constraints which enforce that  $c \prec_{\beta_1 \cdots \beta_B} c'$   for all contests $c < c'$ that satisfy $c \equiv c'$.

Motivated by the structure of optimal solutions for the optimization problem~\eqref{prob:robust} that is established by Proposition~\ref{prop:symmetry_acrosscontests}, we now describe the second set of  extra constraints that we  add into the mixed-integer linear optimization problem~\eqref{prob:complete_transposition}. 
The purpose of this second set of extra constraints is to ensure that any feasible solution $(\beta,\gamma,y,p)$  of the mixed-integer linear optimization problem~\eqref{prob:complete_transposition} satisfies the property that for each pair of candidates $c < c'$ that satisfy $c \equiv c'$, there exists a $k \in \{1,\ldots,| \mathcal{N}_c|\}$ that satisfies
\begin{gather*}
\left| \left \{ b \in \{1,\ldots,B\}: \beta_{b,\mathcal{N}^{| \mathcal{N}_{c}|}_c} = 1 \right \} \right| = \left| \left \{ b \in \{1,\ldots,B\}: \beta_{b,\mathcal{N}^{| \mathcal{N}_{c}|}_{c'}} = 1  \right \} \right|  \\
\vdots \\
\left| \left \{ b \in \{1,\ldots,B\}: \beta_{b,\mathcal{N}^{k+1}_c} = 1 \right \} \right| = \left| \left \{ b \in \{1,\ldots,B\}: \beta_{b,\mathcal{N}^{k+1}_{c'}} = 1 \right \} \right|  \\
\left| \left \{ b \in \{1,\ldots,B\}: \beta_{b,\mathcal{N}^{k}_c} = 1  \right \} \right| < \left| \left \{ b \in \{1,\ldots,B\}: \beta_{b,\mathcal{N}^{k}_{c'}} = 1  \right \} \right|.
\end{gather*}
We accomplish this by adding the following set of extra constraints~\eqref{line:lexicographic:sequential_last}-\eqref{line:lexicographic:sequential_first} into the mixed-integer linear optimization problem~\eqref{prob:complete_transposition}.  In the following extra constraints, we use the shorthand notation  $\mathcal{I}$ to denote to the set of sequential equivalent contests,
$$\mathcal{I} \triangleq  \left\{(c, c') \in \mathcal{C} \times \mathcal{C} : \quad \begin{aligned}
 &c < c', \; c \equiv c', \textnormal{ and there does not exist a} \\
 &\textnormal{contest } \bar{c}  \in \left\{ c + 1,\ldots,c' - 1 \right \} \textnormal{ that satisfies } c \equiv \bar{c}
 \end{aligned} \right\},$$
  and we use extra binary decision variables  $\lambda_{c,c'}^k \in \{0,1\}$ which are added to the mixed-integer linear optimization problem~\eqref{prob:complete_transposition} for each $(c,c') \in \mathcal{I}$ and  $k \in \{1,\ldots,| \mathcal{N}_c| \}$.
  \begin{subequations}
 \begin{align}
   \lambda_{c,c'}^{| \mathcal{N}_c|} &\le \sum_{b \in \mathcal{B}} \beta_{b,\mathcal{N}^{| \mathcal{N}_c|}_{c'}} -  \sum_{b \in \mathcal{B}} \beta_{b,\mathcal{N}^{| \mathcal{N}_c|}_{c}}  && \forall (c,c') \in \mathcal{I} \label{line:lexicographic:sequential_last}\\
 \lambda_{c,c'}^k &\le B \lambda_{c,c'}^{k+1} + \sum_{b \in \mathcal{B}} \beta_{b,\mathcal{N}^{k}_{c'}} -  \sum_{b \in \mathcal{B}} \beta_{b,\mathcal{N}^{k}_{c}}  && \forall (c,c') \in \mathcal{I}, \; k \in\left \{1,\ldots, | \mathcal{N}_c|-1 \right \}  \label{line:lexicographic:sequential}\\
    \lambda_{c,c'}^{1} &=1 &&  \forall (c,c') \in \mathcal{I}  \label{line:lexicographic:sequential_first}.
 \end{align}
 \end{subequations}
 Indeed, constraints~\eqref{line:lexicographic:sequential_last} and \eqref{line:lexicographic:sequential} ensure for each $(c,c') \in \mathcal{I}$ that the equality $\lambda^k_{c,c'} = 1$  can be satisfied if and only if there exists $k' \in \{k,\ldots,| \mathcal{N}_c| \}$ that satisfies the equality $ \sum_{b \in \mathcal{B}} \beta_{b,\mathcal{N}^{k''}_{c}} = \sum_{b \in \mathcal{B}} 
 \beta_{b,\mathcal{N}^{k''}_{c'}}$  for all $k'' \in \{k'+1,\ldots,| \mathcal{N}_c| \}$ as well as satisfies the strict inequality $ \sum_{b \in \mathcal{B}} \beta_{b,\mathcal{N}^{k'}_{c}} < \sum_{b \in \mathcal{B}} 
 \beta_{b,\mathcal{N}^{k'}_{c'}}$. Hence, constraint~\eqref{line:lexicographic:sequential_first} ensures that there exists $k' \in \{1,\ldots,| \mathcal{N}_c| \}$ that satisfies the equality $ \sum_{b \in \mathcal{B}} \beta_{b,\mathcal{N}^{k''}_{c}} = \sum_{b \in \mathcal{B}} 
 \beta_{b,\mathcal{N}^{k''}_{c'}}$  for all $k'' \in \{k'+1,\ldots,| \mathcal{N}_c| \}$ as well as satisfies the strict inequality $ \sum_{b \in \mathcal{B}} \beta_{b,\mathcal{N}^{k'}_{c}} < \sum_{b \in \mathcal{B}} 
 \beta_{b,\mathcal{N}^{k'}_{c'}}$.

\subsection{Improvement 4: Heuristic for Finding Good Cuts} \label{appx:improvements:swap}
As our fourth  step in increasing the practical efficiency of the cutting plane method, we propose a modification to the objective function of the  mixed-integer linear optimization problem~\eqref{prob:oracle_mip} from \S\ref{appx:mip:oracle}. The purpose of this modification is to guide the mixed-integer optimization problem~\eqref{prob:oracle_mip} to choosing non-identity bijections in each iteration that eliminate large numbers of feasible test decks from the mixed-integer linear optimization problem~\eqref{prob:complete_transposition}. We demonstrate through numerical experiments in Appendix~\ref{appx:additional_experiments} that the proposed modification to the objective function of the mixed-integer linear optimization problem~\eqref{prob:oracle_mip}  can significantly decrease the number of iterations of the cutting plane method. 

To motivate our subsequent developments in \S\ref{appx:improvements:swap}, we begin by presenting a framework for analyzing the quality of the non-identity bijections $\sigma \in \Sigma$ that are added  to the set $\widehat{\Sigma}$  at the end of each iteration of the cutting plane method. Indeed, we recall that the practical efficiency of the cutting plane method from \S\ref{appx:cutting} depends on  the number of iterations until an optimal test deck for the optimization problem~\eqref{prob:robust} is obtained. The number of iterations of the cutting plane method, in turn, depends on whether the optimization problem~\eqref{prob:oracle} in each iteration of the cutting plane method yields a non-identity bijection $\sigma \in \Sigma$  that eliminates a large number of feasible test decks from the optimization problem~\eqref{prob:robust-subset}. 
Letting  $\mathscr{F}(\widehat{\Sigma})$ denote the set of test decks that are feasible for the optimization problem~\eqref{prob:robust-subset}, we will henceforth say (informally)  that the optimization problem~\eqref{prob:oracle} yields a high-quality non-identity bijection $\sigma \in \Sigma$ if  the set of feasible test decks in the next iteration $\mathscr{F}(\widehat{\Sigma} \cup \{\sigma \})$ is much smaller than the set of feasible test decks in the current iteration $\mathscr{F}(\widehat{\Sigma})$.

A priori, it might appear difficult to determine whether a non-identity bijection $\sigma \in \Sigma$  will eliminate a large number of feasible test decks from the optimization problem~\eqref{prob:robust-subset}. Nonetheless, we  demonstrate below that high-quality non-identity bijections $\sigma \in \Sigma$ always have a  structural property that we refer to as  \emph{minimal}. To define this structural property, consider any given non-identity bijection $\sigma \in \Sigma$, and let $\mathscr{G}^\sigma \equiv (\mathscr{V}^\sigma,\mathscr{E}^\sigma)$ denote the undirected graph that is generated by that non-identity bijection. The set of vertices of this undirected graph is defined as the subset of contests that include a candidate that is swapped by the non-identity bijection $\sigma$, 
\begin{align*}
\mathscr{V}^\sigma \triangleq \left \{c \in \mathcal{C}: \textnormal{ there exists } i \in \mathcal{N}_c \textnormal{ that satisfies } \sigma(i) \neq i  \right \},
\end{align*}
and the set of edges of this undirected graph is defined as  the pairs of contests containing candidates that are swapped by the non-identity bijection $\sigma$, 
\begin{align*}
\mathscr{E}^\sigma \triangleq \left \{(c,c') \in \mathcal{C} \times \mathcal{C}:  \begin{gathered} \textnormal{ there exist } i \in \mathcal{N}_c \textnormal{ and } i' \in \mathcal{N}_{c'} \textnormal{ that satisfy the} \\ 
\textnormal{equality } \sigma(i) = i' \textnormal{ or satisfy the equality } \sigma(i') = i \end{gathered}  \right \}.
\end{align*}
We recall from graph theory that the vertices of an undirected graph can always be partitioned into a unique collection of connected components, and we henceforth let   $K^\sigma$ denote the number of connected components and let $\mathscr{K}^\sigma_1,\ldots,\mathscr{K}^\sigma_{K^\sigma} \subseteq \mathscr{V}^\sigma$ denote the connected components of the undirected graph  $\mathscr{G}^\sigma \equiv (\mathscr{V}^\sigma,\mathscr{E}^\sigma)$.\footnote{It is a straightforward exercise to show that $\mathscr{K}^\sigma_1,\ldots,\mathscr{K}^\sigma_{K^\sigma} \subseteq \mathscr{V}^\sigma$ are the connected components of the undirected graph $\mathscr{G}^\sigma \equiv (\mathscr{V}^\sigma,\mathscr{E}^\sigma)$ if and only if (1) $\mathscr{K}^\sigma_1,\ldots,\mathscr{K}^\sigma_{K^\sigma}$ are disjoint, (2) the union of $\mathscr{K}^\sigma_1,\ldots,\mathscr{K}^\sigma_{K^\sigma}$ is equal to $\mathscr{V}^\sigma$, and (3) $(c,c') \in \mathscr{E}^\sigma$ implies that there exists   $k \in \{1,\ldots,|K^\sigma| \}$ that satisfies $c,c' \in \mathscr{K}^\sigma_k$.   }  Equipped with this terminology, we are ready to define the structural property of non-identity bijections that will form the basis of our subsequent discussions:%
\begin{definition}[Minimal] \label{defn:minimal} 
We say that a non-identity bijection $\sigma \in \Sigma$ is minimal if and only if the number of connected components of  $\mathscr{G}^\sigma \equiv (\mathscr{V}^\sigma,\mathscr{E}^\sigma)$ satisfies $K^\sigma = 1$. 
\end{definition}

Figure~\ref{fig:minimal} provides an illustration of Definition~\ref{defn:minimal} by showing an example of a non-identity bijection that is not minimal. Specifically, Figure~\ref{fig:nonminimal} presents a non-identity bijection $\sigma$ in a ballot style with five contests. Figure~\ref{fig:nonminimal_graph} shows the undirected graph $\mathscr{G}^\sigma \equiv (\mathscr{V}^\sigma,\mathscr{E}^\sigma)$ corresponding to the non-identity bijection $\sigma$. The undirected graph in Figure~\ref{fig:nonminimal_graph} has two connected components, which implies that the non-identity bijection from Figure~\ref{fig:nonminimal}  is not minimal.  

 \begin{figure}[t]
\centering 
\subfloat[]{%
\includegraphics[width=0.4\linewidth]{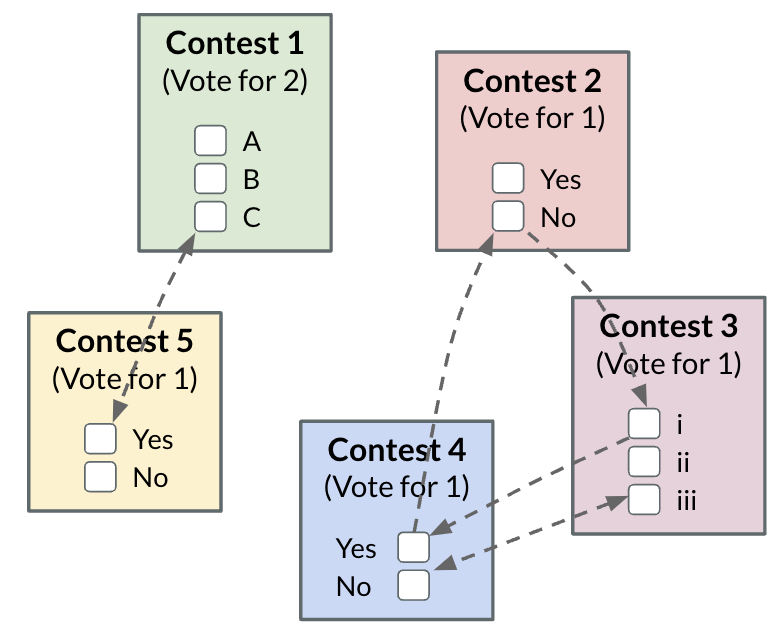}
\label{fig:nonminimal}
}\hspace{0.5in}
\subfloat[]{%
\includegraphics[width=0.35\linewidth]{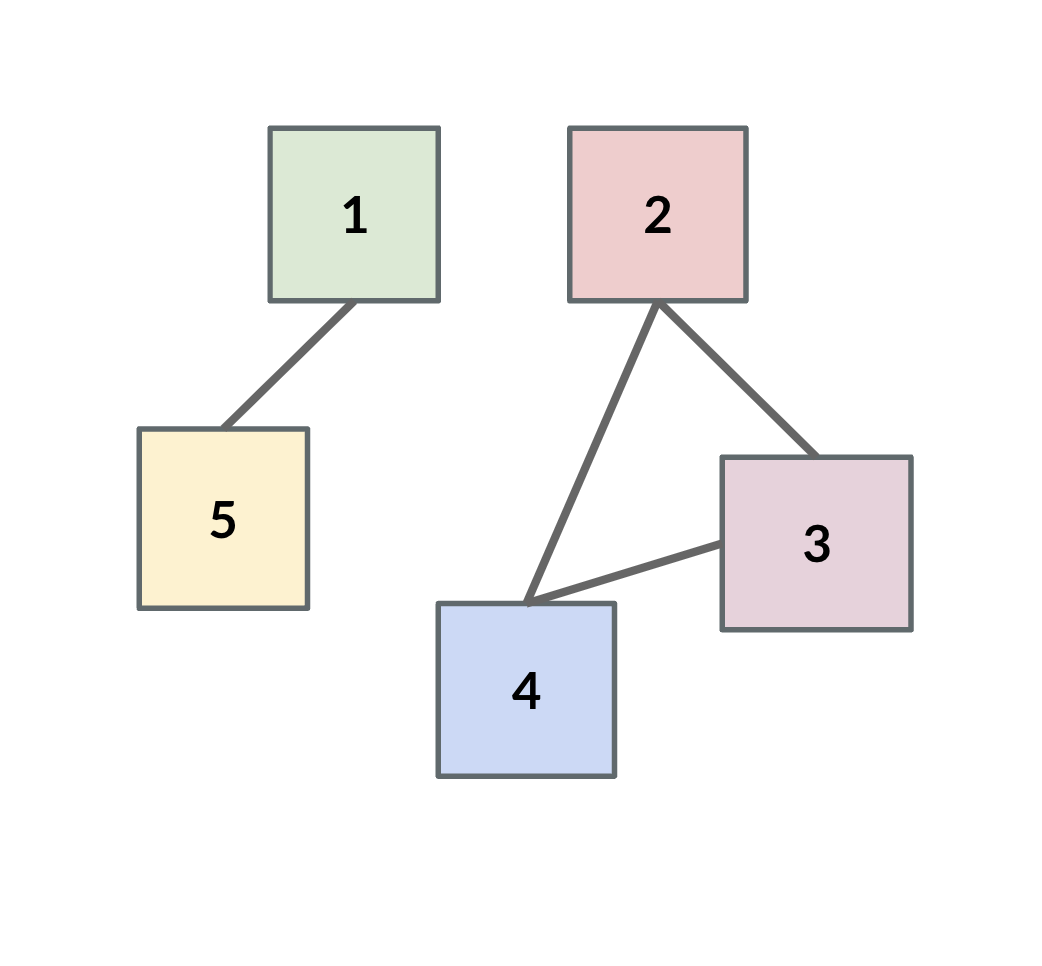}
\label{fig:nonminimal_graph}
}

\caption{Example of a non-identity bijection that is not minimal in a ballot style with five contests. }
 \label{fig:minimal}
\end{figure}
Our main result of \S\ref{appx:improvements:swap}, which is presented below as Theorem~\ref{thm:minimal}, establishes the significance of   non-identity bijections that are minimal. In particular, the following theorem shows that there always exists a feasible solution for the optimization problem~\eqref{prob:oracle} that is minimal. More importantly, the following Theorem~\ref{thm:minimal} shows that minimal non-identity bijections are always preferred to non-minimal non-identity bijections from the perspective of  eliminating the greatest number of feasible test decks from the optimization problem~\eqref{prob:robust-subset}.
\begin{theorem} \label{thm:minimal}
Let $\sigma \in \Sigma$ denote a feasible solution for the optimization problem~\eqref{prob:oracle}. For each $k \in \{1,\ldots,K^\sigma \}$, let $\sigma_k: \mathcal{N} \to \mathcal{N}$ be defined for each $c \in \mathcal{C}$ and $i \in \mathcal{N}_c$ by
\begin{align*}
\sigma_k(i) &\triangleq \begin{cases}
    \sigma(i),&\textnormal{if } c \in \mathscr{K}^\sigma_k,\\
    i,&\textnormal{if } c \notin \mathscr{K}^\sigma_k. 
\end{cases}
\end{align*}
Then $\sigma_1,\ldots,\sigma_{K^\sigma}$ are feasible solutions for  the optimization problem~\eqref{prob:oracle} and 
\begin{align}
    \bigcup_{k =1}^{K^\sigma} \mathscr{F} \left (\widehat{\Sigma} \cup \{\sigma_k \} \right) =  \mathscr{F} \left (\widehat{\Sigma} \cup \{\sigma \} \right). \label{line:sigma_breakdown}
\end{align}
\end{theorem}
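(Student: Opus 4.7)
The plan is to establish the two claims of Theorem~\ref{thm:minimal} in sequence: first that each $\sigma_k$ is a legitimate element of $\Sigma$ satisfying the equality constraint in \eqref{prob:oracle}, then that the feasibility sets decompose as in \eqref{line:sigma_breakdown}. Throughout, I would lean heavily on Theorem~\ref{thm:diff_votes}, which gives a concrete two-condition characterization of when $T^{\sigma'}$ and $T^*$ disagree, and on the fact that connected components of $\mathscr{G}^\sigma$ isolate all of $\sigma$'s swap activity.

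First, I would verify that each $\sigma_k$ is a non-identity bijection. The key observation is that $\sigma$ maps the set $U_k \triangleq \bigcup_{c \in \mathscr{K}^\sigma_k} \mathcal{N}_c$ into itself: if $i \in \mathcal{N}_c$ with $c \in \mathscr{K}^\sigma_k$ and $\sigma(i) \in \mathcal{N}_{c'}$, then either $\sigma(i) = i$ (so $c' = c \in \mathscr{K}^\sigma_k$) or $(c,c') \in \mathscr{E}^\sigma$ (so $c'$ lies in the same connected component as $c$). Since $\sigma$ is a bijection on $\mathcal{N}$ and maps $U_k$ into $U_k$, it must be a bijection on $U_k$. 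Outside $U_k$, $\sigma_k$ is the identity, so $\sigma_k$ is a bijection overall. Non-identity follows because $\mathscr{K}^\sigma_k \subseteq \mathscr{V}^\sigma$ is nonempty, so some contest in $\mathscr{K}^\sigma_k$ contains an $i$ with $\sigma(i) \neq i$, giving $\sigma_k(i) \neq i$.

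Next, I would show $T^{\sigma_k}(\beta_1,\ldots,\beta_B) = T^*(\beta_1,\ldots,\beta_B)$ by verifying the negations of both conditions of Theorem~\ref{thm:diff_votes}. For the vote-total condition: since $T^\sigma = T^*$, we have $|\{b: i \in \beta_b\}| = |\{b: \sigma(i) \in \beta_b\}|$ for every $i$, and since $\sigma_k(i) \in \{i, \sigma(i)\}$ in either case the equality carries over to $\sigma_k$. For the overvote condition, I would split on whether $c \in \mathscr{K}^\sigma_k$: if yes, then $\sigma_k$ agrees with $\sigma$ on $\mathcal{N}_c$ and the bound follows from $T^\sigma = T^*$; if no, then $\sigma_k$ is the identity on $\mathcal{N}_c$, so the count reduces to $|\mathcal{N}_c \cap \beta_b| \le v_c$ because $\beta_b \in \mathscr{B}$.

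Finally, I would prove \eqref{line:sigma_breakdown} by two containments, each amounting to a statement about detection. For $\supseteq$, suppose a deck detects $\sigma_k$ via Theorem~\ref{thm:diff_votes}. If it is the vote-total condition at some $i$ with $\sigma_k(i) \neq i$, then $i$ lies in a contest in $\mathscr{K}^\sigma_k$ and $\sigma_k(i) = \sigma(i)$, so $\sigma$ is detected by the same condition. If it is the overvote condition at some $(b,c)$, then $c$ must be in $\mathscr{K}^\sigma_k$ (otherwise $\sigma_k$ acts as identity on $\mathcal{N}_c$ and the count would be bounded by $v_c$), so again $\sigma_k$ and $\sigma$ agree on $\mathcal{N}_c$ and the overvote condition transfers to $\sigma$. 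For $\subseteq$, suppose a deck detects $\sigma$. If via the vote-total condition at $i$, then $\sigma(i) \neq i$ forces $i$'s contest into some $\mathscr{K}^\sigma_k$, and by construction $\sigma_k(i) = \sigma(i)$, so $\sigma_k$ is detected. The subtler case, which I expect to be the main obstacle, is the overvote condition: if $|\{\sigma(j) \in \beta_b : j \in \mathcal{N}_c\}| > v_c$, I need to argue $c \in \mathscr{V}^\sigma$, for then $c \in \mathscr{K}^\sigma_k$ for some $k$ and the overvote condition transfers to $\sigma_k$. This follows by contradiction: if every $j \in \mathcal{N}_c$ had $\sigma(j) = j$, then the count would equal $|\mathcal{N}_c \cap \beta_b| \le v_c$, contradicting the assumed overvote. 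Putting these together establishes the set equality.
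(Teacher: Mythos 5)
Your proof is correct and follows essentially the same route as the paper's: decompose $\sigma$ according to the connected components of $\mathscr{G}^\sigma$ and observe that both the vote-total and overvote conditions of Theorem~\ref{thm:diff_votes} localize to contests within a single component, so that a deck detects $\sigma$ if and only if it detects some $\sigma_k$. (The paper packages this localization as the pointwise identity $T^{\sigma_k}_i = T^{\sigma}_i$ for $c \in \mathscr{K}^\sigma_k$ and $T^{\sigma_k}_i = T^*_i$ otherwise, followed by set algebra; also note that your two containment arguments have their $\subseteq$ and $\supseteq$ labels interchanged, though both directions are in fact established.)
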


 To appreciate the significance of Theorem~\ref{thm:minimal}, let us make several observations. First, we observe that each non-identity bijection $\sigma_k \in \Sigma$ can be interpreted as a restriction of the non-identity bijection $\sigma \in \Sigma$ that only affects the candidates from contests in the connected component $\mathscr{K}^\sigma_k$. Therefore, it follows that the number of connected components in the undirected graph $\mathscr{G}^{\sigma_k} \equiv (\mathscr{V}^{\sigma_k},\mathscr{E}^{\sigma_k})$ corresponding to $\sigma_k$ is equal to one, which implies that each of the non-identity bijections $\sigma_1,\ldots,\sigma_{K^\sigma}  \in \Sigma$ is minimal. Second, we observe from line~\eqref{line:sigma_breakdown} that the inclusion $ \mathscr{F}  (\widehat{\Sigma} \cup \{\sigma_k \} ) \subseteq  \mathscr{F}  (\widehat{\Sigma} \cup \{\sigma \} )$  holds for each of the connected components $k \in \{1,\ldots,K^\sigma \}$. Hence, Theorem~\ref{thm:minimal} implies that each of the non-identity bijections $\sigma_1,\ldots,\sigma_{K^\sigma}$ is  preferred to $\sigma$ from the perspective of eliminating feasible test decks from the optimization problem~\eqref{prob:robust-subset}.

Thus motivated, we now turn to  the algorithmic question of how to find a minimal non-identity bijection that is feasible for the optimization problem~\eqref{prob:oracle}. In the following Theorem~\ref{thm:minimal_reform}, we show that such a minimal non-identity bijection can be found though making a simple modification to the objective function of the mixed-integer linear optimization problem~\eqref{prob:oracle_mip}.
\begin{theorem} \label{thm:minimal_reform}
Consider the following mixed-integer linear optimization problem:

    \begin{equation} \label{prob:oracle_mip_modified}
    \begin{aligned}
    \underset{x \in \{0,1\}^{\mathcal{N} \times \mathcal{N}}}{\textnormal{minimize}} \quad\quad & \sum_{i,j \in \mathcal{N}: i \neq j} x_{i,j}  \\
    \textnormal{subject to} \quad\quad  & \eqref{prob:oracle_mip:bijective_1}, \eqref{prob:oracle_mip:bijective_2}, \eqref{prob:oracle_mip:dishonest}, \eqref{prob:oracle_mip:overvote}, \eqref{prob:oracle_mip:vote_total}.
    \end{aligned}
    \end{equation}
Let $x \in \{0,1\}^{\mathcal{N} \times \mathcal{N}}$ be an optimal solution of the mixed-integer linear optimization problem~\eqref{prob:oracle_mip_modified}, and let  $\sigma: \mathcal{N} \to \mathcal{N}$ be the function that satisfies  the equality $\sigma(i) = j$ if and only if $x_{i,j} = 1$ for all $i,j \in \mathcal{N}$. Then $\sigma$ is a minimal non-identity bijection that is feasible for the optimization problem~\eqref{prob:oracle}. 
\end{theorem}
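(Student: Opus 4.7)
The key observation is that the new objective $\sum_{i,j \in \mathcal{N}: i \neq j} x_{i,j}$ exactly counts the number of non-fixed points of the induced bijection $\sigma$, i.e., $|\{i \in \mathcal{N} : \sigma(i) \neq i\}|$. Intuitively, minimizing this pushes the optimizer to produce the ``smallest'' non-identity bijection satisfying the \eqref{prob:oracle} constraints, which should correspond to a single connected component of $\mathscr{G}^\sigma$. My proof will proceed in two stages: (i) verify that every optimal $x$ corresponds to a feasible non-identity bijection $\sigma$ for \eqref{prob:oracle}, and (ii) argue by contradiction via Theorem~\ref{thm:minimal} that this $\sigma$ must be minimal.

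\textbf{Stage (i): feasibility.} First I would note that \eqref{prob:oracle_mip_modified} shares exactly the constraints \eqref{prob:oracle_mip:bijective_1}, \eqref{prob:oracle_mip:bijective_2}, \eqref{prob:oracle_mip:dishonest}, \eqref{prob:oracle_mip:overvote}, and \eqref{prob:oracle_mip:vote_total} with \eqref{prob:oracle_mip}. Moreover, \eqref{prob:oracle_mip_modified} is always feasible (since \eqref{prob:oracle_mip} is, as the test deck $(\beta_1,\ldots,\beta_B)$ under consideration comes from an iteration of the cutting plane method where some $\sigma \in \Sigma$ still violates feasibility, guaranteeing existence of a feasible $x$). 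Therefore, by the analysis in \S\ref{appx:mip:oracle}, the function $\sigma : \mathcal{N} \to \mathcal{N}$ defined by $\sigma(i)=j \iff x_{i,j}=1$ is a non-identity bijection feasible for \eqref{prob:oracle}.

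\textbf{Stage (ii): minimality via contradiction.} Suppose, for contradiction, that $K^\sigma \geq 2$. Applying Theorem~\ref{thm:minimal} to $\sigma$ yields non-identity bijections $\sigma_1,\ldots,\sigma_{K^\sigma} \in \Sigma$, each feasible for \eqref{prob:oracle}, where $\sigma_k(i) = \sigma(i)$ if $i$ lies in a contest of $\mathscr{K}^\sigma_k$ and $\sigma_k(i) = i$ otherwise. In particular, the corresponding binary encoding $x^{(k)}$ is feasible for \eqref{prob:oracle_mip_modified}. I now count non-fixed points: a candidate $i \in \mathcal{N}_c$ is a non-fixed point of $\sigma$ only if $c \in \mathscr{V}^\sigma = \bigsqcup_k \mathscr{K}^\sigma_k$, and the non-fixed points of $\sigma_k$ are precisely those non-fixed points of $\sigma$ lying in contests of $\mathscr{K}^\sigma_k$. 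Since each connected component $\mathscr{K}^\sigma_k$ contains by definition at least one contest with at least one non-fixed point, and since $K^\sigma \geq 2$, each $\sigma_k$ has strictly fewer non-fixed points than $\sigma$. Equivalently, the objective value of $x^{(k)}$ in \eqref{prob:oracle_mip_modified} is strictly smaller than that of $x$, contradicting optimality of $x$. Hence $K^\sigma = 1$ and $\sigma$ is minimal.

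\textbf{Anticipated obstacle.} The proof itself is short once Theorem~\ref{thm:minimal} is in hand; the only real care required is the bookkeeping in Stage (ii) to justify that each connected component strictly contributes to the objective. The formal verification that $\sigma_k$ is actually a bijection (i.e., that the image of candidates in $\bigcup_{c \in \mathscr{K}^\sigma_k} \mathcal{N}_c$ under $\sigma$ lies in the same set) follows from the definition of $\mathscr{E}^\sigma$ and is already handled inside Theorem~\ref{thm:minimal}, so I will simply cite it rather than re-derive it.
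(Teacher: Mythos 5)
Your proof is correct and follows essentially the same route as the paper's: feasibility of $\sigma$ for \eqref{prob:oracle} comes from the shared constraints as in \S\ref{appx:mip:oracle}, and minimality follows by supposing $K^\sigma\ge 2$, invoking Theorem~\ref{thm:minimal} to produce the component restrictions $\sigma_k$, and observing that the objective counts non-fixed points so that each $\sigma_k$ yields a strictly smaller feasible objective value, contradicting optimality of $x$. (The paper exhibits only $\sigma_1$ as the improving solution rather than all $\sigma_k$, but this is immaterial; your aside that \eqref{prob:oracle_mip_modified} is "always feasible" is unnecessary and not quite right in general, since infeasibility of \eqref{prob:oracle} is the cutting plane method's termination condition, but the theorem's hypothesis already assumes an optimal solution exists.)
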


\noindent We observe that the mixed-integer linear optimization problems~\eqref{prob:oracle_mip} and \eqref{prob:oracle_mip_modified} have the same decision variables and constraints. Hence, Theorem~\ref{thm:minimal_reform}  shows that obtaining a minimal non-identity bijection that is feasible for the optimization problem~\eqref{prob:oracle} can be achieved by simply modifying the objective function of the mixed-integer linear optimization problem~\eqref{prob:oracle_mip}.

\subsection{Improvement 5: Combining Noncompetitive Contests} \label{appx:improvements:noncompetitive}
As our fifth  step in increasing the practical efficiency of the cutting plane method, we show that noncompetitive contests can be combined into one without loss of generality.  By combining these contests,  we demonstrate through numerical experiments in Appendix~\ref{appx:additional_experiments} that the number of iterations of the cutting plane method can be significantly decreased. 

We begin by introducing the terminology and notation that will be used throughout \S\ref{appx:improvements:noncompetitive}. Indeed, let the original ballot style be denoted by the tuple $({\mathcal{N}}, {\mathcal{C}}, \{ {\mathcal{N}}_c \}_{c \in {\mathcal{C}}}, \{{v}_c\}_{c \in {\mathcal{C}}}). $ For the original ballot style, we recall from  \S\ref{sec:math:notation} that a contest $c \in \mathcal{C}$ is noncompetitive if and only if the number of candidates in the contest $| \mathcal{N}_c|$ is equal to the maximum number of votes $v_c$. We represent the ballot style in which all of the noncompetitive contests from the original ballot style are combined into a single contest by the tuple $({\mathcal{N}}, \widetilde{\mathcal{C}}, \{ \widetilde{\mathcal{N}}_c \}_{c \in \widetilde{\mathcal{C}}}, \{\widetilde{v}_c\}_{c \in \widetilde{\mathcal{C}}})$, where
\begin{align*}
    \widetilde{\mathcal{C}} &\triangleq \{0\} \cup  \left \{ c \in \mathcal{C}:  | \mathcal{N}_c| > v_c  \right \};\\
    \widetilde{\mathcal{N}}_c &\triangleq \begin{cases}
    \mathcal{N}_c,&\text{if } c \neq 0,\\
    \bigcup \limits_{c' \in \mathcal{C}:  | \mathcal{N}_{c'}|  = v_{c'}} \mathcal{N}_{c'},&\text{if } c = 0;
    \end{cases}\\
    \widetilde{v}_c &\triangleq 
    \begin{cases}
v_c,&\text{if } c \neq 0 ,\\
\left| \bigcup \limits_{c' \in \mathcal{C}:  | \mathcal{N}_{c'}| = v_{c'}} \mathcal{N}_{c'} \right|  ,&\text{if } c = 0.
    \end{cases}
\end{align*}
We observe in the new ballot style that  contest $0$ denotes the contest that is constructed by combining all of the noncompetitive contests from the original ballot style. Finally, for each non-identity bijection $\sigma \in \Sigma$,  let the output of a voting machine in the new ballot style whose mapping from candidates to targets is the bijection $\sigma$ be given for each candidate $i \in \widetilde{\mathcal{N}}_c$ in each contest $c \in \widetilde{\mathcal{C}}$ by
\begin{align*}
        \widetilde{T}^\sigma_i(\beta_1,\ldots,\beta_B) \triangleq \sum_{b=1}^B  \mathbb{I}\left \{   \sigma(i) \in \beta_b  \textnormal{ and } \left| \left \{ \sigma(j) \in \beta_b: j \in \widetilde{\mathcal{N}}_c \right \}  \right| \leq \widetilde{v}_c \right \}. 
\end{align*}

Equipped with the above notation, we now present the main  result of \S\ref{appx:improvements:noncompetitive}. This main result, presented below as Proposition~\ref{prop:noncompetitive}, establishes that the set of optimal solutions for the optimization problem~\eqref{prob:robust} for any given original ballot style will not change if all of the noncompetitive contests in the original ballot style are combined into a single contest. 
\begin{proposition} \label{prop:noncompetitive}
Consider any original ballot style $({\mathcal{N}}, {\mathcal{C}}, \{ {\mathcal{N}}_c \}_{c \in {\mathcal{C}}}, \{{v}_c\}_{c \in {\mathcal{C}}})$, and let the ballot style in which all of the noncompetitive contests from the original ballot style are combined into a single contest be denoted by  $({\mathcal{N}}, \widetilde{\mathcal{C}}, \{ \widetilde{\mathcal{N}}_c \}_{c \in \widetilde{\mathcal{C}}}, \{\widetilde{v}_c\}_{c \in \widetilde{\mathcal{C}}})$.  Then the following equality holds for all $B \in \N$,    $\beta_1,\ldots,\beta_B  \subseteq \mathcal{N}$, $\sigma \in \Sigma \cup \{*\}$, and $i \in \mathcal{N}$: 
\begin{align*}
T^\sigma_i(\beta_1,\ldots,\beta_B) = \widetilde{T}_i^\sigma(\beta_1,\ldots,\beta_B). 
\end{align*}
\end{proposition}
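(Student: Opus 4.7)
The plan is to prove the equality $T^\sigma_i(\beta_1,\ldots,\beta_B) = \widetilde{T}^\sigma_i(\beta_1,\ldots,\beta_B)$ by a short case analysis on whether the contest containing candidate $i$ in the original ballot style is competitive or noncompetitive. The single observation driving both cases is the following claim: for every bijection $\sigma \in \Sigma \cup \{*\}$, every noncompetitive contest $c'$ (i.e., $|\mathcal{N}_{c'}| = v_{c'}$), and every $b$, the overvote indicator in contest $c'$ always evaluates to true, namely $|\{\sigma(j) \in \beta_b : j \in \mathcal{N}_{c'}\}| \le v_{c'}$. This holds because $\sigma$ is injective, so the image set $\{\sigma(j) : j \in \mathcal{N}_{c'}\}$ has cardinality exactly $|\mathcal{N}_{c'}|$, and therefore its intersection with $\beta_b$ has at most $|\mathcal{N}_{c'}| = v_{c'}$ elements. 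The identical argument applied to the combined contest in the new ballot style shows that the overvote indicator is always true there too, since $\widetilde{v}_0 = |\widetilde{\mathcal{N}}_0|$ by construction.

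With this claim in hand, I first handle the case where $i$ lies in a competitive contest $c \in \mathcal{C}$ (meaning $|\mathcal{N}_c| > v_c$). By the definitions of $\widetilde{\mathcal{C}}$, $\widetilde{\mathcal{N}}_c$, and $\widetilde{v}_c$, we have $c \in \widetilde{\mathcal{C}}$ with $\widetilde{\mathcal{N}}_c = \mathcal{N}_c$ and $\widetilde{v}_c = v_c$. Substituting these identities into the summand of $\widetilde{T}^\sigma_i$ recovers the summand of $T^\sigma_i$ term-by-term, yielding the desired equality.

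Next I handle the case where $i$ lies in a noncompetitive contest $c \in \mathcal{C}$ (meaning $|\mathcal{N}_c| = v_c$), which puts $i \in \widetilde{\mathcal{N}}_0$ in the new ballot style. Applying the claim above to the contest $c$ in the original formula for $T^\sigma_i$ lets the overvote indicator be dropped entirely, so $T^\sigma_i(\beta_1,\ldots,\beta_B)$ collapses to $\sum_{b=1}^B \mathbb{I}\{\sigma(i) \in \beta_b\}$. Applying the claim to the combined contest $0$ in the new ballot style, the corresponding overvote indicator in $\widetilde{T}^\sigma_i$ also drops, and $\widetilde{T}^\sigma_i(\beta_1,\ldots,\beta_B)$ likewise collapses to $\sum_{b=1}^B \mathbb{I}\{\sigma(i) \in \beta_b\}$. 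The two sides agree.

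The main obstacle here is not really conceptual: it is just being careful about the notational bookkeeping across the two ballot styles. The substantive point is the one-line injectivity argument showing that noncompetitive contests can never be overvoted under any bijective mapping, and once that is isolated, combining them into a single noncompetitive bucket is completely transparent to the voting machine output. Note that the proposition does not require $\beta_b \in \mathscr{B}$; this is fine because the claim above does not rely on the input deck being overvote-free, and for competitive contests the overvote indicators on both sides match identically regardless of whether any of them trigger.
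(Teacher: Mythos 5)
Your proof is correct and takes essentially the same route as the paper's: a case split on whether candidate $i$'s contest is competitive, driven by the observation that a noncompetitive contest (and likewise the combined contest $0$, since $\widetilde{v}_0 = |\widetilde{\mathcal{N}}_0|$) can never be interpreted as overvoted under any bijection, so the overvote indicators drop on both sides while the competitive case is a direct notational match. No gaps.
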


We conclude \S\ref{appx:improvements:noncompetitive} by discussing why  combining the noncompetitive contests into a single contest can decrease the number of iterations of the cutting plane method. In essence,  the value of combining the noncompetitive contests stems from the second improvement to the cutting plane method that is proposed in \S\ref{appx:improvements:symmetry}. Indeed, we recall from \S\ref{appx:improvements:symmetry} that our second improvement to the cutting plane method  consisted  of adding the following set of extra constraints into the optimization problem~\eqref{prob:robust-subset}: 
\begin{align}
\begin{aligned}
| \{b \in \{1,\ldots,B\}: i \in \beta_{b}   \}| < | \{b \in \{1,\ldots,B\}: j \in  \beta_{b}  \}| \quad \forall c \in \mathcal{C}, i,j \in \mathcal{N}_c: i < j.&
\end{aligned}
\label{line:repeat_symmetry}
\end{align} 
The constraints~\eqref{line:repeat_symmetry} ensure that the optimization problem~\eqref{prob:robust-subset} in each iteration of the cutting plane method outputs a test deck in which candidates in the same contest receive a strictly increasing number of votes. 
In view of our recollection of the second improvement from  \S\ref{appx:improvements:symmetry}, we conclude the present \S\ref{appx:improvements:noncompetitive} with an example which shows that combining the noncompetitive contests into a single contest can  decrease the number of iterations of the cutting plane method.

\begin{example} \label{example:noncompetitive}
Consider an original ballot style consisting of two contests, where the first contest is defined by the equalities $\mathcal{N}_1 = \{1\}$ and $v_1 = 1$, and the second contest is defined by the equalities $\mathcal{N}_2 = \{2,3 \}$ and $v_2 = 2$. We observe that each of the two contests is a noncompetitive contest.  

We first analyze the number of iterations of the cutting plane method in the case where the noncompetitive contests are combined into a single contest. Indeed, if the noncompetitive contests are combined into a single contest, then we observe that the new ballot style $({\mathcal{N}}, \widetilde{\mathcal{C}}, \{ \widetilde{\mathcal{N}}_c \}_{c \in \widetilde{\mathcal{C}}}, \{\widetilde{v}_c\}_{c \in \widetilde{\mathcal{C}}})$ consists of a single contest, $\widetilde{\mathcal{C}} = \{0 \}$, wherein the candidates in that contest are given by $\widetilde{\mathcal{N}}_0 = \{1,2,3\}$ and the maximum number of votes in that contest is given by  $\widetilde{v}_0 = 3$. In the first iteration of the cutting plane method, we start with $\widehat{\Sigma} = \emptyset$, in which the optimization problem~\eqref{prob:robust-subset} with the constraints~\eqref{line:repeat_symmetry} can be written as 
\begin{align} \label{prob:robust-subset:example:part1:iteration1}
\begin{aligned}
\underset{B \in \N, \; \beta_1,\ldots,\beta_B \in \mathscr{B}}{\textnormal{minimize}}\quad & B\\
\textnormal{subject to} \quad & T^*_1(\beta_1,\ldots,\beta_B) <  T^*_2(\beta_1,\ldots,\beta_B) <  T^*_3(\beta_1,\ldots,\beta_B).
\end{aligned}
\end{align}
We observe from inspection that  the optimization problem~\eqref{prob:robust-subset:example:part1:iteration1} has two optimal solutions, which are stated as follows:
\begin{align*}
    (B^1,\beta_1^1,\beta_2^1) = (2, \{2,3 \}, \{3 \}),\\
     (B^2,\beta_1^2,\beta_2^2) = (2, \{3 \}, \{2,3 \}).
\end{align*}
In particular, we observe that both of those optimal solutions are feasible solutions of the optimization problem~\eqref{prob:robust}. Hence, if the noncompetitive contests are combined into a single contest, then we observe for this example that the cutting plane method will  terminate after a single iteration.\looseness=-1

We conclude Example~\ref{example:noncompetitive} by showing that the number of iterations of the cutting plane method will always be strictly greater than one if  the noncompetitive contests are not combined into a single contest. Indeed, suppose that we apply the cutting plane method to the original ballot style $({\mathcal{N}}, {\mathcal{C}}, \{ {\mathcal{N}}_c \}_{c \in {\mathcal{C}}}, \{{v}_c\}_{c \in {\mathcal{C}}})$. In the first iteration of the cutting plane method, we start with $\widehat{\Sigma} = \emptyset$, in which the optimization problem~\eqref{prob:robust-subset} with the constraints~\eqref{line:repeat_symmetry} can be written as 
\begin{align} \label{prob:robust-subset:example:part2:iteration1}
\begin{aligned}
\underset{B \in \N, \; \beta_1,\ldots,\beta_B \in \mathscr{B}}{\textnormal{minimize}}\quad & B\\
\textnormal{subject to} \quad &  T^*_2(\beta_1,\ldots,\beta_B) <  T^*_3(\beta_1,\ldots,\beta_B).
\end{aligned}
\end{align}
We observe from inspection that  the optimization problem~\eqref{prob:robust-subset:example:part2:iteration1} has two optimal solutions, which are stated as follows:
\begin{align*}
    (B^1,\beta_1^1) &= (1, \{3 \}),\\
    (B^2,\beta_1^2) &= (1, \{1,3 \}). 
\end{align*}
However, neither of those optimal solutions are feasible solutions for the optimization problem~\eqref{prob:robust}.\footnote{The fact that neither $(B^1,\beta_1^1) = (1, \{3 \})$ nor $(B^2,\beta_1^2) = (1, \{1,3 \})$ is a feasible solution of the optimization problem~\eqref{prob:robust} follows immediately from the fact that the optimal objective value of the optimization problem~\eqref{prob:robust} is equal to two.} Hence, if the noncompetitive contests are not combined into a single contest, then we observe for this example that the cutting plane method will always require at least two iterations. \qed
\end{example}

\section{Numerical Experiments in Real-World Election} \label{sec:experiments}
In partnership with the Michigan Bureau of Elections, we applied our approach to each of the state's 6928 ballot styles from the  November 2022 general election. 
Through conversations with state leadership, local election officials, and vendors,  we found that the ease of deploying our approach depended on two main factors:  the length of the test decks with rigorous security guarantees obtained by solving our optimization problem~\eqref{prob:robust}, and the computation time required by our algorithm to find optimal test decks for all 6928 ballot styles.  We report below on the performance of our approach with respect to those two factors.

\subsection{Length of Optimal Test Decks} \label{sec:experiments:length}
The length of test decks is a crucial factor in conducting LAT in real-world elections. Long decks pose challenges, both in terms of cost and difficulty for election officials. Consequently, the practicality of our approach to achieving rigorous security guarantees in LAT depends on whether the test decks obtained by solving the optimization problem~\eqref{prob:robust} are significantly longer than the heuristic-based test decks  that would otherwise be used. 

\begin{figure}[bp]
\centering 
\subfloat[ \label{fig:michigan}]{%
\includegraphics[width=0.99\linewidth]{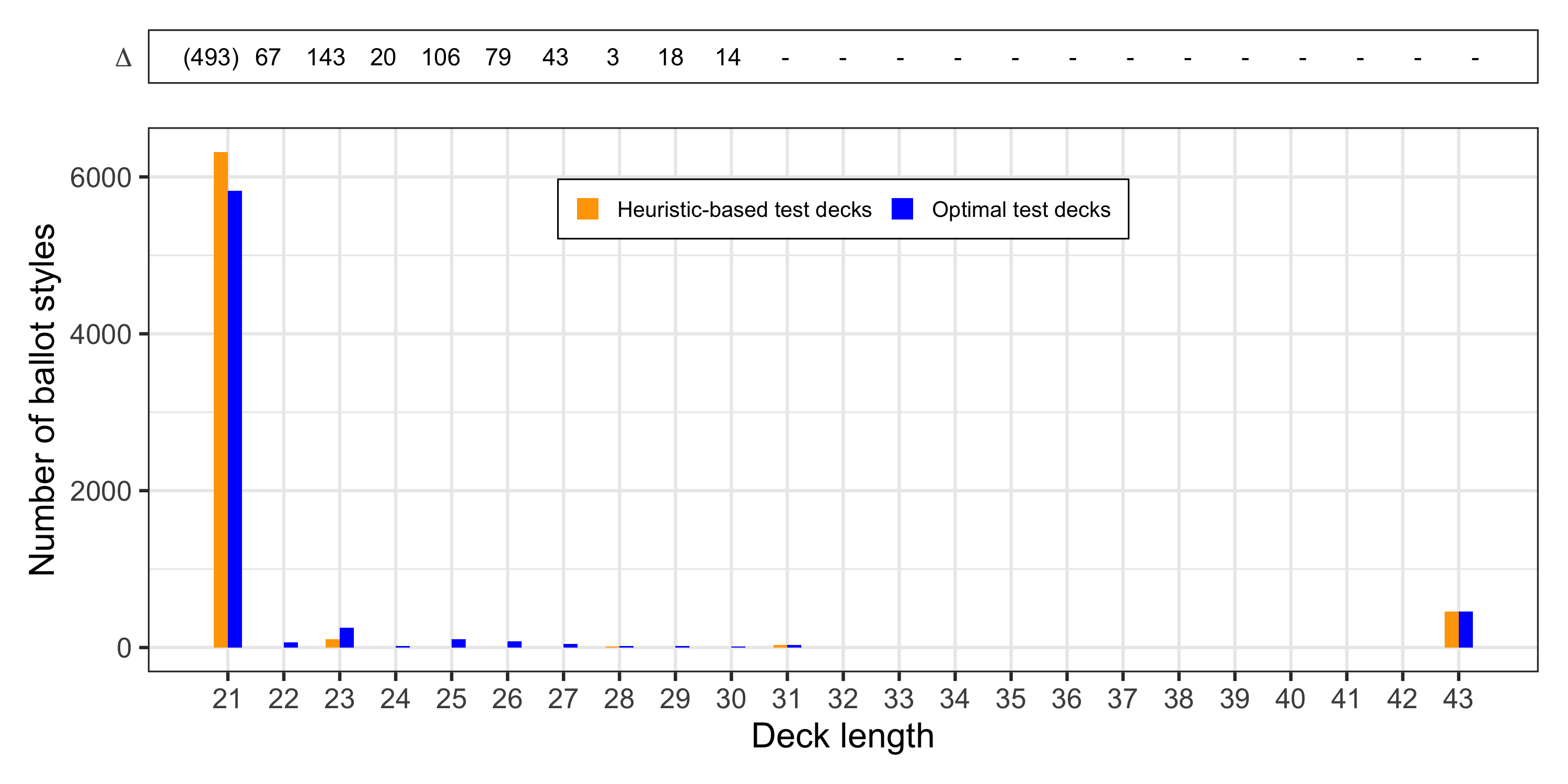}}

\subfloat[\label{fig:max}]{%
\includegraphics[width=0.7\linewidth]{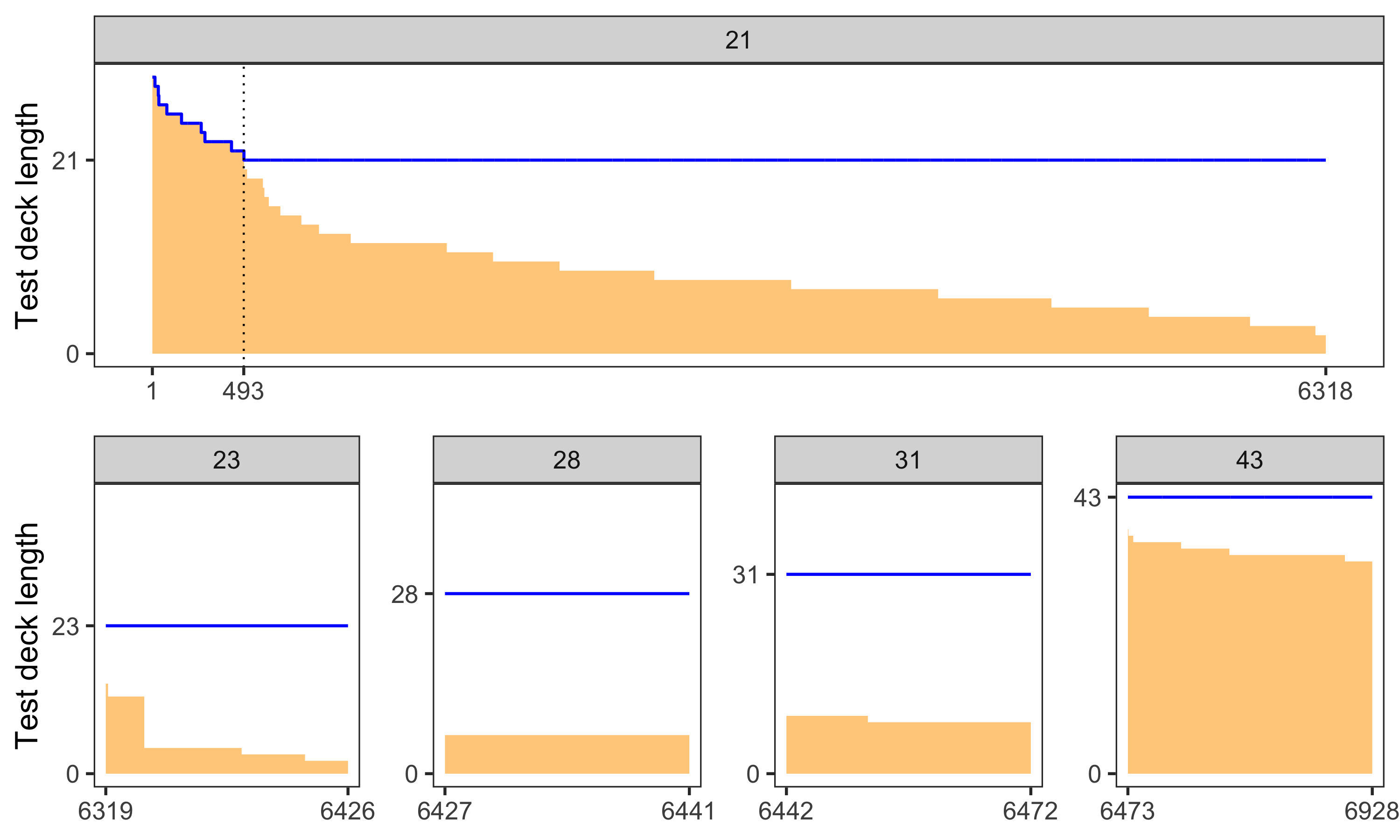} }

\caption{
\textnormal{\textbf{(a)}} Distributions of lengths of heuristic-based test decks (orange) and optimal test decks (blue) across the 6928 ballot styles. Top of figure shows the changes in number of ballot styles (blue minus orange). The similarity of the two distributions indicates that our approach requires only minor increases in test deck length, and only for those test decks which were the shortest to begin with. \textnormal{\textbf{(b)}} Length of the optimal test deck (blue) and the number of candidates in noncompetitive contests (orange) for each of the 6928 ballot styles used in Michgian's 2022 general election. The ballot styles are split across subgraphs according to the number of ballots in the heuristic-based test deck, where the top graph shows the ballot styles where the heuristic-based test decks satisfy $H^k = 21$, and bottom four graphs show the ballot styles  where the heuristic-based test decks satisfy $H^k \in \{23,28,31,43\}$. Results show that $O^k = \max \{ H^k,NC^k \}$ for all ballot styles in this election. } 
\end{figure}

To evaluate the practicality of our approach in application to Michigan's November 2022 general election, we performed the following steps. First, we calculated the lengths of optimal test decks for each of the 6928 ballot styles by  solving  the optimization problem~\eqref{prob:robust} once per ballot style.   To comply with minimum  requirements  and guidance 
in the state of Michigan, the following rules were added as constraints into the   optimization problem~\eqref{prob:robust} (see Appendix~\ref{appx:statelaws}): 
\begin{itemize}
    \item ``A different number of valid votes shall be assigned to each candidate for an office, and for and against each question'' \cite[MCL 168.798(1)]{michiganlat}. 
    \item ``None of the candidates, write-in positions, or proposals shall have an accumulated vote total of zero'' \cite[R168.773 - Rule 3(10)(a)]{michiganlat}.
\end{itemize}
Second, we calculated the lengths of heuristic-based test decks for each of the 6928 ballot styles. Our implemented heuristic involves selecting a test deck with the minimum possible length that fulfills 
\cite[MCL 168.798(1)]{michiganlat} and \cite[R168.773 - Rule 3(10)(a)]{michiganlat}  for each specific ballot style.\footnote{We compute the minimal possible length of a legally-compliant  test deck  for each ballot style as the maximum of $| \mathcal{N}_c|$ and $\lceil |\mathcal{N}_c|(|\mathcal{N}_c|+1) / 2v_c \rceil $ over all contests $c$ in the ballot style, where $| \mathcal{N}_c|$ is the number of candidates in the contest and $v_c$ is the maximum number of candidates that can be selected in the contest per ballot.} In comparison to the optimal test decks that were obtained by solving the optimization problem~\eqref{prob:robust}, the heuristic-based test decks are not feasible solutions for \eqref{prob:robust} and do not offer rigorous security guarantees for any practically important class of cyberattacks. We note that the lengths of these heuristic-based test decks serve as lower bounds on the lengths of test decks that could be obtained by \emph{any} 
heuristic that complies with \cite[MCL 168.798(1)]{michiganlat} and \cite[R168.773 - Rule 3(10)(a)]{michiganlat}.

Figure~\ref{fig:michigan} compares the distributions  of the lengths of optimal test decks and the lengths of heuristic-based test decks across the 6928 ballot styles from Michigan's November 2022 general election.   A priori, one might have anticipated that test decks that provide rigorous security guarantees  would contain significantly more ballots than the shortest test decks that satisfy a state's minimum legal requirements. However, the numerical results of our experiments in Figure~\ref{fig:michigan} show this is not the case. The results from Figure~\ref{fig:michigan} for Michigan's November 2022 general election show that the optimal test decks obtained by solving the optimization problem~\eqref{prob:robust} require only 1.2\% more ballots on average than the heuristic-based test decks across the 6928 ballot styles. Moreover, the optimal test decks require the same number of ballots as the heuristic-based test decks for all but 493 of the 6928 ballot styles. 
These results suggest that the rigorous security guarantees of our robust optimization approach to designing test decks can  be enjoyed with essentially no additional cost or difficulty to  election officials for performing LAT.

Furthermore, we find that the increases in test deck lengths in 493 of the 6928 ballot styles can be explained  by a  simple mathematical formula. To present this formula, let $H^1,\ldots,H^{6928} \ge 0$ denote the lengths of the heuristic-based test decks and $O^1,\ldots,O^{6928} \ge 0$ denote the lengths of the optimal test decks. Let a noncompetitive contest refer to any contest $c$ in a ballot style in which the maximum number of candidates that a voter is allowed to select,  denoted by $v_c$, is equal to the number of candidates in the contest, denoted by $| \mathcal{N}_c|$. In order for a test deck to satisfy the minimum legal requirement \cite[MCL 168.798(1)]{michiganlat}, we observe the test deck must assign a different number of votes to each candidate within each noncompetitive contest. Moreover, we prove in \S\ref{appx:improvements:noncompetitive} that any feasible solution of the optimization problem~\eqref{prob:robust} must assign a different number of votes to each candidate {across} \emph{all} noncompetitive contests. Letting $NC^1,\ldots,NC^{6928}$ denote the  number of candidates in noncompetitive contests, where $NC^k = \sum_{c: | \mathcal{N}_c| = v_c}  | \mathcal{N}_c|$ for each ballot style $k$,  we show in Figure~\ref{fig:max} that the formula  $O^k = \max \left \{ H^k, NC^k \right \}$ is satisfied for all ballot styles $k = 1,\ldots,6928$.
In other words, Figure~\ref{fig:max} shows that the optimization problem~\eqref{prob:robust}  yielded test decks of an equal length to current practice for every ballot style, except for the 493 ballot styles which require longer test decks to distinguish candidates in noncompetitive contests.
\begin{remark}
 When imposing 
 \cite[MCL 168.798(1)]{michiganlat} and \cite[R168.773 - Rule 3(10)(a)]{michiganlat}, we note that it is possible to construct ballot styles for which the formula $O = \max \{ H, NC \}$ does not hold. For example, consider a ballot style comprised of two contests, each with two candidates and a maximum vote of one ($| \mathcal{N}_1| = | \mathcal{N}_2| = 2$ and $v_1 = v_2 = 1$). For this ballot style, there are $NC = 0$ candidates in noncompetitive contests, and we observe that the heuristic-based test deck requires $H = 3$ ballots. However, it follows from \S\ref{appx:improvements:n_choose_k}  that the optimal test deck for this ballot style will require at least $O \ge 4$ ballots. This example demonstrates that while the formula $ O^k = \max \left \{ H^k, NC^k \right \}$ explains the lengths of optimal test decks in all 6928 ballot styles from Michigan's November 2022 general election, the formula is not guaranteed to hold in general.   
\end{remark}

\begin{figure}[bp]
\centering
\subfloat[]{\includegraphics[width=0.91\linewidth]{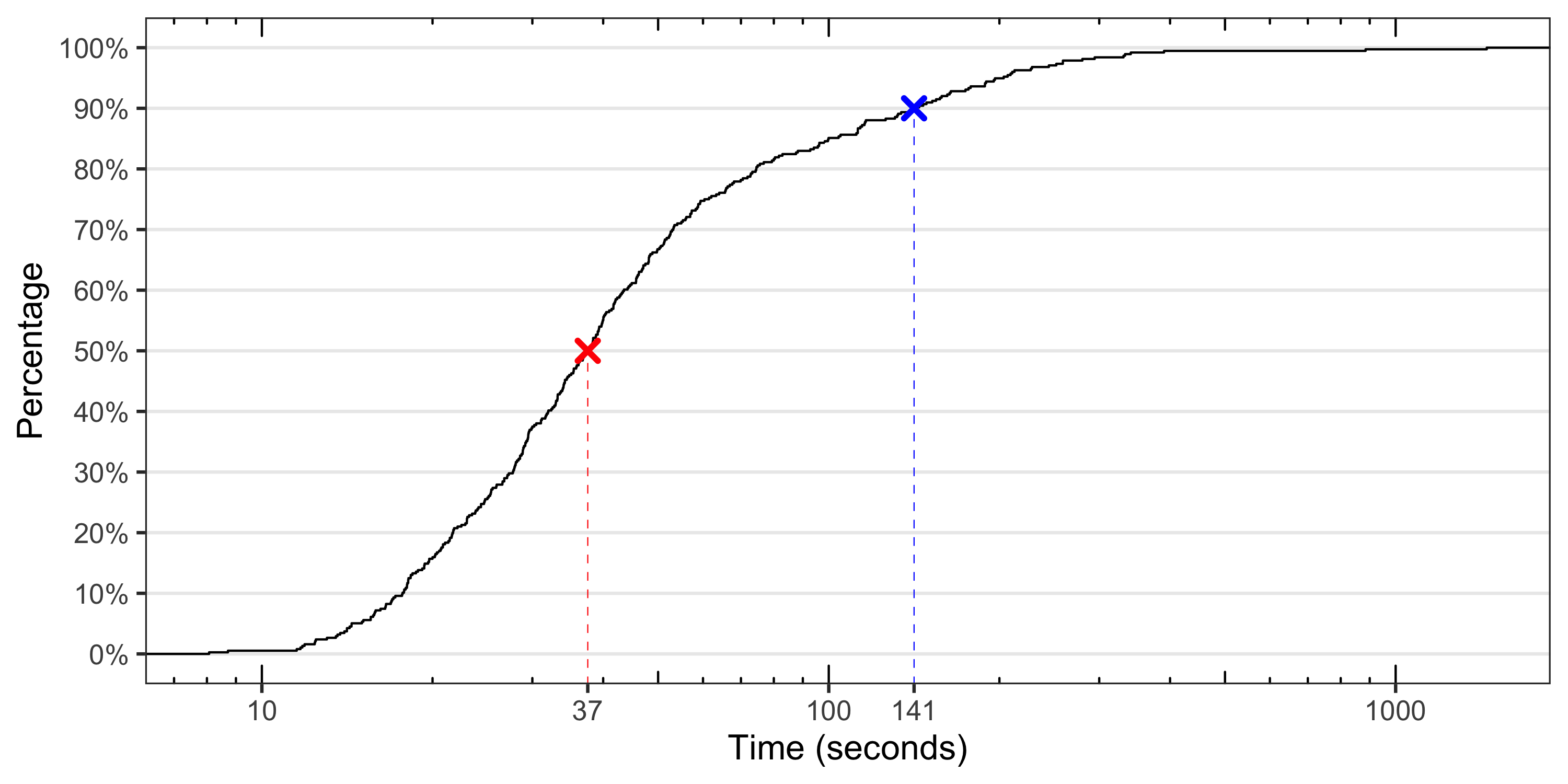} \label{fig:time_cdf}}

\subfloat[]{
\includegraphics[width=0.45\linewidth]{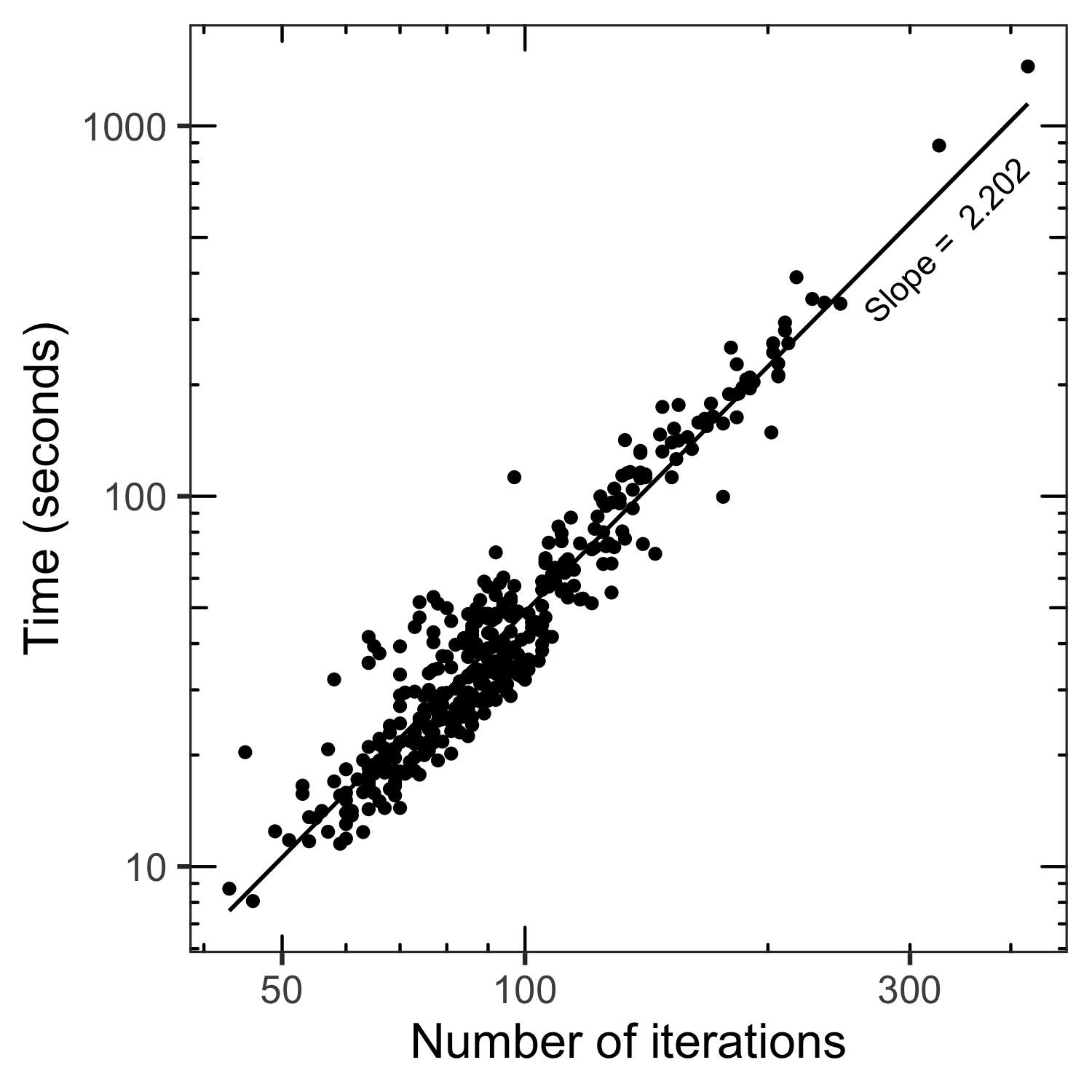} \label{fig:time_vs_iteration}}

\caption{{\normalfont \textbf{(a)}}  Computation times for the 376 invocations of our algorithm 
 that were run to compute optimal test decks for all 6928 ballot styles. Red indicates that 50\% of the invocations required less than 37 seconds. Blue indicates that 90\% of the invocations required less than 141 seconds.  Total computation time  was 6 hours and 42 minutes. {\normalfont \textbf{(b)}}  Scatterplot of the number of iterations and total computation time  of our algorithm for each  of the 376 invocations. Diagonal line shows the function $\textnormal{Time} = \exp(-6.255 + 2.202 \log (\textnormal{Iterations}))$ with  coefficient of determination $R^2 = 0.9426$. 
 }
\end{figure}

\subsection{Practical Computational Time} \label{sec:experiments:time} Due to a strict schedule for finalizing ballot styles and conducting LAT, we have found that our approach must find optimal test decks for an entire state within 24-48 hours. 

To apply our approach at scale, we  developed strategies for reusing optimal solutions across instances of the optimization problem~\eqref{prob:robust} that corresponded to similar ballot styles, which allowed us to decrease the number of invocations of our  exact algorithm from 6928 to 376 (see Appendix~\ref{appx:reuse}). Applying those strategies, our exact algorithm from \S\ref{appx:cutting_sec} computed optimal test decks for all 6928 ballot styles in less than seven hours on a home computer. Figure~\ref{fig:time_cdf} illustrates the distribution of computation times across the 376 invocations of our algorithm, showing that 90\% of the invocations required less than 2.5 minutes. These findings demonstrate that our exact algorithm from \S\ref{appx:cutting_sec} can find optimal test decks for all of the ballot styles across a state in practical computation times. Figure~\ref{fig:time_vs_iteration}  displays the number of iterations and the total computation time required for each of the 376 invocations 
of our algorithm. The results of Figure~\ref{fig:time_vs_iteration} show that the computation time of the algorithm is driven by the number of iterations of our cutting plane method,  demonstrating the value of the  algorithmic developments in \S\ref{appx:improvements} which decrease the number of iterations of the cutting plane method. 

\section{Conclusion}

This paper  describes the first formal procedure for detecting cyberattacks in computerized voting machines \emph{prior} to their use in elections. We achieve this by applying rigorous scientific reasoning to a widely used pre-election procedure, Logic and Accuracy Testing, which for more than a century has been performed using human intuition and simple heuristics. Unlike the longstanding  practice of LAT, our approach provides a  guarantee that LAT will detect any misconfiguration that swaps voting targets between candidates, whether those misconfigurations were induced deliberately or by human error. Such misconfigurations have occurred accidentally in recent elections in Michigan, Pennsylvania, and Georgia. Although these errors were later caught and corrected, they generated negative publicity, hurt public confidence, and  served as the basis for a draft executive order which would have instructed the military to seize voting machines nationwide. We showed in \S\ref{sec:vulnerabilities}  that similar misconfigurations could be strategically induced by technically unsophisticated adversaries to undermine public trust or change the outcome of an election. 
By applying tools from robust optimization to LAT, this paper offers a practical and scientifically rigorous way to defend against the aforementioned risks in future elections, and demonstrates that advanced computational tools can be used to realize novel benefits to public institutions.\looseness=-1

 Through our partnership with the Michigan Bureau of Elections, we found that our approach
offered valuable security guarantees with only a 1.2\% average increase in the number of test ballots  compared to existing testing procedure in the state’s November 2022 general election. Coupled with the practical computation time of our algorithm, we conclude that our approach to obtaining rigorous security guarantees with LAT is well suited to deployment throughout the United States. We 
hope that other states and countries will adopt our approach as a low-cost tool to improving the security
and increasing public confidence in election outcomes.

         There are many interesting directions for future work. 
        \begin{itemize}
            \item First, we foresee ways that randomization can be used to generate  short test decks with probabilistic  security guarantees. We  provide evidence in \S\ref{sec:experiments} that for states with legal requirements like Michigan's, the benefit of using randomization to design short test decks  would be minuscule. This is because the test decks produced by our deterministic approach in Michigan's November 2022 election  were only 1.2\% longer on average than the shortest test decks that satisfy Michigan's minimal legal requirements.  
        However,  randomization may be useful for designing short test decks when considering more expansive classes of uncertainty sets (i.e., uncertainty sets that go beyond incorrect bijection mappings). Randomization could also help  facilitate adoption of the robust optimization approach in states that currently have weak legal requirements that must be satisfied by test decks (e.g., states that do not require every candidate within a contest to have a distinct number of votes), as election officials in such states would be  accustomed to shorter test decks.  
        
        \item 
        It is straightforward to see that the length of test decks could theoretically decrease if an election official could output the vote totals after each ballot is fed into the voting machine. However, it takes significant time for the voting machine to print the poll tape (i.e., the grocery store-like receipt that the machine prints out to show the vote totals) and significant time for the election official to then reset the machine after it prints a poll tape in order for the machine to scan more ballots. From conversations with election officials, we learned that a test deck that requires more than a few poll tapes to be printed is viewed by election officials as too time consuming to perform, too different from the current practice of logic and accuracy testing, and  would thus be unlikely to be followed by election officials. In view of these practical considerations, an interesting future direction would be to characterize the savings that could be obtained in  test deck length if election officials were required to print the poll tape a small 
 (but greater than one) number of times during testing. 

        \item On the theoretical side, many open questions remain about the computational complexity of the robust optimization problem, whether it would be possible to design  approximation algorithms, and if special cases of the robust optimization problem can be solved in polynomial time.  
        
        \end{itemize}

\section*{Acknowledgements}
The authors are grateful to the Michigan Department of State Bureau of Elections for their partnership in this research. This material is based upon work supported by the U.S. National Science Foundation under Grant No.\ CNS-1518888. Any opinions, findings, and conclusions or recommendations expressed in this material are those of the authors and do not necessarily reflect the views of the National Science Foundation.

\bibliographystyle{plainnat}
\bibliography{scibib}

\clearpage

\appendix

\begin{center}
    \huge{\textbf{Appendices}}
\end{center}

The appendices have the following organization:
\begin{itemize}
\item Appendix~\ref{appx:upperbound} proposes two simple heuristics for obtaining feasible solutions for the optimization problem~\eqref{prob:robust}. Using real-world  data, we show that these simple heuristics will result in test decks that contain too many ballots to be used in practice. 

\item Appendix~\ref{appx:malicious-decks} describes the capabilities of an adversary who chooses both the voting machine's configuration and the test deck used in LAT.

\item Appendix~\ref{appx:additional_experiments} contains  additional numerical experiments  that showcase the value of the five improvements from \S\ref{appx:improvements} on the practical efficiency of the cutting plane method from \S\ref{appx:cutting_sec}.

\item Appendix~\ref{appx:statelaws}  shows that various state-level legal requirements on the design of test decks can be  enforced either by adding constraints into the optimization problem~\eqref{prob:robust} or by augmenting the output of the optimization problem~\eqref{prob:robust}. 

\item Appendix~\ref{appx:reuse} identifies circumstances in which an optimal test deck for the optimization problem~\eqref{prob:robust} for one ballot style can be efficiently translated into an optimal test deck for another similar ballot style.

\item Appendix~\ref{appx:proofs} contains the proofs of the paper's technical results. 
\end{itemize}
\clearpage

\section{Upper Bounds} \label{appx:upperbound}
Our exact algorithm for solving the optimization problem~\eqref{prob:robust} is found in \S\ref{appx:cutting_sec}. In this appendix, we motivate the exact algorithm by presenting and analyzing two simple heuristics for the optimization problem~\eqref{prob:robust}. These two heuristics, which can be found below in Propositions~\ref{prop:heuristic1} and \ref{prop:heuristic2}, obtain a feasible solution for the optimization problem~\eqref{prob:robust} by constructing a test deck that contains a distinct positive number of votes for each candidate. Our purpose for presenting these heuristics is  (a)  to show that the optimization problem~\eqref{prob:robust} always has a feasible solution and (b) to show using real-world data that  heuristics based on  assigning a distinct number of votes for each candidate will result in test decks that contain too many ballots to be implementable in practice.\looseness=-1 

Our first simple heuristic for the optimization problem~\eqref{prob:robust}  is stated formally in the proof of the following Proposition~\ref{prop:heuristic1}. The heuristic consists of constructing a test deck in which each filled-out ballot in the test deck contains a vote for exactly one candidate (i.e., $| \beta_1| = \cdots = | \beta_B| = 1$) and in which each candidate $i \in \mathcal{N}$ is selected in exactly $i$ of the  filled-out ballots.  The fact that this heuristic yields a  test deck that is feasible for the optimization problem~\eqref{prob:robust} is shown in the proof of Proposition~\ref{prop:heuristic1} to follow from Corollary~\ref{cor:diff_votes} coupled with the fact that the heuristic gives a distinct total number of votes to each of the candidates.  More generally, this heuristic is useful because it yields a simple, closed-form upper bound on the length of optimal test decks for the optimization problem~\eqref{prob:robust}.

\begin{proposition} \label{prop:heuristic1}
There  exists a feasible solution for the optimization problem~\eqref{prob:robust} that   satisfies  $B \le N(N+1) / 2$. 
\end{proposition}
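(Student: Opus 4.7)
The plan is to prove Proposition~\ref{prop:heuristic1} by exhibiting an explicit test deck and verifying its feasibility via Corollary~\ref{cor:diff_votes}. Specifically, I will construct the test deck in which, for each candidate $i \in \mathcal{N}$, the ballot $\{i\} \subseteq \mathcal{N}$ is included exactly $i$ times. The total number of ballots in this deck is $\sum_{i=1}^N i = N(N+1)/2$, which immediately yields the desired bound on $B$.

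Next, I would verify that each of these ballots lies in $\mathscr{B}$. Since every ballot in the construction selects exactly one target, for any contest $c \in \mathcal{C}$ we have $|\mathcal{N}_c \cap \beta_b| \le 1$. So long as $v_c \ge 1$ for every contest $c \in \mathcal{C}$ (which is implicit in the problem, since otherwise candidates in $c$ could never receive any legal vote), the feasibility inclusion $\beta_b \in \mathscr{B}$ holds automatically for every ballot in the deck.

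The third and central step is to invoke Corollary~\ref{cor:diff_votes} to establish that this test deck satisfies the constraints of~\eqref{prob:robust}. Fix an arbitrary non-identity bijection $\sigma \in \Sigma$. By definition of $\Sigma$, there exists a candidate $i \in \mathcal{N}$ with $\sigma(i) \neq i$. By construction, candidate $i$ is selected in exactly $i$ of the ballots, so
\[
|\{b \in \{1,\ldots,B\}: i \in \beta_b\}| = i,
\qquad
|\{b \in \{1,\ldots,B\}: \sigma(i) \in \beta_b\}| = \sigma(i),
\]
and since $\sigma(i) \neq i$, these two counts differ. Thus the first condition of Corollary~\ref{cor:diff_votes} is satisfied for $\sigma$. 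Because $\sigma$ was arbitrary, the deck meets the feasibility criterion of Corollary~\ref{cor:diff_votes} for every $\sigma \in \Sigma$, completing the proof.

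I do not anticipate any real obstacle here. The work is entirely in choosing the right candidate test deck and then mechanically checking the two hypotheses of Corollary~\ref{cor:diff_votes}; the crucial conceptual observation is simply that giving each candidate a distinct positive vote total automatically defeats every non-identity bijection through condition~1 of the corollary, so condition~2 (the overvote condition) never needs to be invoked.
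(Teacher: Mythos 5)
Your proposal is correct and matches the paper's proof essentially verbatim: the same deck with ballot $\{i\}$ repeated $i$ times, the same observation that $v_c \ge 1$ gives $\beta_b \in \mathscr{B}$, and the same appeal to the first condition of Corollary~\ref{cor:diff_votes} via a candidate $i$ with $\sigma(i) \neq i$. No gaps.
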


Our second heuristic for the optimization problem~\eqref{prob:robust} can be viewed as a refinement of the first heuristic from Proposition~\ref{prop:heuristic1}. Like the first heuristic, our second heuristic  yields a  test deck that is feasible for the optimization problem~\eqref{prob:robust} by giving a distinct  positive number of votes to each candidate. However, our second heuristic  assigns a different positive number of votes to each candidate in such a way that allows the votes to be packed into the fewest number of ballots. More specifically, our second heuristic consists of solving the following optimization problem~\eqref{prob:distinct} to  find a minimum-length test deck that assigns a distinct positive number of votes to each of the candidates across each of the contests: 
\begin{subequations}\label{prob:distinct}
\begin{align} 
\underset{B \in \N, \; \beta_1,\ldots,\beta_B \in \mathscr{B}}{\textnormal{minimize}}\quad & B \label{prob:distinct:objective}\\
\textnormal{subject to} \quad & \left|\left \{ b \in \{1,\ldots,B \}: i \in \beta_b \right \}  \right|  \neq \left|\left \{ b \in \{1,\ldots,B \}: j \in \beta_b \right \}  \right|  && \forall i,j \in \mathcal{N}: i \neq j \label{prob:distinct:distinct}\\
&\left|\left \{ b \in \{1,\ldots,B \}: i \in \beta_b \right \}  \right| \ge 1 && \forall i \in \mathcal{N}.\label{prob:distinct:positive} 
\end{align}
\end{subequations}
Indeed, constraint~\eqref{prob:distinct:distinct} ensures that the test deck gives a distinct number of votes to each candidate, and constraint~\eqref{prob:distinct:positive} ensures that each candidate receives at least one vote. 
It follows immediately from Corollary~\ref{cor:diff_votes} that any test deck that is feasible for the optimization problem~\eqref{prob:distinct} is feasible for the optimization problem~\eqref{prob:robust}. Hence, the optimization problem~\eqref{prob:distinct}  provides the tightest upper bound on the optimal objective value of the optimization problem~\eqref{prob:robust} that can be obtained by a test deck that assigns a distinct positive number of votes to each of the candidates across all of the contests. 

In the following Proposition~\ref{prop:heuristic2}, we show that the optimization problem~\eqref{prob:distinct} can be reformulated as a mixed-integer linear optimization problem.  In contrast to the optimization problem~\eqref{prob:distinct}, the  mixed-integer linear optimization problem~\eqref{prob:distinct_mip} from the following Proposition~\ref{prop:heuristic2}  can be easily implemented and solved using widely available open-source and commercial optimization software such as Gurobi and Mosek.   
\begin{proposition}\label{prop:heuristic2}
The optimal objective value of the optimization problem~\eqref{prob:distinct} is equal to the optimal objective value of the following mixed-integer linear optimization problem~\eqref{prob:distinct_mip}, and every optimal solution for \eqref{prob:distinct_mip} can be transformed into an optimal solution for \eqref{prob:distinct}. 
  \begin{subequations} \label{prob:distinct_mip}
    \begin{align}
            &\underset{B \in \mathbb{N}, \gamma \in \{0,1\}^{\mathcal{C} \times \mathcal{N}}}{\textnormal{minimize}} && B \label{prob:distinct_mip:objective}\\
            &\textnormal{subject to} && \sum_{g \in \mathcal{N}} \gamma_{c,g} = |\mathcal{N}_c| && \forall c \in \mathcal{C} \label{prob:distinct_mip:N_c}\\
            &&& \sum_{c \in \mathcal{C}} \gamma_{c,g} = 1 && \forall g \in \mathcal{N} \label{prob:distinct_mip:sum_to_one}\\
            &&& B \geq \frac{1}{v_c} \sum_{g \in \mathcal{N}} g \gamma_{c,g} && \forall c \in \mathcal{C} \label{prob:distinct_mip:bound_B}\\
            &&& B \ge N. \label{prob:distinct_mip:N}
    \end{align}
    \end{subequations}
\end{proposition}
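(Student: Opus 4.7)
The plan is to prove Proposition~\ref{prop:heuristic2} via matching inequalities between the two optimal values, and simultaneously exhibit the promised transformation from MILP solutions to \eqref{prob:distinct} solutions.

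\textbf{From MILP to \eqref{prob:distinct}.} Given feasible $(B,\gamma)$ for \eqref{prob:distinct_mip}, constraints \eqref{prob:distinct_mip:N_c} and \eqref{prob:distinct_mip:sum_to_one} jointly force the set $\{(c,g) : \gamma_{c,g}=1\}$ to partition $\{1,\ldots,N\}$ into $C$ blocks of sizes $|\mathcal{N}_1|,\ldots,|\mathcal{N}_C|$. For each contest $c$ I would bijectively assign its $|\mathcal{N}_c|$ chosen $g$-values as target vote counts $a_i$ to the candidates $i \in \mathcal{N}_c$; the resulting $a_i$'s are automatically distinct positive integers across the whole ballot style, so constraints \eqref{prob:distinct:distinct}--\eqref{prob:distinct:positive} will hold. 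It remains to realize each contest's target vector by a $0/1$ incidence matrix $M \in \{0,1\}^{\mathcal{N}_c \times \{1,\ldots,B\}}$ with row sums equal to $a_i$ and column sums at most $v_c$. The two key bounds available are $a_i \leq N \leq B$ from \eqref{prob:distinct_mip:N} and $\sum_{i \in \mathcal{N}_c} a_i \leq B v_c$ from \eqref{prob:distinct_mip:bound_B}.

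\textbf{Main obstacle: existence of $M$.} This is the one step with real content. I would establish it by a max-flow / min-cut argument on the source--candidates--ballots--sink network in which source$\to i$ has capacity $a_i$, each candidate-to-ballot edge has capacity $1$, and each ballot-to-sink edge has capacity $v_c$. For any cut determined by subsets $S \subseteq \mathcal{N}_c$ and $T \subseteq \{1,\ldots,B\}$, the cut capacity equals $\sum_{i \notin S} a_i + |S|(B-|T|) + |T| v_c$; a two-case analysis on whether $|S| \geq v_c$ (invoking $\sum_i a_i \leq B v_c$ in one case and $a_i \leq B$ in the other) shows this is at least $\sum_i a_i$, so a max flow of value $\sum_i a_i$ exists and its integrality yields $M$. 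Assembling the per-contest matrices gives ballots $\beta_1,\ldots,\beta_B \in \mathscr{B}$ feasible for \eqref{prob:distinct} with objective $B$; this simultaneously yields the required transformation and shows that the MILP optimum is an upper bound on the \eqref{prob:distinct} optimum.

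\textbf{From \eqref{prob:distinct} back to MILP.} Conversely, given feasible $(B,\beta_1,\ldots,\beta_B)$ for \eqref{prob:distinct}, let $m_i = |\{b : i \in \beta_b\}|$. These are $N$ distinct positive integers, so their sorted order $m_{(1)} < \cdots < m_{(N)}$ satisfies $m_{(k)} \geq k$, in particular forcing $B \geq \max_i m_i \geq N$. If $\{m_i\} = \{1,\ldots,N\}$, then setting $\gamma_{c,g} \triangleq \mathbb{I}\{\exists\, i \in \mathcal{N}_c \text{ with } m_i = g\}$ satisfies \eqref{prob:distinct_mip:N_c}, \eqref{prob:distinct_mip:sum_to_one}, and \eqref{prob:distinct_mip:bound_B} (the last because $\sum_{i \in \mathcal{N}_c} m_i \leq B v_c$ by the overvote condition). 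In general the $m_i$ need not be $\{1,\ldots,N\}$, but since the rank-$k$ candidate has at least $k$ votes, I would simply delete $m_{(k)} - k$ votes from her; these deletions only relax overvote caps (so the deck remains in $\mathscr{B}$), preserve $B$, and normalize the vote counts to exactly $\{1,\ldots,N\}$, after which the $\gamma$ construction above applies and yields a MILP-feasible solution of value $B$. This shows the MILP optimum is at most the \eqref{prob:distinct} optimum, completing the proof.
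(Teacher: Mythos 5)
Your proof is correct, and both directions reach the same conclusions as the paper's, but your argument for the harder direction (MILP feasible point $\Rightarrow$ deck of length $B$) is genuinely different. The paper constructs the deck explicitly by a round-robin procedure: for each contest it cycles through the ballots via $b \gets (b \bmod B)+1$ while distributing $\pi(i)$ marks to candidate $i$, and then argues from $B \ge N$ and $B \ge \tfrac{1}{v_c}\sum_{i\in\mathcal{N}_c}\pi(i)$ that each candidate lands on $\pi(i)$ distinct ballots and no ballot collects more than $v_c$ marks in any contest. You instead prove \emph{existence} of the required $0/1$ incidence matrix by max-flow/min-cut on the source--candidates--ballots--sink network; your cut-capacity formula and the two-case analysis on $|S|\ge v_c$ versus $|S|<v_c$ (using $\sum_i a_i\le Bv_c$ and $a_i\le N\le B$ respectively) are both correct, and integrality of max flow delivers the matrix. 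The trade-off: the paper's construction is a two-line algorithm that an implementer can run directly, but its correctness is asserted somewhat tersely; your Gale--Ryser-style argument is airtight and isolates exactly which two inequalities from \eqref{prob:distinct_mip:bound_B} and \eqref{prob:distinct_mip:N} make the packing feasible, at the cost of invoking network-flow machinery. In the reverse direction your normalization (deleting $m_{(k)}-k$ votes from the rank-$k$ candidate) is a slightly more careful version of the paper's ``omit votes for candidates receiving more than $N$ votes'' reduction, and the rest of that direction coincides with the paper's verification of constraints \eqref{prob:distinct_mip:N_c}--\eqref{prob:distinct_mip:N}.
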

\noindent Let us provide an interpretation of  the decision variables and constraints of the mixed-integer linear optimization problem~\eqref{prob:distinct_mip}. Each binary decision variable $\gamma_{c,g}$ will be equal to one if and only if there exists a candidate $i \in \mathcal{N}_c$ in contest $c$ that satisfies the equality $\left|\left \{ b \in \{1,\ldots,B \}: i \in \beta_b \right \}  \right| = g$. 
Constraints~\eqref{prob:distinct_mip:N_c} and \eqref{prob:distinct_mip:sum_to_one} ensure that a distinct number of votes are given to each of the candidates across  each of the contests. 
Constraint~\eqref{prob:distinct_mip:bound_B} enforces, for each contest $c \in \mathcal{C}$, the fact that a test deck $\beta_1,\ldots,\beta_B \in \mathscr{B}$ needs to be comprised of  at least $B \ge \lceil \frac{1}{v_c} \sum_{g \in \mathcal{N}} g \gamma_{c,g} \rceil$ ballots in order for it to be possible for the test deck to give $ \sum_{g \in \mathcal{N}} g \gamma_{c,g} $ votes to the candidates in contest $c$ without causing an overvote for that contest in any of the ballots. 
Constraint~\eqref{prob:distinct_mip:N} enforces that we include at least the $N$ ballots which are necessary for the candidate who receives $N$ votes.

We conclude Appendix~\ref{appx:upperbound} by showing using real-world data that any  heuristic that  assigns  a distinct positive number of votes to each candidate across all of the contests  will result in test decks that contain too many ballots to be useful in practice. Specifically, we applied our second  heuristic to the 6928 ballot styles that appeared in the state of Michigan in the November 2022 general election.   In Figure~\ref{fig:upper_bound}, we compare the number of ballots for test decks that are optimal for the optimization problem~\eqref{prob:robust}  and the number of ballots for test decks that are optimal for the optimization problem~\eqref{prob:distinct_mip}.   The results of Figure~\ref{fig:upper_bound} thus demonstrate that test decks that assign distinct positive numbers of votes for candidates across contests can require significantly (2.46x to 6.38x) more ballots than the test decks obtained by solving the optimization problem~\eqref{prob:robust}.

 \begin{figure}[t]
    \centering
\includegraphics[width=0.99\linewidth]{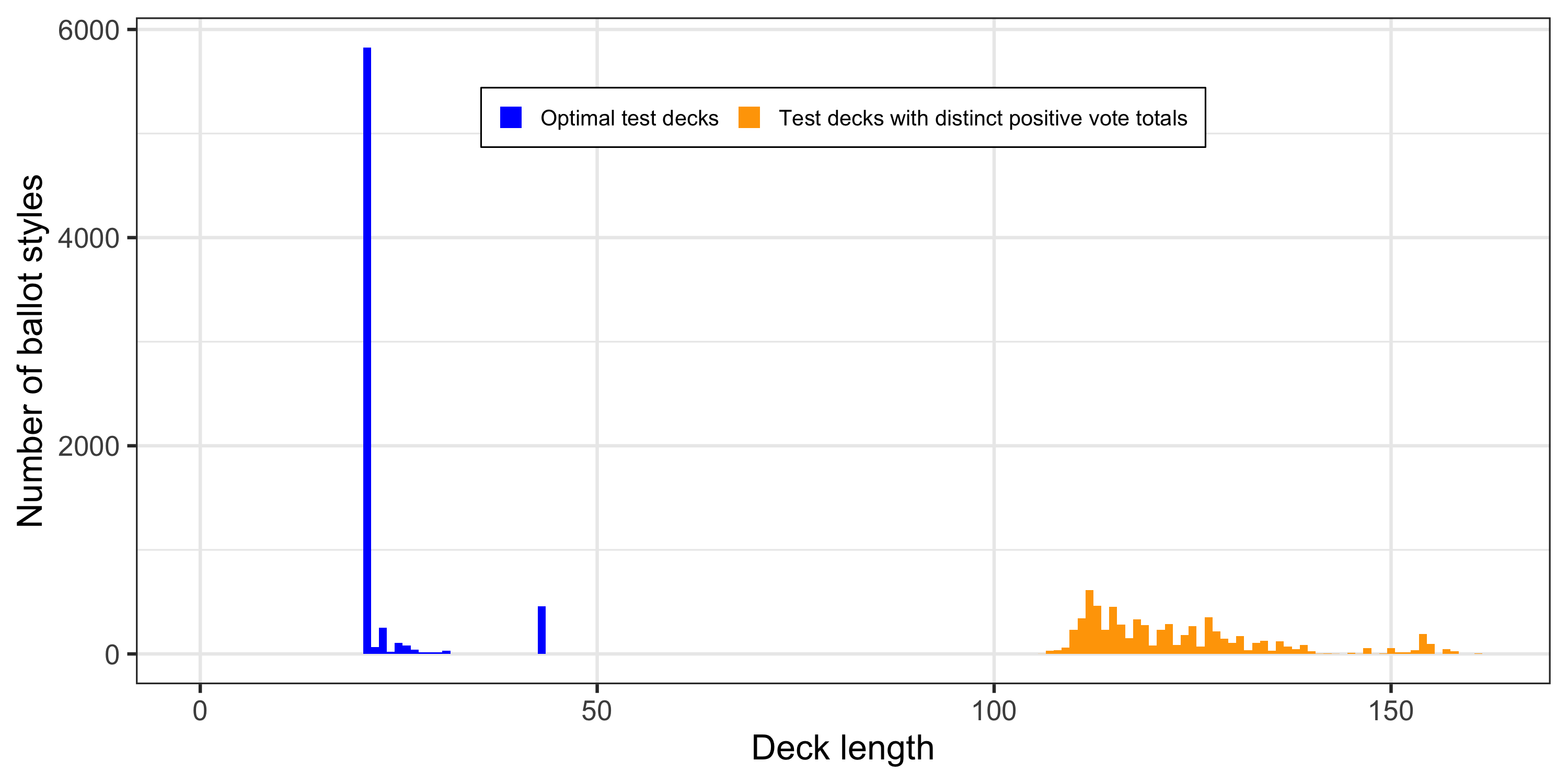}  

\caption{Distributions of lengths of test decks obtained by solving the optimization problem~\eqref{prob:robust} (blue) and test decks obtained by solving the optimization problem~\eqref{prob:distinct_mip} (orange) across the 6928 ballot styles. The large gap between the two distributions indicates that test decks that are optimal for the optimization problem~\eqref{prob:robust} require significantly fewer ballots than test decks that give a distinct positive number of votes to each candidate across all contests. }\label{fig:upper_bound}

\end{figure}

\section{Deliberately Flawed Test Decks} \label{appx:malicious-decks}

In \S\ref{sec:vulnerabilities}, we discussed a category of threats (dubbed `Deliberately Flawed Test Decks') in which an adversary may have the opportunity to both configure the voting machine as well as design the test deck used in LAT. If this is the case, then the adversary could choose a mapping $\sigma \in \Sigma$ to use on the machine, then choose a test deck $(\beta_1, \ldots, \beta_B)$ that satisfies $T^\sigma(\beta_1, \ldots, \beta_B) = T^*(\beta_1, \ldots, \beta_B)$, thereby causing their  misconfiguration to go undetected by LAT.

Such an adversary has significant freedom with respect to the $\sigma \in \Sigma$ they choose if the test deck $(\beta_1, \ldots, \beta_B)$ is not well constrained by the state's minimum legal requirements. For example, if the only minimum legal requirement  on the test deck is that each candidate receives at least one  vote---as is the case in a number of states~\cite{walker2022lat}---the adversary can find a suitable deck for any mapping $\sigma \in \Sigma$. We formalize this observation through the following Proposition~\ref{prop:adversary}. 
\begin{proposition} \label{prop:adversary}
    For every $\sigma \in \Sigma$, there exists a test deck $\beta_1, \ldots, \beta_B  \in \mathscr{B}$ that satisfies
    \begin{align*}
        T^*_i(\beta_1,\ldots,\beta_B) \ge 1 \; \forall i \in \mathcal{N} \text{ and }   T^\sigma(\beta_1,\ldots,\beta_B) = T^*(\beta_1,\ldots,\beta_B).
    \end{align*}
\end{proposition}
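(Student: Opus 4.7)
The plan is to exhibit a single explicit test deck that defeats every non-identity bijection simultaneously and also satisfies the minimum-vote constraint. Specifically, I would set $B = N$ and, for each candidate $i \in \mathcal{N}$, define $\beta_i \triangleq \{i\}$, so that the $i$-th ballot selects only the target corresponding to candidate $i$. This construction uses exactly one selected target per ballot and does not depend on $\sigma$.

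First, I would verify that each $\beta_i \in \mathscr{B}$: since every ballot contains a single selected target, for any contest $c$ we have $|\mathcal{N}_c \cap \beta_i| \le 1 \le v_c$. The correctly operating machine therefore treats every ballot as overvote-free, so by Remark~\ref{remark:overvotes} we obtain $T^*_i(\beta_1,\ldots,\beta_N) = |\{b : i \in \beta_b\}| = 1$ for every candidate $i \in \mathcal{N}$, which establishes the ``at least one vote'' condition.

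Next, I would apply Theorem~\ref{thm:diff_votes} contrapositively to conclude that $T^\sigma(\beta_1,\ldots,\beta_N) = T^*(\beta_1,\ldots,\beta_N)$. Neither of the two conditions from that theorem is triggered: for the first, $|\{b : i \in \beta_b\}| = 1 = |\{b : \sigma(i) \in \beta_b\}|$, since $\sigma$ is a bijection on $\mathcal{N}$ and every target is the unique selected target on exactly one ballot; for the second, $|\{\sigma(j) \in \beta_b : j \in \mathcal{N}_c\}| \le |\beta_b| = 1 \le v_c$, so no contest is interpreted as overvoted under the swap.

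Rather than a technical obstacle, the interesting feature of this proof is recognizing that the construction need not depend on the adversary's chosen $\sigma$ at all: a single deck of $N$ singleton ballots simultaneously defeats every bijection in $\Sigma$ while meeting the ``one vote per candidate'' requirement. This quantitatively illustrates the point made in \S\ref{sec:vulnerabilities} that an adversary controlling both the machine configuration and the test deck is barely constrained by such a weak legal requirement.
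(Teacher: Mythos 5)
Your proposal is correct and uses exactly the same construction as the paper's proof: the deck of $N$ singleton ballots $\beta_i = \{i\}$, giving every candidate exactly one vote under both the correct machine and any $\sigma \in \Sigma$. The only cosmetic difference is that you route the conclusion through the contrapositive of Theorem~\ref{thm:diff_votes}, whereas the paper computes $T^\sigma_i = 1 = T^*_i$ directly; both verifications are valid.
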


Seventeen states also require that each candidate receives at least one vote and that no two candidates in the same contest receive the same number of votes~\cite{walker2022lat}. This minimum legal requirement rules out all $\sigma \in \Sigma$ that have a cycle of swaps that includes two candidates from the same contest. For any other $\sigma \in \Sigma$, however, a suitable test deck can be generated by the adversary to hide their chosen non-identity bijection, as shown by the following Theorem~\ref{thm:swap_adversary}. In the following Theorem~\ref{thm:swap_adversary} and throughout the paper, we use the notation $\sigma^n(\cdot)$ to denote the $n$-fold composition of $\sigma(\cdot)$.\footnote{For example, $\sigma^1(\cdot) \triangleq \sigma(\cdot)$ and $\sigma^2(\cdot) \triangleq \sigma(\sigma(\cdot))$.}\looseness=-1
\begin{theorem} \label{thm:swap_adversary}
    Consider any $\sigma \in \Sigma$ that satisfies the following property for all contests $c \in \mathcal{C}$, candidates $i \in \mathcal{N}_c$, and integers $n \in \N$:
    \begin{align*}
        \sigma^n(i) \in \mathcal{N}_c \iff \sigma^n(i) = i. 
    \end{align*}
    Then there exists a test deck $\beta_1,\ldots,\beta_B \in \mathscr{B}$ that satisfies 
    \begin{align*}
        T^*_i(\beta_1,\ldots,\beta_B) \ge 1 & \quad \forall i \in \mathcal{N}\\
        T^*_i(\beta_1, \ldots, \beta_B) \neq T^*_j(\beta_1, \ldots, \beta_B) & \quad \forall c \in \mathcal{C}, i,j \in \mathcal{N}_c: i \neq j \\
        T^\sigma(\beta_1,\ldots,\beta_B) = T^*(\beta_1,\ldots,\beta_B).
    \end{align*}
\end{theorem}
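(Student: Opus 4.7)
The plan is to exploit the cycle structure of $\sigma$. My first step is a translation of the hypothesis into a clean combinatorial statement: \emph{for each cycle $C$ of $\sigma$ and each contest $c \in \mathcal{C}$, one has $|\mathcal{N}_c \cap C| \le 1$}. Indeed, if $i, j \in \mathcal{N}_c \cap C$ with $i \ne j$, then $j = \sigma^n(i)$ for some $n \ge 1$, and the hypothesis applied at $(c, i, n)$ forces $\sigma^n(i) = i$, hence $j = i$, a contradiction. In particular, this also gives the trivial bound $|\mathcal{N}_c \cap C| \le 1 \le v_c$ for every cycle and contest, which will make all overvote checks automatic.

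Next I construct the test deck explicitly. Let $C_1, \ldots, C_K$ be the cycles of $\sigma$ (including fixed points as cycles of length one), and fix any $K$ distinct positive integers $n_1, \ldots, n_K$; for concreteness take $n_k = k$. The proposed test deck consists, for each $k \in \{1,\ldots,K\}$, of $n_k$ copies of the single ballot $C_k \subseteq \mathcal{N}$, yielding total length $B = n_1 + \cdots + n_K$. Each such ballot is legal (i.e., lies in $\mathscr{B}$) by the bound established in the previous paragraph.

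The final step is to verify the three required properties. Because every candidate $i$ lies in exactly one cycle $C_{k(i)}$ and appears in exactly the $n_{k(i)}$ ballots for that cycle, the correct machine output is $T^*_i(\beta_1, \ldots, \beta_B) = n_{k(i)} \ge 1$, giving property (1). For property (2), any two distinct candidates in the same contest lie in distinct cycles by the key observation, so $T^*_i = n_{k(i)} \ne n_{k(j)} = T^*_j$. For property (3), I apply the contrapositive of Theorem~\ref{thm:diff_votes}: since $\sigma$ permutes each $C_k$, the targets $i$ and $\sigma(i)$ are marked by the same $n_{k(i)}$ ballots and so receive equal vote counts; and since $\sigma^{-1}(C_k) = C_k$, on every ballot $\beta_b = C_k$ and every contest $c$, $|\{\sigma(j) \in C_k : j \in \mathcal{N}_c\}| = |\mathcal{N}_c \cap C_k| \le v_c$, so $\sigma$ never induces an overvote. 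Both bullet points of Theorem~\ref{thm:diff_votes} therefore fail, giving $T^\sigma = T^*$.

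The main obstacle is really just the translation of the hypothesis into the statement ``each cycle meets each contest at most once''; once that is in hand, the design ``one ballot per cycle, repeated $n_k$ times'' handles the overvote constraint under $\sigma$ essentially for free because each cycle is $\sigma$-invariant, while the distinct-vote-count requirement within a contest falls out from using distinct $n_k$ across cycles. The only subtlety worth checking is that two fixed points of $\sigma$ that share a contest are still distinguished: this is automatic, because each fixed point is its own one-element cycle and therefore receives its own vote count.
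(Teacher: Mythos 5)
Your proof is correct and follows essentially the same route as the paper's: decompose $\sigma$ into cycles, observe that the hypothesis forces each cycle to meet each contest at most once, and build the deck from one ballot per cycle repeated a distinct number of times. The only cosmetic differences are that you invoke Theorem~\ref{thm:diff_votes} to conclude $T^\sigma = T^*$ where the paper computes the totals directly, and that you make explicit the within-contest distinctness check that the paper leaves implicit.
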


\section{Additional Numerical Experiments} \label{appx:additional_experiments}
In this appendix, we present additional numerical experiments  to explore and showcase the value of the proposed improvements from \S\ref{appx:improvements} on the practical efficiency of the cutting plane method from \S\ref{appx:cutting_sec}. 

\subsection{Experiment Setup} \label{appx:additional_experiments:setup}
We investigate the value of the five improvements from \S\ref{appx:improvements} using three classes of numerical experiments. The three classes of numerical experiments are described as follows.
\begin{itemize}
\item \emph{Experiment 1}: 
In this experiment, we generate ballot styles with varying numbers of candidates per contest. Specifically, 
for each  $C \in \{2,\ldots,12\}$, we generate a ballot style with $C$ contests where the contests are comprised as  
\begin{align*}
    |\mathcal{N}_1| &= 1,&| \mathcal{N}_2| &= 2,&\cdots&&|\mathcal{N}_{C-1}| &= C-1,&|\mathcal{N}_{C}| &= C,\\
    v_1 &= 1,&v_2 &= 1,&\cdots && v_{C-1} &= 1,& v_C &= 1.
\end{align*} 
\item \emph{Experiment 2}: 
In this experiment, we generate ballot styles with contests that have the same numbers of candidates. Specifically, 
for each  $C \in \{2,\ldots,12\}$, we generate a ballot style with $C$ contests where the contests are comprised as  
\begin{align*}
    |\mathcal{N}_1| &= 2,&| \mathcal{N}_2| &= 2,&\cdots&&|\mathcal{N}_{C-1}| &=2,&|\mathcal{N}_{C}| &= 2,\\
    v_1 &= 1,&v_2 &= 1,&\cdots && v_{C-1} &= 1,& v_C &= 1.
\end{align*} 
\item \emph{Experiment 3}: 
In this experiment, we generate ballot styles with noncompetitive contests. Specifically, 
for each  $C \in \{2,\ldots,12\}$, we generate a ballot style with $C$ contests where the contests are comprised as  
\begin{align*}
    |\mathcal{N}_1| &= 1,&| \mathcal{N}_2| &= 2,&\cdots&&|\mathcal{N}_{C-1}| &=C-1,&|\mathcal{N}_{C}| &= C,\\
    v_1 &= 1,&v_2 &= 2,&\cdots && v_{C-1} &= C-1,& v_C &= C.
\end{align*} 
\end{itemize}
Our goal in each experiment is to examine the individual impact of each of the proposed improvements from \S\ref{appx:improvements} on the practical efficiency of the cutting plane method. To this end, we report on the performance of the following solution methods:
\begin{itemize}
    \item \emph{All Improvements}: In this solution method, we find an optimal solution for the optimization problem~\eqref{prob:robust}  by using the cutting plane method from \S\ref{appx:cutting}. In each iteration of the cutting plane method, we solve the optimization problems~\eqref{prob:robust-subset} and \eqref{prob:oracle} using the mixed-integer linear optimization reformulations~\eqref{prob:complete_transposition} and \eqref{prob:oracle_mip} from \S\ref{appx:mip}. Moreover, we use each of the five improvements from \S\ref{appx:improvements}. 
    \item \emph{No Improvement 1}: Same solution method as {All Improvements}, but we do not use Improvement 1 from \S\ref{appx:improvements:reducingdecisions}.
        \item \emph{No Improvement 2}: Same solution method as {All Improvements}, but we do not use Improvement 2 from \S\ref{appx:improvements:symmetry}.
 \item \emph{No Improvement 3}: Same solution method as {All Improvements}, but we do not use Improvement 3 from \S\ref{appx:improvements:n_choose_k}.
 \item \emph{No Improvement 4}: Same solution method as {All Improvements}, but we do not use Improvement 4 from \S\ref{appx:improvements:swap}.
 \item \emph{No Improvement 5}: Same solution method as {All Improvements}, but we do not use Improvement 5 from \S\ref{appx:improvements:noncompetitive}.
\end{itemize}
For each $C \in \{2,\ldots,12\}$ in each of the Experiments 1, 2, and 3, we used each of the solution methods  to compute an optimal test deck. We recorded the total computation time and number of iterations of the cutting plane method for each solution method. If a solution method on a ballot style  required a computation time that exceeded  one hour (3600 seconds), then we terminated the solution method early without finding an optimal solution.  All numerical experiments were conducted using the Gurobi optimization solver on a laptop with a 2.6 GHz 6-Core Intel Core i7 processor and 16 GB of RAM. In all experiments, we also impose a constraint in the optimization problem~\eqref{prob:robust}  that each candidate must receive at least one vote (see Appendix~\ref{appx:statelaws:atleastone}).

\subsection{Results}  \label{appx:additional_experiments:results}

The results of our numerical experiments from Appendix~\ref{appx:additional_experiments:setup} are presented in Figure~\ref{fig:toy_experiments}. For  visual clarity,  we  do not display the numerical results for No Improvement 3 in Experiments 1 and 3\footnote{Experiments 1 and 3 do not include ballot styles in which there are contests that are equivalent (see Definition~\ref{defn:equivalence} in  \S\ref{appx:improvements:n_choose_k}), and so No Improvement 3 is identical to All Improvements in the context of  Experiments 1 and 3.} and  do not display the numerical results for No Improvement 5 in Experiments 1 and 2.\footnote{Experiments 1 and 2 do not include multiple noncompetitive contests  (see \S\ref{appx:improvements:noncompetitive}), and so  No Improvement 5 is identical to All Improvements in the context of  Experiments 1 and 2.} We reflect below on the key numerical findings from Figure~\ref{fig:toy_experiments}: 
\begin{itemize}
    \item \emph{No Improvement 1}: Experiments 1 and 3 show that the improvement from  \S\ref{appx:improvements:reducingdecisions} significantly increases the practical efficiency of the cutting plane method by decreasing the per-iteration computation cost. This is seen most clearly in Experiment 3 in the ballot style with $C = 12$ contests, where the computation time of No Improvement 1 is approximately 25x greater than the computation time of All Improvements,  despite both solution methods requiring only a single iteration.
    \item \emph{No Improvement 2}: Experiments 1, 2, and 3 show that the improvement from \S\ref{appx:improvements:symmetry} significantly increases the practical efficiency of the cutting plane method by decreasing the number of iterations.  For example, in Experiment  1 in the ballot style with $C = 12$ contests, the number of iterations of No Improvement 2 is approximately 21x greater than the number of iterations of All Improvements, leading to a computation time of No Improvement 2 that is approximately 45x greater than the computation time of All Improvements. Moreover, we observe that No Improvement 2 did not terminate in less than one hour in Experiment 2 with $C \ge 9$ contests and in Experiment 3 with $C \ge 11$ contests. 
    \item   \emph{No Improvement 3}: Experiment 2 shows that the improvement from \S\ref{appx:improvements:n_choose_k} significantly increases the practical efficiency of the cutting plane method by decreasing the number of iterations. Indeed, in Experiment 2 in the ballot style with $C = 8$ contests, the number of iterations of No Improvement 3 is approximately 16x greater than the number of iterations of All Improvements, leading to a computation time of No Improvement 3 that is approximately 1106x greater than the computation time of All Improvements. Moreover, we observe that No Improvement 3 did not terminate in less than one hour in Experiment 2 with $C \ge 9$ contests.
    \item \emph{No Improvement 4}: Experiment 2 shows that the improvement from \S\ref{appx:improvements:swap} significantly increases the practical efficiency of the cutting plane method by decreasing the number of iterations. This is seen most clearly in Experiment 2 in the ballot style with $C = 12$ contests, where the number of iterations of No Improvement 4 is approximately 6x greater than the number of iterations of All Improvement, leading to a computation time of No Improvement 4 that is approximately 59x greater than the computation time of All Improvements. Experiments 1 and 3, in contrast,  did do show any meaningful advantages or disadvantages of using the improvement from \S\ref{appx:improvements:swap}. 
        \item \emph{No Improvement 5}: Experiment 3 shows that the improvement from \S\ref{appx:improvements:noncompetitive} significantly increases the practical efficiency of the cutting plane method by decreasing the number of iterations. This is seen most clearly in Experiment 3 in the ballot style with $C = 7$ contests, where the number of iterations of No Improvement 5 is 323x greater than the number of iterations of All Improvement, leading to a computation time of No Improvement 5 that is approximately  3129x greater than the computation time of All Improvements.  Moreover, we observe that No Improvement 5 did not terminate in less than one hour in Experiment 3 with $C \ge 11$ contests.

\end{itemize}

\begin{figure}[t]
\centering 
\includegraphics[width=1.0\linewidth]{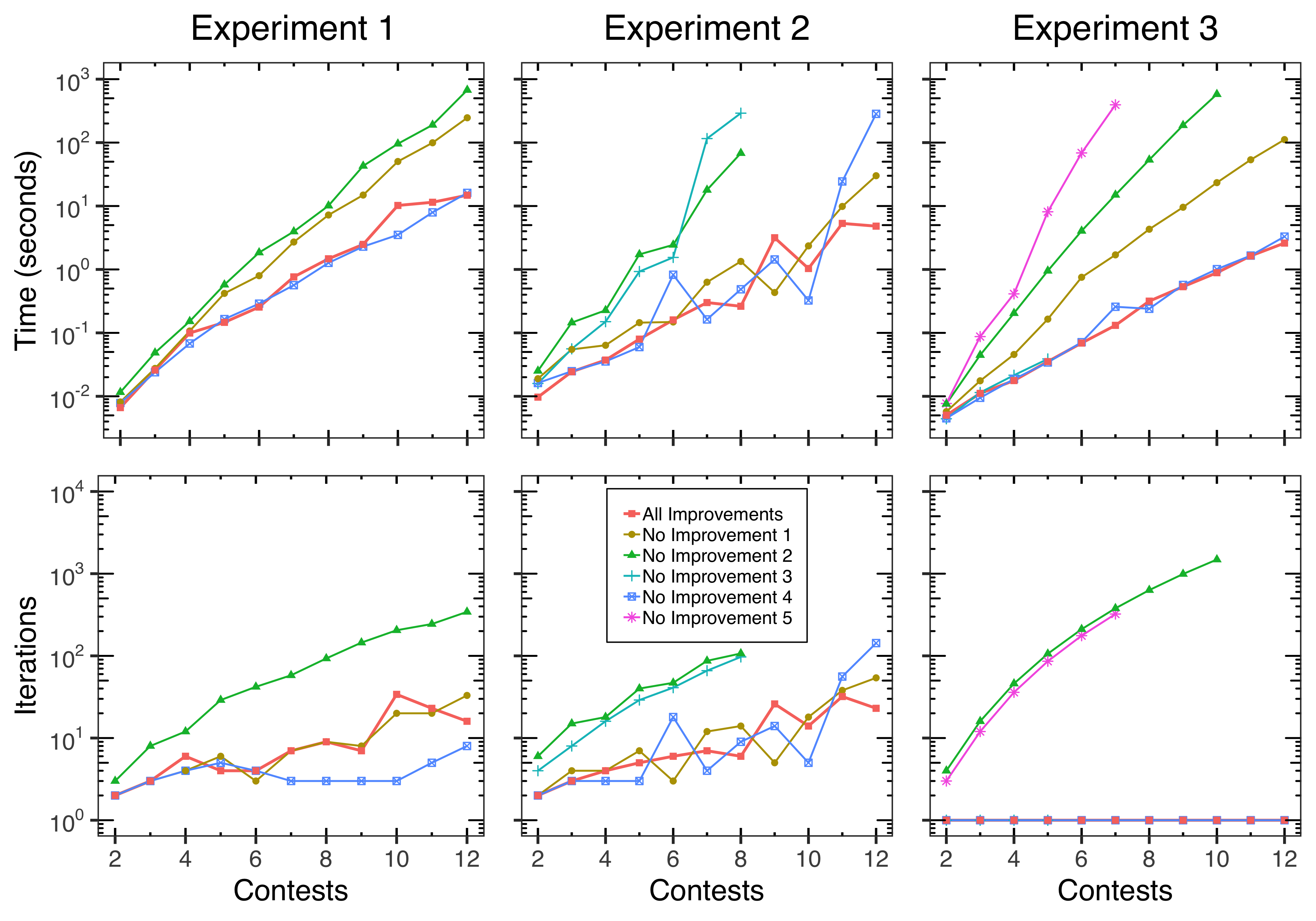}

\caption{{Numerical results for Appendix~\ref{appx:additional_experiments:results}.}  } \label{fig:toy_experiments}

\end{figure}

\section{State-Level Requirements} \label{appx:statelaws}
Each of the fifty states has minimal requirements on the design of test decks that can be legally used in LAT. For example, at least forty states have a minimum  requirement that test decks must include at least one vote for  each candidate on the ballot \cite{walker2022lat}. 
In this appendix, we demonstrate how various state-level requirements can be enforced in the optimization problem~\eqref{prob:robust}. 

\subsection{At Least One Vote Per Candidate}  \label{appx:statelaws:atleastone}
At least forty states 
recommend  that none of the candidates, write-in positions, or proposals shall have an
accumulated vote total of zero~\cite{walker2022lat}. To add this recommendation as a constraint into the optimization problem~\eqref{prob:robust}, we add it as a constraint into the optimization problem~\eqref{prob:robust-subset} in each iteration of the cutting plane method that is described in \S\ref{appx:cutting}. In particular, we recall from \S\ref{appx:mip:robust-subset} that the optimization problem~\eqref{prob:robust-subset} is equivalent to the mixed-integer linear optimization problem~\eqref{prob:complete_transposition}, where the mixed-integer linear optimization problem~\eqref{prob:complete_transposition} includes the following constraints:
\begin{align}
& \sum_{g \in \mathcal{B}_0} \gamma_{i,g} = 1 && \forall i \in \mathcal{N} \tag{\ref{prob:complete_transposition:gamma_1}}\\
& \sum_{b \in \mathcal{B}} \beta_{b,i} = \sum_{g \in \mathcal{B}_0} g \gamma_{i,g}  && \forall i \in \mathcal{N} .\tag{\ref{prob:complete_transposition:gamma_2}}
\end{align}
We recall from the discussion in \S\ref{appx:mip:robust-subset} that the above constraints~\eqref{prob:complete_transposition:gamma_1} and \eqref{prob:complete_transposition:gamma_2} enforce for each candidate $i \in \mathcal{N}$ that the binary decision variable $\gamma_{i,g} \in \{0,1\}$ in the mixed-integer linear optimization problem~\eqref{prob:complete_transposition} will be equal to one if and only if that candidate receives exactly $g \in \mathcal{B}_0 \equiv \{0,\ldots,B\}$ votes in the test deck. Therefore, to enforce that each candidate receives at least one vote in the test deck, we can add the following constraint to the mixed-integer linear optimization problem~\eqref{prob:complete_transposition}:
\begin{align}
    &\gamma_{i,0} = 0  \quad  \forall i \in \mathcal{N}.  \label{prob:remove_zero_gamma}
\end{align}

\subsection{Distinct Votes for Candidates in the Same Contest}
At least seventeen states (including Michigan, see  \cite[MCL 168.798(1)]{michiganlat})
have a requirement that test decks must assign a distinct number
of votes to candidates in the same contest~\cite{walker2022lat}. Following identical reasoning as in Appendix~\ref{appx:statelaws:atleastone}, we observe that enforcing this requirement on test decks can be accomplished  by adding constraints into the mixed-integer linear optimization problem~\eqref{prob:complete_transposition} from \S\ref{appx:mip:robust-subset}. However, we recall that \S\ref{appx:improvements:symmetry} offers an improvement to the cutting plane method that consists of adding the extra  constraints~\eqref{prob:complete_transposition:extra_withincontest} to the mixed-integer linear optimization problem~\eqref{prob:complete_transposition}. Because those extra constraints enforce that candidates within  the same contest receive different numbers of votes, we conclude that the state requirement is accomplished by using the improvement from \S\ref{appx:improvements:symmetry}. 

\subsection{Overvoted Ballots} \label{appx:statelaws:overvotes}
Several states (including Michigan, see \cite[MCL 168.776 Rule 6(4)(f)]{michiganlat}) require that the test deck include at least one ballot that contains an overvote in one or more contests~\cite{walker2022lat}. As we show in the following Proposition~\ref{prop:overvote_1}, this requirement can be satisfied by solving the optimization problem~\eqref{prob:robust} to obtain a test deck, and then appending that test deck with an additional filled-out ballot  that includes a vote for every candidate in every contest.
\begin{proposition} \label{prop:overvote_1}
    Let $\bar{\beta} \triangleq \mathcal{N}$ denote the filled-out ballot that includes a vote for every candidate in every contest. If $(B,\beta_1,\ldots,\beta_B)$ is a feasible solution for the optimization problem~\eqref{prob:robust}, then the following holds:
    \begin{align*}
        T^\sigma(\beta_1,\ldots,\beta_B, \bar{\beta}) \neq T^* (\beta_1,\ldots,\beta_B,  \bar{\beta}) \quad \forall \sigma \in \Sigma. 
    \end{align*}
\end{proposition}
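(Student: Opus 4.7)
The plan is to reduce the claim to the feasibility hypothesis by showing that appending $\bar{\beta}$ to any test deck contributes identically to each candidate's tally under $T^*$ and under every $T^\sigma$. Concretely, I will first establish the vote-by-vote equality
\[
T^*_i(\bar{\beta}) = T^\sigma_i(\bar{\beta}) \quad \forall i \in \mathcal{N}, \; \sigma \in \Sigma,
\]
and then combine this equality with the additivity of $T^*$ and $T^\sigma$ across ballots and with the feasibility of $(B,\beta_1,\ldots,\beta_B)$ for \eqref{prob:robust}.

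First I would verify the equality above by unfolding the definitions from \S\ref{sec:math:notation} and \S\ref{sec:math:swap}. Since $\bar{\beta} = \mathcal{N}$, we have $|\mathcal{N}_c \cap \bar{\beta}| = |\mathcal{N}_c|$ for every contest $c \in \mathcal{C}$, so the overvote check $|\mathcal{N}_c \cap \bar{\beta}| \le v_c$ reduces to the condition that contest $c$ is noncompetitive. Similarly, because $\sigma$ is a bijection on $\mathcal{N}$, the image $\{\sigma(j) : j \in \mathcal{N}_c\}$ has cardinality $|\mathcal{N}_c|$, so $|\{\sigma(j) \in \bar{\beta} : j \in \mathcal{N}_c\}| = |\mathcal{N}_c|$ as well. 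Thus the overvote indicator under $\sigma$ agrees with the one under the correct mapping. Finally, both $i \in \bar{\beta}$ and $\sigma(i) \in \bar{\beta}$ hold trivially because $\bar{\beta} = \mathcal{N}$. Combining these observations yields $T^\sigma_i(\bar{\beta}) = T^*_i(\bar{\beta})$ (it equals $1$ if $i$ lies in a noncompetitive contest and $0$ otherwise).

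Next, I would observe that both $T^*(\cdot)$ and $T^\sigma(\cdot)$ are defined as sums of per-ballot indicator contributions, so for every candidate $i \in \mathcal{N}$,
\[
T^\sigma_i(\beta_1,\ldots,\beta_B,\bar{\beta}) = T^\sigma_i(\beta_1,\ldots,\beta_B) + T^\sigma_i(\bar{\beta}),
\]
with the analogous identity for $T^*$. By feasibility of $(B,\beta_1,\ldots,\beta_B)$ for \eqref{prob:robust}, there exists a candidate $i^\ast \in \mathcal{N}$ with $T^\sigma_{i^\ast}(\beta_1,\ldots,\beta_B) \neq T^*_{i^\ast}(\beta_1,\ldots,\beta_B)$. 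Since the two $\bar{\beta}$-contributions to $i^\ast$ cancel by the first step, the inequality is preserved after appending $\bar{\beta}$, which gives the desired conclusion.

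The main obstacle, and the only non-routine step, is the first one: carefully justifying $T^\sigma_i(\bar{\beta}) = T^*_i(\bar{\beta})$, where the bijectivity of $\sigma$ is what makes the overvote check invariant. Once this is in hand, the rest is just additivity and invocation of the feasibility hypothesis, so the proof should be short.
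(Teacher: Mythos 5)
Your proposal is correct and follows essentially the same route as the paper's proof: both arguments show that the appended ballot $\bar{\beta}=\mathcal{N}$ contributes the identical quantity $\mathbb{I}\{|\mathcal{N}_c|\le v_c\}$ to $T^\sigma_i$ and $T^*_i$ (using bijectivity of $\sigma$ for the overvote check), then use additivity over ballots and the feasibility hypothesis to preserve the discrepancy at some candidate $i^\ast$. The only nitpick is that the condition $|\mathcal{N}_c|\le v_c$ is not literally ``$c$ is noncompetitive'' (which the paper defines as equality), but this does not affect the argument.
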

\noindent In particular, the above proposition shows that if we solve the optimization problem~\eqref{prob:robust}, and we then augment the optimal test deck by adding a filled-out ballot that votes for every candidate in every contest, then the augmented test deck will retain the desired security guarantee that the output $T^\sigma(\beta_1,\ldots,\beta_B, \bar{\beta})$ of a voting machine with any mapping $\sigma \in \Sigma$ will be different from the output $T^*(\beta_1,\ldots,\beta_B,\bar{\beta})$ of the voting machine that operates correctly.

While the above strategy can be used in many states including Michigan, certain states impose an additional requirement that 
the overvoted ballot in the test deck must cast precisely $v_c + 1$ votes in each contest $c \in \mathcal{C}$ that satisfies $v_c > | \mathcal{N}_c|$~\cite{walker2022lat}. In those states, the  strategy from Proposition~\ref{prop:overvote_1} cannot be applied when there exist contests that satisfy the inequality $v_c > | \mathcal{N}_c| + 1$, as the strategy from Proposition~\ref{prop:overvote_1} would require the overvoted ballot to vote for  strictly greater than $v_c + 1$ candidates in some contests.\footnote{An example of a contest that typically satisfies $v_c > | \mathcal{N}_c| + 1$ is the Presidential contest; even though voters may select up to one candidate in the presidential contest, voters will typically be allowed to select from candidates from four or more political parties.} 

To develop test decks for states with the aforementioned additional requirement on the overvoted ballot, we consider the following assumption on their voting machines.
\begin{assumption} \label{ass:functionality}
When a voting machine interprets a ballot as containing an overvote in at least one contest, it will produce an ``overvote alert'' notification. This alert will not specify which contest(s) are interpreted as containing an overvote, but will allow for a determination of which \emph{ballots} contain some overvoted contest. 
\end{assumption}

Most modern voting machines are designed to satisfy this assumption. Indeed, in many jurisdictions, the purpose of including overvoted ballots in the test deck is precisely to evaluate whether this functionality works as expected (see, e.g.,~\cite[p.91]{arizona-elections-procedures-manual} or~\cite[p.62]{arkansas-procedures-manual}). If this assumption is believed to hold, then we show in the following Proposition~\ref{prop:overvote_2} that we can satisfy the aforementioned stricter requirement by solving the optimization problem~\eqref{prob:robust} to obtain a test deck, and then appending that test deck with an additional filled-out ballot that casts precisely $v_c + 1$ votes for each contest $c \in \mathcal{C}$ that satisfies $v_c > | \mathcal{N}_c|$.

\begin{proposition} \label{prop:overvote_2}
    Let $\tilde{\beta} \subseteq \mathcal{N}$ denote a filled-out ballot that satisfies the following equality for each contest $c \in \mathcal{C}$:
    \begin{align*}
        | \tilde{\beta} \cap \mathcal{N}_c| &= \begin{cases}
            v_c + 1, &\textnormal{if } | \mathcal{N}_c| > v_c,\\
            0,&\textnormal{otherwise}. 
        \end{cases}
    \end{align*} If $(B,\beta_1,\ldots,\beta_B)$ is a feasible solution for the optimization problem~\eqref{prob:robust}, then for all $\sigma \in \Sigma$, at least one of the following two conditions hold:
    \begin{itemize}
        \item { There exists a contest in at least one of the filled-out ballots  $\beta_1,\ldots,\beta_B \in \mathscr{B}$  that is interpreted by a voting machine with mapping $\sigma$ as containing an overvote; that is, there exist  $b \in \mathcal{B}$ and $c \in \mathcal{C}$ that satisfy $|\{\sigma(j) \in \beta_b : j \in \mathcal{N}_c\}| > v_c$. }
        \item { $ T^\sigma(\beta_1,\ldots,\beta_B, \tilde{\beta}) \neq T^* (\beta_1,\ldots,\beta_B,  \tilde{\beta})$.}
    \end{itemize}
\end{proposition}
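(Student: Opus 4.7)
The plan is to fix an arbitrary $\sigma \in \Sigma$ and invoke Corollary~\ref{cor:diff_votes} applied to the feasible tuple $(B,\beta_1,\ldots,\beta_B)$. At least one of the corollary's two conditions must hold for $\sigma$. If its second condition holds, then by definition there exist $b \in \mathcal{B}$ and $c \in \mathcal{C}$ with $|\{\sigma(j) \in \beta_b : j \in \mathcal{N}_c\}| > v_c$, which is exactly the first bullet of the proposition's conclusion, so this case is immediate. The remaining work is the case in which no $\beta_b$ is overvoted under $\sigma$; the corollary's first condition must then hold, yielding a candidate $i \in \mathcal{N}$ with $n_i \neq n_{\sigma(i)}$, where I write $n_j \triangleq |\{b \in \mathcal{B} : j \in \beta_b\}|$. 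In this case I aim to establish the proposition's second bullet.

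Next I would unpack both tallies on the augmented deck. For the correct machine, the key observation is that $\tilde{\beta}$ contributes zero votes to every candidate: in each competitive contest $c$ (with $|\mathcal{N}_c| > v_c$), the construction gives $|\tilde{\beta} \cap \mathcal{N}_c| = v_c + 1 > v_c$, so $T^*$ interprets $\tilde{\beta}$ as overvoting $c$ and those candidates receive no increment; and in each noncompetitive contest (with $|\mathcal{N}_c| = v_c$), $\tilde{\beta}$ selects no targets in $\mathcal{N}_c$ at all, so again the tally does not grow. Combined with $\beta_1,\ldots,\beta_B \in \mathscr{B}$, this yields $T^*_j(\beta_1,\ldots,\beta_B,\tilde{\beta}) = n_j$ for every $j$. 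For the $\sigma$-machine, the working assumption that no $\beta_b$ is overvoted under $\sigma$ collapses the first $B$ terms to $n_{\sigma(j)}$, giving $T^\sigma_j(\beta_1,\ldots,\beta_B,\tilde{\beta}) = n_{\sigma(j)} + \delta_j^\sigma$, where $\delta_j^\sigma \in \{0,1\}$ is the indicator of whether $\tilde{\beta}$ contributes a vote to candidate $j$ under mapping $\sigma$.

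The final step is a short summation argument. Suppose for contradiction that $n_j = n_{\sigma(j)} + \delta_j^\sigma$ for every $j \in \mathcal{N}$. Summing this equality over $j$ and using that $\sigma$ is a bijection, so $\sum_j n_{\sigma(j)} = \sum_j n_j$, forces $\sum_j \delta_j^\sigma = 0$; since each $\delta_j^\sigma$ is a $\{0,1\}$-valued indicator, every $\delta_j^\sigma$ must equal zero, and therefore $n_j = n_{\sigma(j)}$ for every $j$. This contradicts the existence of $i$ with $n_i \neq n_{\sigma(i)}$, completing the argument. The main subtlety I anticipate is the two-case verification that $T^*$ is unchanged upon appending $\tilde{\beta}$: both the deliberate overvote in competitive contests \emph{and} the complete absence of selected targets in noncompetitive contests are essential, and the definition of $\tilde{\beta}$ in the proposition is tailored precisely so that both halves of this invariance hold simultaneously.
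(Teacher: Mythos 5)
Your proof is correct and follows essentially the same route as the paper's: both arguments rest on the facts that $\tilde{\beta}$ contributes nothing to the correct tally $T^*$, contributes a nonnegative amount to $T^\sigma$, and that summing tallies over all candidates and invoking the bijectivity of $\sigma$ closes the argument. The only difference is organizational --- the paper applies the summation to the original deck to extract a candidate with $T^*_i(\beta_1,\ldots,\beta_B) < T^\sigma_i(\beta_1,\ldots,\beta_B)$ strictly and then appends $\tilde{\beta}$, whereas you assume equality on the augmented deck and let the same summation force every $\delta^\sigma_j$ to vanish; these are two packagings of the same counting argument.
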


Proposition~\ref{prop:overvote_2} tells us that each incorrect mapping $\sigma \in \Sigma$ will be detected by the modified deck $(\beta_1, \ldots, \beta_B, \tilde{\beta})$ if Assumption~\ref{ass:functionality} holds. This is because either the output of the voting machine  will differ from what is expected (i.e., $ T^\sigma(\beta_1,\ldots,\beta_B, \tilde{\beta}) \neq T^* (\beta_1,\ldots,\beta_B,  \tilde{\beta})$) or because some feasible ballot will produce an alert indicating that it has been overvoted (i.e., there will exist some $b \in \mathcal{B}$ and $c \in \mathcal{C}$ which satisfy $ \sum_{i \in \mathcal{N}_c} \beta_{b, \sigma(i)} \geq v_c + 1$).   In either event, the behavior of the machine under  mapping $\sigma \in \Sigma$ will be distinguishable from the behavior of a machine which is operating correctly. This allows us to satisfy the aforementioned state requirement by including ballot $\tilde{\beta}$ while maintaining the desired security guarantees.

\subsection{Party-Line Option}
Several states including Michigan (see \cite[p.18]{michiganlat}) provide an option to voters in certain ballot styles to choose a so-called ``party-line option". A party-line option is a special target on a ballot that, if marked, defaults the ballot to selecting a specific party's candidates (e.g., Republican candidates, Democrat candidates) in each of the contests. The voting machine functionality related to the processing of party-line voting is complex due to the fact that voters can select a party-line vote but also can, if desired, override the default party selection in one or more contests. Because of this complexity, states such as Michigan provide a separate set of test deck requirements for evaluating the functionality of party-line options. Because the  requirements of evaluating the party-line option are distinct from the requirements of test decks, a separate test deck than that obtained from solving the optimization problem~\eqref{prob:robust} can be constructed for testing party-option functionality.

\section{Solution Reuse and Translation} \label{appx:reuse}

In \S\ref{sec:experiments}, we discussed using our approach to generate test decks for Michigan's November 2022 general election. In this election, the state of Michigan used 6928 ballot styles. In principle, an optimal test deck for each ballot style could be found by solving the optimization problem~\eqref{prob:robust} for the specific ballot style. Solving the optimization problem~\eqref{prob:robust} once for each of the 6928 ballot styles, however, would be computationally time consuming.

In this appendix, we describe two strategies (which we refer to as `solution reuse' and `solution translation') that significantly reduced the computation time that was required for obtaining optimal test decks for all 6928 ballot styles.  The two strategies are based on showing that an optimal solution to the optimization problem~\eqref{prob:robust} for one ballot style will, under certain conditions, be a feasible (and sometimes optimal) solution for the optimization problem~\eqref{prob:robust} for another similar ballot style. By using our two strategies to reuse and translate optimal solutions between similar ballot styles,  we were able to obtain optimal test decks for all 6928 ballot styles despite solving the optimization problem~\eqref{prob:robust} to completion only 376 times.

\subsection{Solution Reuse} \label{appx:reuse:1}

Our first strategy consists of identifying conditions under which the optimal solution of the optimization problem~\eqref{prob:robust} for one ballot style is guaranteed to be an optimal solution for another similar ballot style. To apply this strategy, we first convert each ballot style into what we henceforth refer to as its \emph{normal form}. Converting a ballot style into its normal form entails performing the following two transformations:
\begin{enumerate}[leftmargin=*,align=left]
    \item Combine the ballot style's noncompetitive contests into a single contest as described in \S\ref{appx:improvements:noncompetitive}.
    \item Sort the indices of the contests such that $c < c'$ if $[| \mathcal{N}_c| > | \mathcal{N}_{c'}|]$ or $[| \mathcal{N}_c| = | \mathcal{N}_{c'}| \text{ and } v_c > v_{c'}]$.
\end{enumerate}
Any optimal solution to the optimization problem~\eqref{prob:robust} for the normalized version of a ballot style can be efficiently transformed into an optimal solution for the original style, simply by reversing the translation of candidate indicies on the output $\beta$ variables and separating the combined contest into its constituent components as previously described. This means we can eliminate repeated normalized forms, reducing the number of styles from 6928 to 1812.

\subsection{Solution Translation}
Our second strategy consists of identifying conditions under which an optimal solution for the optimization problem~\eqref{prob:robust} for one ballot style is guaranteed  to be a feasible (but possibly suboptimal) solution to the optimization problem~\eqref{prob:robust} for another similar ballot style.
Lemma~\ref{lem:solution_reduction} specifies the strategy in greater detail. To make the greatest use of Lemma~\ref{lem:solution_reduction}, imagine that each noncompetitive contest has been split so that each candidate has a contest of their own; we can do this without loss of generality as a corollary of Proposition~\ref{prop:noncompetitive}.

\begin{lemma} \label{lem:solution_reduction}
   Consider a ballot style parameterized by $(\mathcal{N}, \mathcal{C}, \{\mathcal{N}_c\}_{c \in \mathcal{C}}, \{v_c\}_{c \in \mathcal{C}})$, let $\bar{\mathcal{C}} \subset \mathcal{C}$ denote a subset of contests, and let $\bar{\mathcal{N}} \triangleq \bigcup_{c \in \bar{\mathcal{C}}} \mathcal{N}_c$ denote the set of candidates in those contests. If $(B, \beta_1, \ldots, \beta_B)$ is an optimal solution for the optimization problem~\eqref{prob:robust} for the ballot style parameterized by $(\mathcal{N}, \mathcal{C}, \{\mathcal{N}_c\}_{c \in \mathcal{C}}, \{v_c\}_{c \in \mathcal{C}})$, then $(B, \beta_1 \setminus \bar{\mathcal{N}}, \ldots, \beta_B \setminus \bar{\mathcal{N}})$ is a feasible solution for the optimization problem~\eqref{prob:robust} for the ballot style parameterized by $(\mathcal{N} \setminus \bar{\mathcal{N}}, \mathcal{C} \setminus \bar{\mathcal{C}}, \{\mathcal{N}_c\}_{c \in \mathcal{C} \setminus \bar{\mathcal{C}}}, \{v_c\}_{c \in \mathcal{C} \setminus \bar{\mathcal{C}} })$.
\end{lemma}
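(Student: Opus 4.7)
The plan is to apply Corollary~\ref{cor:diff_votes} to both ballot styles and lift every offending non-identity bijection on the reduced candidate set back to a non-identity bijection on the original candidate set, where we can exploit the optimality (actually just feasibility) of $(B,\beta_1,\ldots,\beta_B)$. First I would verify that the reduced ballots are feasible in the reduced ballot style, i.e., $\beta_b \setminus \bar{\mathcal{N}} \in \mathscr{B}$ for the reduced problem. This is immediate from the assumption that each candidate appears in exactly one contest: for $c \in \mathcal{C} \setminus \bar{\mathcal{C}}$ one has $\mathcal{N}_c \cap \bar{\mathcal{N}} = \emptyset$, so $\mathcal{N}_c \cap (\beta_b \setminus \bar{\mathcal{N}}) = \mathcal{N}_c \cap \beta_b$, and the latter is bounded by $v_c$ since $\beta_b \in \mathscr{B}$ in the original style.

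Next I would take an arbitrary non-identity bijection $\tilde{\sigma}$ on $\mathcal{N} \setminus \bar{\mathcal{N}}$ and extend it to a non-identity bijection $\sigma$ on $\mathcal{N}$ by setting $\sigma(i) \triangleq \tilde{\sigma}(i)$ for $i \in \mathcal{N} \setminus \bar{\mathcal{N}}$ and $\sigma(i) \triangleq i$ for $i \in \bar{\mathcal{N}}$. By feasibility of the original test deck together with Corollary~\ref{cor:diff_votes}, one of two conditions holds for $\sigma$ and $(\beta_1,\ldots,\beta_B)$; I would then show that in either case, the analogous condition holds for $\tilde{\sigma}$ and $(\beta_1 \setminus \bar{\mathcal{N}},\ldots,\beta_B \setminus \bar{\mathcal{N}})$.

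In the first case there exists $i \in \mathcal{N}$ with $|\{b: i \in \beta_b\}| \neq |\{b: \sigma(i) \in \beta_b\}|$. Such $i$ cannot lie in $\bar{\mathcal{N}}$, because then $\sigma(i)=i$ would force the two counts to agree. Hence $i$ and $\sigma(i)=\tilde\sigma(i)$ both lie in $\mathcal{N}\setminus\bar{\mathcal{N}}$, and removing $\bar{\mathcal{N}}$ from each ballot does not change whether $i$ or $\tilde\sigma(i)$ is selected, so the same inequality transfers to the reduced test deck. In the second case there exist $c \in \mathcal{C}$ and $b$ with $|\{\sigma(j)\in\beta_b : j\in\mathcal{N}_c\}|>v_c$. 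The contest $c$ cannot lie in $\bar{\mathcal{C}}$, since for such $c$ the set $\mathcal{N}_c$ is contained in $\bar{\mathcal{N}}$, on which $\sigma$ acts as the identity, and then the displayed quantity equals $|\mathcal{N}_c\cap\beta_b|\le v_c$. Thus $c\in\mathcal{C}\setminus\bar{\mathcal{C}}$, so $\mathcal{N}_c\subseteq\mathcal{N}\setminus\bar{\mathcal{N}}$ and $\sigma$ agrees with $\tilde\sigma$ on $\mathcal{N}_c$; moreover each $\tilde\sigma(j)$ lies in $\mathcal{N}\setminus\bar{\mathcal{N}}$, so intersecting with $\beta_b$ or with $\beta_b\setminus\bar{\mathcal{N}}$ yields the same set, and the overvote survives in the reduced deck.

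I do not expect a substantive obstacle here; the argument is essentially a bookkeeping verification that the ``identity on $\bar{\mathcal{N}}$'' extension is well-defined (i.e., still a non-identity bijection, which is clear since $\tilde\sigma$ is non-identity) and that the two conditions of Corollary~\ref{cor:diff_votes} are invariant under deleting $\bar{\mathcal{N}}$. The only subtle point is ruling out that the original violation is witnessed inside $\bar{\mathcal{N}}$ or $\bar{\mathcal{C}}$, and both exclusions follow immediately from the fact that $\sigma$ was constructed to act trivially there. Note that the lemma only asserts feasibility, not optimality, so no argument about minimality of $B$ is required.
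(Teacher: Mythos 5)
Your proof is correct and follows essentially the same route as the paper's: extend each non-identity bijection $\tilde{\sigma}$ on $\mathcal{N}\setminus\bar{\mathcal{N}}$ to $\mathcal{N}$ by the identity on $\bar{\mathcal{N}}$, use feasibility of the original deck to produce a violation witness, and observe that the witness must live outside $\bar{\mathcal{N}}$ (resp.\ $\bar{\mathcal{C}}$) and therefore survives deletion of $\bar{\mathcal{N}}$ from the ballots. The only cosmetic difference is that you route the argument through the two-condition characterization of Corollary~\ref{cor:diff_votes}, whereas the paper manipulates the tally functions $T^\sigma_i$ and $T^*_i$ directly; the content is the same.
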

\noindent In words, the above lemma shows that an optimal solution from a `complex' ballot style is guaranteed to be a feasible solution for a `simple' ballot style if (a) the competitive contests in the simpler ballot style are a subset of those in the more complex ballot style and (b) there are at least as many candidates in noncompetitive contests in the complex ballot style as in the simpler ballot style.

To understand the practical significance of Lemma~\ref{lem:solution_reduction}, we recall that each iteration of the cutting plane method from \S\ref{appx:cutting} involves solving the optimization problem~\eqref{prob:robust-subset} to obtain a lower bound on the optimal objective value of the optimization problem~\eqref{prob:robust}. If this lower bound is ever equal to the length of some feasible solution to~\eqref{prob:robust} derived for a more complicated style, we can halt {the cutting plane method} early and translate a solution according to this lemma.

To utilize this second strategy in \S\ref{sec:experiments}, we solved the optimization problem~\eqref{prob:robust} for ballot styles in decreasing order by their number of competitive contests. When two ballot styles had the same number of competitive contests, we solved the one with more candidates in noncompetitive contests first.  When using our cutting plane method for each ballot style, we first identified the shortest feasible solution which is suitable for translation (if any such solution exists) from the ballot styles that were solved previously. Finally, we  halted the cutting plane method early if the lower bound reached the length of that solution.

In practice, this second strategy allowed for early termination of the cutting plane method in a majority of ballot styles. Of the 6928 ballot styles and 1812 distinct normalized forms, we were able to terminate computation early in all but 376 cases. This yielded significant time savings; the average time to generate a test deck for a ballot style which terminates early is on the order of one-tenth of a second, while the average time to generate a test deck for the other 376 styles in on the order of a minute.

\clearpage

\section{Proofs} \label{appx:proofs}

\subsection{Proofs from \S\ref{sec:math:discussion}}

\begin{proof}[Proof of Theorem~\ref{thm:diff_votes}.] Let $B \in \N$, $\beta_1, \ldots, \beta_B \in \mathscr{B}$, and $\sigma \in \Sigma$.

To show the first direction of Theorem~\ref{thm:diff_votes}, suppose that the equality $| \{ b \in \{1,\ldots,B\}: i \in \beta_b \}|  =  |\{ b \in \{1,\ldots,B\}: \sigma(i) \in \beta_b \}|$ holds for all candidates $i \in \mathcal{N}$ and that the inequality $\left| \left \{ \sigma(j) \in \beta_b: j \in \mathcal{N}_c \right \}  \right| \leq v_c$ holds for all contests $c \in \mathcal{C}$  and all ballots $b \in \{1,\ldots,B\}$. In this case, we observe for each contest $c \in \mathcal{C}$ and each candidate $i \in \mathcal{N}_c$ that 
       \begin{align*}
T^\sigma_i({\beta}_1,\ldots,{\beta}_{{B}})& = \sum_{b=1}^B  \mathbb{I}\left \{   \sigma(i) \in {\beta}_b  \textnormal{ and } \left| \left \{ \sigma(j) \in {\beta}_b: j \in \mathcal{N}_c \right \}  \right| \leq v_c \right \}\\
&=  \sum_{b=1}^B  \mathbb{I}\left \{   \sigma(i) \in {\beta}_b  \right \}\\
&=  \left| \left \{ b \in \{1,\ldots,B \}: \sigma(i) \in \beta_b \right \} \right| \\
&=  \left| \left \{ b \in \{1,\ldots,B \}: i \in \beta_b \right \} \right| \\
&=  T^*_{i}({\beta}_1,\ldots,{\beta}_{{B}}). 
       \end{align*} 
       The first equality is the definition of  $T^\sigma_i(\cdot)$ from \S\ref{sec:math:swap}. The second equality follows from the supposition that  the inequality $\left| \left \{ \sigma(j) \in \beta_b: j \in \mathcal{N}_{c'} \right \}  \right| \leq v_{c'}$ holds for all  contests $c' \in \mathcal{C}$ and all ballots $b \in \{1,\ldots,B\}$. 
       The third equality follows from algebra. The fourth equality follows from the supposition that  the equality $| \{ b \in \{1,\ldots,B\}: i' \in \beta_b \}|  =  |\{ b \in \{1,\ldots,B\}: \sigma(i') \in \beta_b \}|$ holds for all candidates $i' \in \mathcal{N}$. The fifth equality follows from Remark~\ref{remark:overvotes} and from the fact that $\beta_1,\ldots,\beta_B \in \mathscr{B}$. Because we have shown that the equality $T^\sigma_i(\beta_1, \ldots, \beta_B) = T^*_i(\beta_1, \ldots, \beta_B)$ holds for all candidates $i \in \mathcal{N}$, our  proof of the first direction of Theorem~\ref{thm:diff_votes} is complete. 

To show the other direction of Theorem~\ref{thm:diff_votes}, suppose that the equality $T^*(\beta_1,\ldots,\beta_B) = T^\sigma(\beta_1,\ldots,\beta_B)$ holds. In this case, we observe that
\begin{align*}
      & \sum_{i \in \mathcal{N}} \sum_{b=1}^B  \mathbb{I}\left \{   \sigma(i) \in {\beta}_b  \textnormal{ and } \left| \left \{ \sigma(j) \in {\beta}_b: j \in \mathcal{N}_c \right \}  \right| \leq v_c \right \}\\
 &=  \sum_{i \in \mathcal{N}} T^\sigma_i(\beta_1, \ldots, \beta_B) \\
    &= \sum_{i \in \mathcal{N}} T^*_i(\beta_1, \ldots, \beta_B)\\
    &= \sum_{i \in \mathcal{N}} \sum_{b=1}^B \mathbb{I} \left \{ i \in \beta_b \textnormal{ and }\left| \mathcal{N}_c \cap \beta_b \right| \le v_c \right \} \\
    &= \sum_{i \in \mathcal{N}} \sum_{b=1}^B \mathbb{I} \left \{ i \in \beta_b \right \} \\
    &= \sum_{i \in \mathcal{N}} \sum_{b=1}^B \mathbb{I} \left \{ \sigma(i) \in \beta_b \right \}. 
\end{align*}
The first equality is the definition of $T^\sigma_i(\cdot)$. The second equality follows from the supposition that the equality $T^*(\beta_1,\ldots,\beta_B) = T^\sigma(\beta_1,\ldots,\beta_B)$ holds.  The third equality is the definition of $T^*_i(\cdot)$. The fourth equality follows from the fact that the ballots satisfy  $\beta_1, \ldots, \beta_B \in \mathscr{B}$. The fifth equality follows from the fact that $\sigma$ is a bijection. Combining the above equalities, we conclude that the inequality $\left| \left \{ \sigma(j) \in {\beta}_b: j \in \mathcal{N}_c \right \}  \right| \leq v_c $ must hold for all contests $c \in \mathcal{C}$ and all ballots $b \in \{1,\ldots,B\}$.

It remains for us to show that the equality $| \{ b \in \{1,\ldots,B\}: i \in \beta_b \}|  =  |\{ b \in \{1,\ldots,B\}: \sigma(i) \in \beta_b \}|$ holds for all candidates $i \in \mathcal{N}$. Indeed, we observe for each contest $c \in \mathcal{C}$ and each candidate $i \in \mathcal{N}_c$ that  
\begin{align*}
  | \{ b \in \{1,\ldots,B\}: \sigma(i) \in \beta_b \}|  
      &= \sum_{b=1}^B \mathbb{I} \left \{ \sigma(i) \in \beta_b \right \} \\
    &= \sum_{b=1}^B \mathbb{I} \left \{ \sigma(i) \in \beta_b \textnormal{ and } \left| \left \{ \sigma(j) \in {\beta}_b: j \in \mathcal{N}_c \right \}  \right| \leq v_c \right \} \\
    &= T^\sigma_i(\beta_1, \ldots, \beta_B) \\
    &= T^*_i(\beta_1, \ldots, \beta_B)\\
    &=  | \{ b \in \{1,\ldots,B\}: i \in \beta_b \}| . 
\end{align*}
The first equality follows from algebra. The second equality follows from our prior conclusion that the inequality $\left| \left \{ \sigma(j) \in {\beta}_b: j \in \mathcal{N}_c \right \}  \right| \leq v_c$ holds for all ballots $b \in \{1, \ldots, B\}$. The third equality is the definition of $T^\sigma_i(\cdot)$. The fourth equality follows from the supposition that $T^\sigma(\beta_1, \ldots, \beta_B) = T^*(\beta_1, \ldots, \beta_B)$. The fifth equality follows from Remark~\ref{remark:overvotes} and from the fact that $\beta_1,\ldots,\beta_B \in \mathscr{B}$.  Because we have shown that the equality $| \{ b \in \{1,\ldots,B\}: i \in \beta_b \}|  =  |\{ b \in \{1,\ldots,B\}: \sigma(i) \in \beta_b \}|$  holds for all candidates $i \in \mathcal{N}$, our proof of the other direction of Theorem~\ref{thm:diff_votes} is complete.
\end{proof}

\subsection{Proofs from \S\ref{appx:improvements:reducingdecisions}}

\begin{proof}[Proof of Lemma~\ref{lem:decrease_p}.]
Our proof consists of proving the contrapositive of the desired result. Indeed, consider any mapping $\sigma \in \widehat{\Sigma}$, contest $c \in \mathcal{C}$, and ballot $b \in \mathcal{B}$. Moreover, suppose that there exists a feasible solution of the mixed-integer linear optimization problem~\eqref{prob:complete_transposition} that satisfies the equality $p^\sigma_{b,c} = 0$. In this case, we observe that
\begin{align*}
    v_c + 1 &\le \sum_{i \in \mathcal{N}_c} \beta_{b,\sigma(i)}\\
    &= \sum_{c' \in \mathcal{C}} \left( \sum_{i \in \mathcal{N}_c: \sigma(i) \in \mathcal{N}_{c'}} \beta_{b,\sigma(i)} \right)\\
    &= \sum_{c' \in \mathcal{C}} \min \left \{  \sum_{i \in \mathcal{N}_c: \sigma(i) \in \mathcal{N}_{c'}} \beta_{b,\sigma(i)}, v_{c'}\right \}\\
    &\le \sum_{c' \in \mathcal{C}} \min \left \{  \left| \left \{i \in \mathcal{N}_c: \sigma(i) \in \mathcal{N}_{c'} \right \} \right|, v_{c'}\right \}. 
\end{align*}
Indeed, the first inequality follows from the fact that $p^\sigma_{b,c} = 0$ and constraint~\eqref{prob:complete_transposition:p}. The first equality follows from algebra. The second equality follows from constraint~\eqref{prob:complete_transposition:beta_feas}, which implies for each contest $c' \in \mathcal{C}$ that $\sum_{i \in \mathcal{N}_c: \sigma(i) \in \mathcal{N}_{c'}} \beta_{b,\sigma(i)} \le \sum_{i \in \mathcal{N}_{c'}} \beta_{b,i} \le v_{c'} $. The second inequality follows from algebra. 
\end{proof} 

\subsection{Proofs from \S\ref{appx:improvements:symmetry}}

\begin{proof}[Proof of Lemma~\ref{lem:symmetry_withincontest}.]
Consider any feasible solution of the optimization problem~\eqref{prob:robust}, and let that feasible solution be denoted by $(B,\beta_1,\ldots,\beta_B)$. Suppose for the sake of developing a contradiction that there exists a contest $c \in \mathcal{C}$ and a pair of candidates $i,j \in \mathcal{N}_c$ such that $i \neq j$ and  $| \{ b \in \{1,\ldots,B\}: i \in \beta_b \} |  = | \{ b \in \{1,\ldots,B\}: j \in \beta_b \} |$. In what follows, we will make use of  a non-identity bijection $\sigma \in \Sigma$ constructed for each candidate $i' \in \mathcal{N}$ as follows: 
\begin{align}
    \sigma \left(i' \right) &\triangleq \begin{cases}
    i',&\text{if } i' \in \mathcal{N} \setminus \{i,j \},\\
    j,&\text{if } i' = i,\\
    i,&\text{if } i' = j. 
    \end{cases} \label{line:symmetry_withincontest:defn_sigma}
\end{align}

 The remainder of the proof of Lemma~\ref{lem:symmetry_withincontest} consists of showing that the equality $T^\sigma(\beta_1,\ldots,\beta_B) = T^*(\beta_1,\ldots,\beta_B)$ is satisfied. 
 Indeed, we observe for each  candidate $i' \in \mathcal{N}$  that 
 \begin{align*}
 | \{ b \in \{1,\ldots,B\}: \sigma(i') \in \beta_b \} |   &= \begin{cases}
      | \{ b \in \{1,\ldots,B\}: i' \in \beta_b \} |,&\text{if } i' \in \mathcal{N} \setminus \{i,j \},\\
     | \{ b \in \{1,\ldots,B\}: j \in \beta_b \} |,&\text{if } i' =i,\\
    | \{ b \in \{1,\ldots,B\}: i \in \beta_b \} |,&\text{if } i' =j
 \end{cases}\\
 &=  | \{ b \in \{1,\ldots,B\}: i' \in \beta_b \} |, 
 \end{align*}
 where the first equality follows from our construction  of $\sigma \in \Sigma$ on line~\eqref{line:symmetry_withincontest:defn_sigma}  and the second equality follows from the supposition that $| \{ b \in \{1,\ldots,B\}: i \in \beta_b \} |  = | \{ b \in \{1,\ldots,B\}: j \in \beta_b \} |$. Moreover, we observe for each $b \in \{1,\ldots,B\}$ and each contest $c' \in \mathcal{C}$ that 
\begin{align}
&\left| \left \{ \sigma\left(i' \right) \in \beta_b: i' \in \mathcal{N}_{c'} \right \} \right| \notag \\ 
&= \begin{cases}
    \left| \left \{ i' \in \beta_b: i' \in \mathcal{N}_c \setminus \{i ,j\} \right \} \cup \{j \} \right|,&\text{if } c' = c, \; i \in \beta_b, \textnormal{ and } j \notin \beta_b,\\
     \left| \left \{ i' \in \beta_b: i' \in \mathcal{N}_c \setminus \{i,j \} \right \} \cup \{i \}\right| ,&\text{if } c' = c, \; i \notin \beta_b, \textnormal{ and } j \in \beta_b,\\
\left|  \left \{ i' \in \beta_b: i' \in \mathcal{N}_{c'} \right \} \right| ,&\text{otherwise}
\end{cases} \notag  \\
&= \begin{cases}
    \left| \left \{ i' \in \beta_b: i' \in \mathcal{N}_c \setminus \{i ,j\} \right \}  \right| + 1,&\text{if } c' = c, \; i \in \beta_b, \textnormal{ and } j \notin \beta_b,\\
     \left| \left \{ i' \in \beta_b: i' \in \mathcal{N}_c \setminus \{i,j \} \right \} \right| + 1 ,&\text{if } c' = c, \; i \notin \beta_b, \textnormal{ and } j \in \beta_b,\\
\left|  \left \{ i' \in \beta_b: i' \in \mathcal{N}_{c'} \right \} \right| ,&\text{otherwise}
\end{cases} \notag  \\
&= \left|  \left \{ i' \in \beta_b: i' \in \mathcal{N}_{c'} \right \}  \right| \notag \\ 
&\le v_{c'},\notag  
\end{align}
where the first equality follows from our construction of $\sigma \in \Sigma$ on line~\eqref{line:symmetry_withincontest:defn_sigma}, the second equality follows from algebra,  the third equality follows  from the fact that $i,j \in \mathcal{N}_{c}$, and the inequality follows from the fact that $\beta_1,\ldots,\beta_B \in \mathscr{B}$. Combining the above analysis with Theorem~\ref{thm:diff_votes} from \S\ref{sec:math:discussion}, we conclude that the  equality $T^\sigma(\beta_1,\ldots,\beta_B) = T^*(\beta_1,\ldots,\beta_B)$ is satisfied. 

Because we have shown that there exists a non-identity bijection $\sigma \in \Sigma$ that satisfies the equality $T^\sigma(\beta_1,\ldots,\beta_B) = T^*(\beta_1,\ldots,\beta_B)$, we have obtained a  contradiction with the fact the that $(B,\beta_1,\ldots,\beta_B)$ is a feasible solution of the optimization problem~\eqref{prob:robust}. Our proof of Lemma~\ref{lem:symmetry_withincontest} is thus complete.  \end{proof}

 \begin{proof}[Proof of Proposition~\ref{prop:symmetry_withincontest}.]
Consider any optimal solution of the optimization problem~\eqref{prob:robust}, and let that optimal solution be denoted by $(\beta_1,\ldots,\beta_B)$. We henceforth assume without loss of generality that $| \{ b \in \{1,\ldots,B\}: i \in \beta_b \} |  \le | \{ b \in \{1,\ldots,B\}: j \in \beta_b \} |$ for all candidates $i < j$  that appear in the same contest. To see why this assumption is without loss of generality, suppose for the sake of argument that this assumption was not true. In that case, for each contest $c \in \mathcal{C}$, let $\pi_c: \mathcal{N}_c \to \mathcal{N}_c$ be a bijection that satisfies $| \{ b \in \{1,\ldots,B\}: \pi_c(i) \in \beta_b \} |  \le | \{ b \in \{1,\ldots,B\}: \pi_c(j) \in \beta_b \} |$  for each pair of candidates $i,j \in \mathcal{N}_c$ that satisfies $i < j$. Hence, by replacing the index of each candidate $i \in \mathcal{N}_c$ with the index $\pi_c(i)$, we conclude that the assumption that $T^*_i(\beta_1,\ldots,\beta_B) \le T^*_j(\beta_1,\ldots,\beta_B)$ for all candidates $i < j$  that appear in the same contest can be made without loss of generality. Combining that assumption with Lemma~\ref{lem:symmetry_withincontest}, our proof of Proposition~\ref{prop:symmetry_withincontest} is complete. 
\end{proof}

\subsection{Proofs from \S\ref{appx:improvements:n_choose_k}}

\begin{proof}[Proof of Lemma~\ref{lem:symmetry_acrosscontests}.]
Consider any feasible solution of the optimization problem~\eqref{prob:robust}, and let that feasible solution be denoted by $(B,\beta_1,\ldots,\beta_B)$. Suppose for the sake of developing a contradiction that there exist two contests $c < c'$ that satisfy $c \equiv c'$ and  satisfy 
\begin{gather*}
\left| \left \{ b \in \{1,\ldots,B\}: \mathcal{N}^{| \mathcal{N}_{c}|}_c \in \beta_b \right \} \right| = \left| \left \{ b \in \{1,\ldots,B\}: \mathcal{N}^{| \mathcal{N}_{c}|}_{c'} \in \beta_b \right \} \right|  \\
\vdots \\
\left| \left \{ b \in \{1,\ldots,B\}: \mathcal{N}^{1}_c \in \beta_b \right \} \right| = \left| \left \{ b \in \{1,\ldots,B\}: \mathcal{N}^{1}_{c'} \in \beta_b \right \} \right|.
\end{gather*}
In what follows, we will make use of a non-identity bijection $\sigma \in \Sigma$ that is defined for each candidate $i \in \mathcal{N}$ as  follows:
\begin{align}
    \sigma \left(i \right) &\triangleq \begin{cases}
    \mathcal{N}^k_c,&\text{if there exists } k \in  \{1,\ldots,|\mathcal{N}_{c}| \} \textnormal{ such that } i = \mathcal{N}^k_{c'},\\
    \mathcal{N}^k_{c'},&\text{if there exists } k \in  \{1,\ldots,|\mathcal{N}_{c}| \} \textnormal{ such that } i = \mathcal{N}^k_{c},\\
    i,&\text{otherwise}. 
    \end{cases} \label{line:symmetry_acrosscontests:defn_sigma}
\end{align}
We observe by construction that the bijection $\sigma$ swaps the targets of candidates $\mathcal{N}^k_c$ and $\mathcal{N}^k_{c'}$ for each $k \in \{1,\ldots,| \mathcal{N}_c| \}$. 

 The remainder of the proof of Lemma~\ref{lem:symmetry_acrosscontests} consists of showing that the equality $T^\sigma(\beta_1,\ldots,\beta_B) = T^*(\beta_1,\ldots,\beta_B)$ is satisfied. 
Indeed, we observe for each candidate $i \in \mathcal{N}$ that  
 \begin{align*}
& | \{ b \in \{1,\ldots,B\}: \sigma(i) \in \beta_b \} |  \\
 &= \begin{cases}
     | \{ b \in \{1,\ldots,B\}: \sigma(\mathcal{N}^k_{c'}) \in \beta_b \} |,&\text{if there exists } k \in  \{1,\ldots,|\mathcal{N}_{c}| \} \textnormal{ such that } i = \mathcal{N}^k_{c'},\\
     | \{ b \in \{1,\ldots,B\}:\sigma( \mathcal{N}^k_{c}) \in \beta_b \} |,&\text{if there exists } k \in  \{1,\ldots,|\mathcal{N}_{c}| \} \textnormal{ such that } i = \mathcal{N}^k_{c},\\
      | \{ b \in \{1,\ldots,B\}: \sigma(i) \in \beta_b \} |,&\text{otherwise}
 \end{cases}\\
  &= \begin{cases}
     | \{ b \in \{1,\ldots,B\}: \mathcal{N}^k_c \in \beta_b \} |,&\text{if there exists } k \in  \{1,\ldots,|\mathcal{N}_{c}| \} \textnormal{ such that } i = \mathcal{N}^k_{c'},\\
     | \{ b \in \{1,\ldots,B\}: \mathcal{N}^k_{c'} \in \beta_b \} |,&\text{if there exists } k \in  \{1,\ldots,|\mathcal{N}_{c}| \} \textnormal{ such that } i = \mathcal{N}^k_{c},\\
      | \{ b \in \{1,\ldots,B\}: i \in \beta_b \} |,&\text{otherwise}
 \end{cases}\\
 &=  | \{ b \in \{1,\ldots,B\}: i \in \beta_b \} |, 
 \end{align*}
 where the first equality follows from  algebra and from the fact that $| \mathcal{N}_c| = |\mathcal{N}_{c'}|$, 
the second equality follows from our construction  of $\sigma \in \Sigma$ on line~\eqref{line:symmetry_acrosscontests:defn_sigma}, and the third equality follows from the supposition that the equality  $|  \{ b \in \{1,\ldots,B\}: \mathcal{N}^{k}_c \in \beta_b  \} | = |  \{ b \in \{1,\ldots,B\}: \mathcal{N}^{k}_{c'} \in \beta_b  \} |$ holds for all $k \in \{1,\ldots,| \mathcal{N}_c| \}$. Moreover, we observe for each $b \in \{1,\ldots,B\}$ and each contest $c'' \in \mathcal{C}$ that  
\begin{align*}
&\left| \left \{ \sigma\left(i \right) \in \beta_b: i \in \mathcal{N}_{c''} \right \} \right| \notag \\ 
&= \begin{cases}
    \left| \left \{ \sigma \left( \mathcal{N}^k_{c'} \right): k \in \{1,\ldots,| \mathcal{N}_c| \} \text{ and } \mathcal{N}^k_{c'} \in \beta_b \right \} \right|,&\text{if } c'' = c',\\
    \left| \left \{ \sigma \left( \mathcal{N}^k_{c} \right): k \in \{1,\ldots,| \mathcal{N}_c| \} \text{ and } \mathcal{N}^k_{c} \in \beta_b \right \} \right|,&\text{if } c'' = c,\\
\left|  \left \{ \sigma(i) \in \beta_b: i \in \mathcal{N}_{c''} \right \} \right| ,&\text{otherwise}
\end{cases} \notag  \\
&= \begin{cases}
    \left| \left \{ \mathcal{N}^k_{c}: k \in \{1,\ldots,| \mathcal{N}_c| \} \text{ and } \mathcal{N}^k_{c'} \in \beta_b \right \} \right|,&\text{if } c'' = c',\\
    \left| \left \{ \mathcal{N}^k_{c'}: k \in \{1,\ldots,| \mathcal{N}_c| \} \text{ and } \mathcal{N}^k_{c} \in \beta_b \right \} \right|,&\text{if } c'' = c,\\
\left|  \left \{ i \in \beta_b: i \in \mathcal{N}_{c''} \right \} \right| ,&\text{otherwise}
\end{cases} \notag  \\
&\le \begin{cases}
   v_c,&\text{if } c'' = c',\\
    v_{c'},&\text{if } c'' = c,\\
v_{c''},&\text{otherwise}
\end{cases}\\
&= v_{c''},
\end{align*}
where the first equality follows from algebra and from the fact that $| \mathcal{N}_c| = | \mathcal{N}_{c'}|$,   the second equality follows from our construction of $\sigma \in \Sigma$ on line~\eqref{line:symmetry_acrosscontests:defn_sigma}, the inequality follows from the fact that $\beta_1,\ldots,\beta_B \in \mathscr{B}$, and the third equality follows  from the fact that $v_c = v_{c'}$. Combining the above analysis with Theorem~\ref{thm:diff_votes} from \S\ref{sec:math:discussion}, we conclude that the  equality $T^\sigma(\beta_1,\ldots,\beta_B) = T^*(\beta_1,\ldots,\beta_B)$ is satisfied. 

Because we have shown that there exists a non-identity bijection $\sigma \in \Sigma$ that satisfies the equality $T^\sigma(\beta_1,\ldots,\beta_B) = T^*(\beta_1,\ldots,\beta_B)$, we have obtained a  contradiction with the fact  that $(B,\beta_1,\ldots,\beta_B)$ is a feasible solution of the optimization problem~\eqref{prob:robust}. Our proof of Lemma~\ref{lem:symmetry_acrosscontests} is thus complete. 
\end{proof}

\begin{proof}[Proof of Proposition~\ref{prop:symmetry_acrosscontests}.]
Consider any optimal solution of the optimization problem~\eqref{prob:robust}, and let that optimal solution be denoted by $(B,\beta_1,\ldots,\beta_B)$. We henceforth assume without loss of generality that for all contests $c < c'$ that satisfy $c \equiv c'$, there exists $k \in \{1,\ldots,| \mathcal{N}_c| \}$ that satisfies
\begin{gather*}
\left| \left \{ b \in \{1,\ldots,B\}: \mathcal{N}^{| \mathcal{N}_{c}|}_c \in \beta_b \right \} \right| = \left| \left \{ b \in \{1,\ldots,B\}: \mathcal{N}^{| \mathcal{N}_{c}|}_{c'} \in \beta_b \right \} \right|  \\
\vdots \\
\left| \left \{ b \in \{1,\ldots,B\}: \mathcal{N}^{k+1}_c \in \beta_b \right \} \right| = \left| \left \{ b \in \{1,\ldots,B\}: \mathcal{N}^{k+1}_{c'} \in \beta_b \right \} \right|  \\
\left| \left \{ b \in \{1,\ldots,B\}: \mathcal{N}^{k}_c \in \beta_b \right \} \right| \le \left| \left \{ b \in \{1,\ldots,B\}: \mathcal{N}^{k}_{c'} \in \beta_b \right \} \right|.
\end{gather*}
This assumption is without loss of generality because the indices of contests that are equivalent can always be permuted to ensure that the vectors $(| \{ b \in \{1,\ldots,B\}: \mathcal{N}^{1}_c \in \beta_b \} |,\ldots, | \{ b \in \{1,\ldots,B\}: \mathcal{N}^{| \mathcal{N}_c|}_c \in \beta_b \} |)$ are lexicographically ordered. Combining that assumption with Lemma~\ref{lem:symmetry_acrosscontests}, our proof of Proposition~\ref{prop:symmetry_acrosscontests} is complete.
\end{proof}

\subsection{Proofs from \S\ref{appx:improvements:swap}}

\begin{proof}[Proof of Theorem~\ref{thm:minimal}.]
Let $\sigma \in \Sigma$ denote a feasible solution of the optimization problem~\eqref{prob:oracle}. For each $k \in \{1,\ldots,K^\sigma \}$, let $\sigma_k: \mathcal{N} \to \mathcal{N}$ be defined for each $c \in \mathcal{C}$ and $i \in \mathcal{N}_c$ by
\begin{align*}
\sigma_k(i) &\triangleq \begin{cases}
    \sigma(i),&\textnormal{if } c \in \mathscr{K}^\sigma_k,\\
    i,&\textnormal{if } c \notin \mathscr{K}^\sigma_k. 
\end{cases}
\end{align*}

We begin by showing that each of the functions $\sigma_k: \mathcal{N} \to \mathcal{N}$ is a non-identity bijection. Indeed, we observe for each contest $c \notin \mathscr{K}^\sigma_k$ and each candidate $i \in \mathcal{N}_c$ that the equality $\sigma_k^{-1}(i) = i$ holds. Moreover, for each  contest $c \in \mathscr{K}^\sigma_k$ and each candidate $i \in \mathcal{N}_c$, it follows from our construction of the undirected graph $\mathscr{G}^\sigma \equiv (\mathscr{V}^\sigma,\mathscr{E}^\sigma)$, from the definition of a connected component, and from the inclusion $\sigma \in \Sigma$ that there exists a contest $c' \in \mathscr{K}^\sigma_k$ and a candidate $i' \in \mathcal{N}_{c'}$ that satisfies $i' \neq i$ and  $\sigma_k^{-1}(i) = i'$. Therefore, we have shown for all candidates $i \in \mathcal{N}$ that there exists $i' \in \mathcal{N}$ that satisfies the equality $\sigma^{-1}_k(i') = i$, which concludes our proof that $\sigma_k$ is a bijection. Moreover, since $\mathscr{K}^\sigma_k$ is nonempty, we have argued that there must exist candidates $i' \neq i$ that satisfy $\sigma_k^{-1}(i) = i'$. Therefore,  we conclude that $\sigma_k$ is a non-identity bijection. 

We next show that each of the functions  $\sigma_k: \mathcal{N} \to \mathcal{N}$ is a feasible solution of the optimization problem~\eqref{prob:oracle}. Indeed, we have already shown that $\sigma_k$ is a non-identity bijection, which implies that $\sigma_k \in \Sigma$. Moreover, for each contest $c \in \mathcal{C}$ and each candidate $i \in \mathcal{N}_c$, we observe that
\begin{align*}
&T^{\sigma_k}_i \left (\beta_1,\ldots,\beta_{B} \right)\\
&= \sum_{b=1}^B   \mathbb{I}\left \{   \sigma_k(i) \in \beta_b  \textnormal{ and } \left| \left \{ \sigma_k(j) \in \beta_b: j \in \mathcal{N}_c \right \}  \right| \leq v_c \right \} \\
&= \begin{cases}
    \sum_{b=1}^B \mathbb{I}\left \{   \sigma(i) \in \beta_b  \textnormal{ and } \left| \left \{ \sigma(j) \in \beta_b: j \in \mathcal{N}_c \right \}  \right| \leq v_c \right \},&\text{if } c \in \mathscr{K}^\sigma_k,\\
    \sum_{b=1}^B \mathbb{I}\left \{   i \in \beta_b  \textnormal{ and } \left| \left \{ j \in \beta_b: j \in \mathcal{N}_c \right \}  \right| \leq v_c \right \},&\text{if } c \notin \mathscr{K}^\sigma_k
\end{cases} \\
&= \begin{cases}
    T^{\sigma}_i \left (\beta_1,\ldots,\beta_{B} \right),&\text{if } c \in \mathscr{K}^\sigma_k,\\
    T^*_i \left (\beta_1,\ldots,\beta_{B} \right),&\text{if } c \notin \mathscr{K}^\sigma_k
\end{cases}\\
&= T^*_i \left (\beta_1,\ldots,\beta_{B} \right).
\end{align*}
Indeed, the first equality is the definition of $T^{\sigma_k}_i(\beta_1,\ldots,\beta_{B})$. The  second equality follows from the definition of $\sigma_k$. The third equality follows from the definitions of $T^{\sigma}_i(\beta_1,\ldots,\beta_{B})$ and $T^{*}_i(\beta_1,\ldots,\beta_{B})$. The fourth equality follows from the fact that $\sigma \in \Sigma$ is a feasible solution of the optimization problem~\eqref{prob:oracle}, which implies that the equality $T^{\sigma}_i \left (\beta_1,\ldots,\beta_{B} \right) = T^*_i \left (\beta_1,\ldots,\beta_{B} \right)$ holds for all candidates $i \in \mathcal{N}$. Our proof that $\sigma_k$ is a feasible solution of the optimization problem~\eqref{prob:oracle} is thus complete.  

As our final step, we show that line~\eqref{line:sigma_breakdown} holds. Indeed, we observe for each $k \in \{1,\ldots,K^\sigma \}$, $B \in \N$, $(\beta_1,\ldots,\beta_B) \in \mathscr{B}^B$, $c \in \mathcal{C}$, and $i \in \mathcal{N}_c$ that 
\begin{align}
    &T^{\sigma_k}_i(\beta_1,\ldots,\beta_B) \notag \\
    &= \sum_{b =1}^B  \mathbb{I}\left \{   \sigma_k(i) \in {\beta}_b  \textnormal{ and } \left| \left \{ \sigma_k(j) \in {\beta}_b: j \in \mathcal{N}_c \right \}  \right| \leq v_c \right \} \notag  \\
&= \begin{cases}
    \sum_{b=1}^B  \mathbb{I}\left \{   \sigma(i) \in {\beta}_b  \textnormal{ and } \left| \left \{ \sigma(j) \in {\beta}_b: j \in \mathcal{N}_c \right \}  \right| \leq v_c \right \},&\text{if } c \in \mathscr{K}^\sigma_k,\\
    \sum_{b =1}^B  \mathbb{I}\left \{   i \in {\beta}_b  \textnormal{ and } \left| \left \{ j \in {\beta}_b: j \in \mathcal{N}_c \right \}  \right| \leq v_c \right \},&\text{if } c \notin \mathscr{K}^\sigma_k 
\end{cases}  \notag \\
    &= \begin{cases}
    T^{\sigma}_i \left ({\beta}_1,\ldots,{\beta}_{{B}} \right),&\text{if } c \in \mathscr{K}^\sigma_k,\\
    T^*_i \left ({\beta}_1,\ldots,{\beta}_{{B}} \right),&\text{if } c \notin \mathscr{K}^\sigma_k,
\end{cases} \label{line:breakdown_sigma_k}
\end{align}
where the first equality is the definition of $T^{\sigma_k}_i({\beta}_1,\ldots,{\beta}_{{B}})$, the  second equality follows from the definition of $\sigma_k$, and the third equality follows from the definitions of $T^{\sigma}_i(\beta_1,\ldots,\beta_{B})$ and $T^{*}_i(\beta_1,\ldots,\beta_{B})$. Therefore, we observe that 
\begin{align*}
    &\bigcup_{k=1}^{K^\sigma} \mathscr{F} \left( \widehat{\Sigma} \cup \{ \sigma_k \} \right) \\
    &=  \bigcup_{k=1}^{K^\sigma}  \left(  \mathscr{F} \left(\widehat{\Sigma} \right) \cap \bigcup_{B \in \N} \left \{(\beta_1,\ldots,\beta_B) \in \mathscr{B}^B: T^{\sigma_k}(\beta_1,\ldots,\beta_B) \neq   T^*(\beta_1,\ldots,\beta_B) \right \} \right) \\
 &=    \mathscr{F} \left(\widehat{\Sigma} \right) \cap \bigcup_{B \in \N} \bigcup_{k=1}^{K^\sigma}  \left \{(\beta_1,\ldots,\beta_B) \in \mathscr{B}^B: T^{\sigma_k}(\beta_1,\ldots,\beta_B) \neq   T^*(\beta_1,\ldots,\beta_B) \right \} \\
  &=    \mathscr{F} \left(\widehat{\Sigma} \right) \cap \bigcup_{B \in \N} \bigcup_{k=1}^{K^\sigma} \bigcup_{c \in \mathcal{C}} \bigcup_{i \in \mathcal{N}_c} \left \{(\beta_1,\ldots,\beta_B) \in \mathscr{B}^B: T^{\sigma_k}_i(\beta_1,\ldots,\beta_B) \neq   T^*_i(\beta_1,\ldots,\beta_B) \right \} \\
    &=    \mathscr{F} \left(\widehat{\Sigma} \right) \cap \bigcup_{B \in \N} \bigcup_{k=1}^{K^\sigma} \bigcup_{c \in \mathscr{K}^\sigma_k} \bigcup_{i \in \mathcal{N}_c} \left \{(\beta_1,\ldots,\beta_B) \in \mathscr{B}^B: T^{\sigma}_i(\beta_1,\ldots,\beta_B) \neq   T^*_i(\beta_1,\ldots,\beta_B) \right \} \\
        &=    \mathscr{F} \left(\widehat{\Sigma} \right) \cap \bigcup_{B \in \N} \bigcup_{c \in \mathscr{V}^\sigma} \bigcup_{i \in \mathcal{N}_c} \left \{(\beta_1,\ldots,\beta_B) \in \mathscr{B}^B: T^{\sigma}_i(\beta_1,\ldots,\beta_B) \neq   T^*_i(\beta_1,\ldots,\beta_B) \right \} \\
&=    \mathscr{F} \left(\widehat{\Sigma} \right) \cap \bigcup_{B \in \N} \bigcup_{c \in \mathcal{C}} \bigcup_{i \in \mathcal{N}_c} \left \{(\beta_1,\ldots,\beta_B) \in \mathscr{B}^B: T^{\sigma}_i(\beta_1,\ldots,\beta_B) \neq   T^*_i(\beta_1,\ldots,\beta_B) \right \} \\
&=    \mathscr{F} \left(\widehat{\Sigma} \right) \cap \bigcup_{B \in \N}  \left \{(\beta_1,\ldots,\beta_B) \in \mathscr{B}^B: T^{\sigma}(\beta_1,\ldots,\beta_B) \neq   T^*(\beta_1,\ldots,\beta_B) \right \} \\
&=    \mathscr{F} \left(\widehat{\Sigma} \cup \{ \sigma \} \right).
\end{align*}
Indeed, the first equality follows from the definition of the optimization problem~\eqref{prob:robust-subset}. The second  and third equalities follow from algebra. The fourth equality follows from line~\eqref{line:breakdown_sigma_k}. The fifth equality follows from the fact that $\mathscr{K}^\sigma_1,\ldots,\mathscr{K}^\sigma_{K^\sigma}$ are the connected components of the undirected graph  $\mathscr{G}^\sigma \equiv (\mathscr{V}^\sigma,\mathscr{E}^\sigma)$, which implies that $\mathscr{K}^\sigma_1 \cup \cdots \cup \mathscr{K}^\sigma_{K^\sigma} = \mathscr{V}^\sigma $. The sixth equality follows from the definition of $\mathscr{V}^\sigma$, which implies that the equality $\sigma(i) = i$ is satisfied  for all candidates $i \in \cup_{c \in \mathcal{C} \setminus \mathscr{V}^\sigma} \mathcal{N}_c$. The seventh and eighth equalities follow from algebra. Our proof of Theorem~\ref{thm:minimal} is thus complete. 
\end{proof}

\begin{proof}[Proof of Theorem~\ref{thm:minimal_reform}.]
Consider any optimal solution $x \in \{0,1\}^{\mathcal{N} \times \mathcal{N}}$ of the mixed-integer linear optimization problem~\eqref{prob:oracle_mip_modified}. Let $\sigma: \mathcal{N} \to \mathcal{N}$ be the function that satisfies the equality $\sigma(i) = j$ if and only if $x_{i,j} = 1$ for all $i,j \in \mathcal{N}$. In this case, it follows from the discussion in \S\ref{appx:mip:oracle} that $\sigma$ is a non-identity bijection that is a feasible solution for the optimization problem~\eqref{prob:oracle}. 

Suppose for the sake of developing a contradiction that the number of connected components of the undirected graph  $\mathscr{G}^\sigma \equiv (\mathscr{V}^\sigma,\mathscr{E}^\sigma)$  satisfies $K^{\sigma} \ge 2$. For each $k \in \{1,\ldots,K^\sigma \}$, let $\sigma_k: \mathcal{N} \to \mathcal{N}$ be defined for each $c \in \mathcal{C}$ and $i \in \mathcal{N}_c$ by
\begin{align*}
\sigma_k(i) &\triangleq \begin{cases}
    \sigma(i),&\textnormal{if } c \in \mathscr{K}^\sigma_k,\\
    i,&\textnormal{if } c \notin \mathscr{K}^\sigma_k,
\end{cases}
\end{align*}
where $\mathscr{K}^\sigma_1,\ldots,\mathscr{K}^\sigma_{K^\sigma} \subseteq \mathscr{V}^\sigma$ denote the connected components of the undirected graph  $\mathscr{G}^\sigma \equiv (\mathscr{V}^\sigma,\mathscr{E}^\sigma)$
In this case,  it follows from Theorem~\ref{thm:minimal} that 
 $\sigma_1,\ldots,\sigma_{K^\sigma}$ are feasible solutions of  the optimization problem~\eqref{prob:oracle}. 
Because $\sigma_1 \in \Sigma$ is a feasible solution of the optimization problem~\eqref{prob:oracle}, we observe that a feasible solution for the mixed-integer linear optimization problem~\eqref{prob:oracle_mip_modified} is given by $\bar{x} \in \{0,1\}^{\mathcal{N} \times \mathcal{N}}$, which is defined by $\bar{x}_{i,j} \triangleq \mathbb{I} \left \{ \sigma_1(i) = j \right \}$ for all $i,j \in \mathcal{N}$. We observe that
\begin{align*}
    \sum_{i,j \in \mathcal{N}: i \neq j} \bar{x}_{i,j}  &=     \left| \left \{ i \in \mathcal{N}: \sigma_1(i) \neq i \right \}  \right| \\
    &= \sum_{c \in \mathscr{K}^\sigma_1}  \left| \left \{ i \in \mathcal{N}_c: {\sigma}(i) \neq i \right \}  \right|\\ &<  \sum_{c \in \mathcal{C}}  \left| \left \{ i \in \mathcal{N}_c: {\sigma}(i) \neq i \right \}  \right|\\
   &=  \sum_{i,j \in \mathcal{N}: i \neq j} {x}_{i,j}. 
\end{align*}
Indeed, the first equality follows from our construction of $\bar{\sigma}$. The second equality follows from the definition of $\sigma_1$. The strict  inequality follows from the fact that $K^\sigma \ge 2$, which implies that there exists a candidate $i \notin \cup_{c \in \mathscr{K}^\sigma_1} \mathcal{N}_c$ that satisfies $\sigma(i) \neq i$. The third equality follows from the definition of $\sigma$. 

In conclusion, we have shown that there exists a feasible solution $\bar{x} \in \{0,1\}^{\mathcal{N} \times \mathcal{N}}$ for the mixed-integer linear optimization problem~\eqref{prob:oracle_mip_modified} with an objective value that is strictly better than the objective value associated with ${x} \in \{0,1\}^{\mathcal{N} \times \mathcal{N}}$. We thus have a contradiction with the supposition that $x$ is an optimal solution of the mixed-integer linear optimization problem~\eqref{prob:oracle_mip_modified}, which concludes our proof of Theorem~\ref{thm:minimal_reform}. 
\end{proof}

\subsection{Proofs from \S\ref{appx:improvements:noncompetitive}}

\begin{proof}[Proof of Proposition~\ref{prop:noncompetitive}.]
Consider any original ballot style $({\mathcal{N}}, {\mathcal{C}}, \{ {\mathcal{N}}_c \}_{c \in {\mathcal{C}}}, \{{v}_c\}_{c \in {\mathcal{C}}})$, and let the ballot style in which all of the noncompetitive contests from the original ballot style are combined into a single contest be denoted by  $({\mathcal{N}}, \widetilde{\mathcal{C}}, \{ \widetilde{\mathcal{N}}_c \}_{c \in \widetilde{\mathcal{C}}}, \{\widetilde{v}_c\}_{c \in \widetilde{\mathcal{C}}})$. Moreover, consider any $B \in \N$,    $\beta_1,\ldots,\beta_B  \subseteq \mathcal{N}$, $\sigma \in \Sigma \cup \{*\}$, and $i \in \mathcal{N}$. Finally,  let $c \in \mathcal{C}$ denote the contest from the original ballot style that satisfies $i \in \mathcal{N}_c$, and let $\widetilde{c} \in \widetilde{\mathcal{C}}$ denote the contest from the new ballot style that satisfies $i \in \widetilde{\mathcal{N}}_{\widetilde{c}}$. We observe that 
\begin{align*}
&\widetilde{T}^\sigma_i(\beta_1,\ldots,\beta_B) \\
&= \sum_{b=1}^B  \mathbb{I}\left \{   \sigma(i) \in \beta_b  \textnormal{ and } \left| \left \{ \sigma(j) \in \beta_b: j \in \widetilde{\mathcal{N}}_{\widetilde{c}} \right \}  \right| \leq \widetilde{v}_{\widetilde{c}} \right \}\\
&= \begin{cases}
 \sum_{b=1}^B  \mathbb{I}\left \{   \sigma(i) \in \beta_b  \textnormal{ and } \left| \left \{ \sigma(j) \in \beta_b: j \in \mathcal{N}_c \right \}  \right| \leq v_c \right \},&\text{if } \widetilde{c} \neq 0,  \\
  \sum_{b=1}^B  \mathbb{I}\left \{   \sigma(i) \in \beta_b  \textnormal{ and } \left| \left \{ \sigma(j) \in \beta_b: j \in \widetilde{\mathcal{N}}_0 \right \}  \right| \leq \widetilde{v}_0 \right \},&\text{if } \widetilde{c} = 0
 \end{cases}\\
 &= \begin{cases}
 \sum_{b=1}^B  \mathbb{I}\left \{   \sigma(i) \in \beta_b  \textnormal{ and } \left| \left \{ \sigma(j) \in \beta_b: j \in \mathcal{N}_c \right \}  \right| \leq v_c \right \},&\text{if } \widetilde{c} \neq 0,  \\
  \sum_{b=1}^B  \mathbb{I}\left \{   \sigma(i) \in \beta_b  \right \},&\text{if } \widetilde{c} = 0
 \end{cases}\\
 &= \begin{cases}
 \sum_{b=1}^B  \mathbb{I}\left \{   \sigma(i) \in \beta_b  \textnormal{ and } \left| \left \{ \sigma(j) \in \beta_b: j \in \mathcal{N}_c \right \}  \right| \leq v_c \right \},&\text{if } \widetilde{c} \neq 0,  \\
  \sum_{b=1}^B  \mathbb{I}\left \{   \sigma(i) \in \beta_b  \textnormal{ and } \left| \left \{ \sigma(j) \in \beta_b: j \in {\mathcal{N}}_c \right \}  \right| \leq {v}_c \right \},&\text{if } \widetilde{c} = 0
 \end{cases}\\
 &= 
T^\sigma_i(\beta_1,\ldots,\beta_B). 
\end{align*}
The first equality is the definition of $\widetilde{T}^\sigma_i(\cdot)$.
The second equality follows from the fact that if $\widetilde{c} \neq 0$, then it follows from the construction of the new ballot style that $c = \widetilde{c}$, $\widetilde{\mathcal{N}}_{\widetilde{c}} = \mathcal{N}_c$, and $\widetilde{v}_{\widetilde{c}} = v_c$.  The third equality follows from the fact that $\widetilde{v}_0 = | \widetilde{\mathcal{N}}_0|$. The fourth equality follows from the facts that $i \in \mathcal{N}_c$ and  $v_c = | \mathcal{N}_c|$. The fifth equality follows from the definition of $T^\sigma_i(\cdot)$. Our proof of Proposition~\ref{prop:noncompetitive} is thus complete.  
\end{proof}

\subsection{Proofs from Appendix~\ref{appx:upperbound}}
\begin{proof}[Proof of Proposition~\ref{prop:heuristic1}.]
    Consider a test deck defined by the following equalities:
    \begin{align*}
        \beta_1 &= \{1 \},\\
        \beta_2,\beta_3 &= \{2 \},\\
        \beta_4,\beta_5,\beta_6 &= \{3\},\\
        &\vdots\\
        \beta_{\frac{N(N-1)}{2} + 1},\ldots,\beta_{\frac{N(N+1)}{2}} &= \{N \}. 
    \end{align*}We observe that the above test deck consists of $B = N(N+1) / 2$ filled-out ballots.  Moreover, it follows from the fact that $v_c \ge 1$ for all contests $c \in \mathcal{C}$ that the above filled-out ballots satisfy $\beta_1,\ldots,\beta_B \in \mathscr{B}$. 
    
    It remains for us to show that the test deck defined above satisfies the constraints of the optimization problem~\eqref{prob:robust}.  Indeed, we observe that the test deck satisfies the equality $| \{ b \in \{1,\ldots,B\}: i \in \beta_b \}| = i$ for each candidate $i \in \mathcal{N} \equiv \{1,\ldots,N\}$. Furthermore, we recall for each non-identity bijection $\sigma \in \Sigma$ that there must exist a candidate $i \in \mathcal{N}$ that satisfies $\sigma(i) \neq i$. Therefore, we conclude for each non-identity bijection $\sigma \in \Sigma$ that there  exists a  candidate $i \in \mathcal{N}$ that satisfies $| \{ b \in \{1,\ldots,B\}: i \in \beta_b \}|  \neq  |\{ b \in \{1,\ldots,B\}: \sigma(i) \in \beta_b \}|$, which together with Corollary~\ref{cor:diff_votes} implies that the test deck satisfies the constraints of the optimization problem~\eqref{prob:robust}. Our proof of Proposition~\ref{prop:heuristic1} is thus complete. 
\end{proof}

\begin{proof}[Proof of Proposition~\ref{prop:heuristic2}.]
We begin by showing that the optimal objective value of the optimization problem~\eqref{prob:distinct} is greater than or equal to the optimal objective value of the mixed-integer linear optimization problem~\eqref{prob:distinct_mip}. Indeed, let   $(B,\beta_1,\ldots,\beta_B)$ denote an optimal solution for the optimization problem~\eqref{prob:distinct}. This optimal solution assigns each candidate in $\mathcal{N}$ some distinct number of votes. We assume without loss of generality that the solution assigns each candidate some distinct number of votes between $1$ and $|\mathcal{N}|$; if this property does not hold for a given solution, one can simply omit votes for the candidates receiving more than $|\mathcal{N}|$ votes to achieve this property without requiring any additional ballots.

From this optimal test deck, we construct a binary vector $\gamma \in \{0,1\}^{\mathcal{C} \times \mathcal{N}}$ that is defined for each $c \in \mathcal{C}$ and $g \in \mathcal{N}$ as
\begin{align*}
    \gamma_{c,g} &\triangleq \mathbb{I} \left \{ \text{there exists } i \in \mathcal{N}_c \text{ such that } \left|\left \{ b \in \{1,\ldots,B \}: i \in \beta_b \right \}  \right| = g \right \}.  
\end{align*}
In the following bullet points, we show that the integer $B \in \N$ and the binary vector $\gamma \in \{0,1\}^{\mathcal{C} \times \mathcal{N}}$ satisfy each of the constraints of the mixed-integer linear optimization problem~\eqref{prob:distinct_mip}:
\begin{itemize}
    \item { We first show that  $B,\gamma$ satisfies constraint~\eqref{prob:distinct_mip:N_c}. Indeed, we observe for each contest $c \in \mathcal{C}$ that 
\begin{align*}
    \sum_{g \in \mathcal{N}} \gamma_{c,g} &=  \sum_{g \in \mathcal{N}} \mathbb{I} \left \{ \text{there exists } i \in \mathcal{N}_c \text{ such that } \left|\left \{ b \in \{1,\ldots,B \}: i \in \beta_b \right \}  \right| = g \right \}\\
    &= \sum_{i \in \mathcal{N}_c} \sum_{g \in \mathcal{N}} \mathbb{I} \left \{ \left|\left \{ b \in \{1,\ldots,B \}: i \in \beta_b \right \}  \right| = g \right \}\\
     &= \sum_{i \in \mathcal{N}_c} 1\\
     &= | \mathcal{N}_c|.
\end{align*}
The first equality follows from the definition of $\gamma_{c,g}$. The second equality follows from the  fact that $\beta_1,\ldots,\beta_B$ satisfies constraint~\eqref{prob:distinct:distinct}.   The third equality follows from the  fact that $\beta_1,\ldots,\beta_B$   
 satisfies constraint~\eqref{prob:distinct:positive} and from our earlier assumption that the inclusion $\left|\left \{ b \in \{1,\ldots,B \}: i \in \beta_b \right \}  \right| \in \mathcal{N}$ holds for all candidates $i \in \mathcal{N}$.  The fourth equality follows from algebra.  }
 
    \item { We next show that  $B,\gamma$ satisfies constraint~\eqref{prob:distinct_mip:sum_to_one}. Indeed, we observe for each  $g \in \mathcal{N}$ that 
\begin{align*}
    \sum_{c \in \mathcal{C}} \gamma_{c,g} &=      \sum_{c \in \mathcal{C}}  \mathbb{I} \left \{ \text{there exists } i \in \mathcal{N}_c \text{ such that } \left|\left \{ b \in \{1,\ldots,B \}: i \in \beta_b \right \}  \right| = g \right \}\\
    &= \sum_{c \in \mathcal{C}}  \sum_{i \in \mathcal{N}_c} \mathbb{I} \left \{ \left|\left \{ b \in \{1,\ldots,B \}: i \in \beta_b \right \}  \right| = g \right \}\\
     &=  \sum_{i \in \mathcal{N}} \mathbb{I} \left \{ \left|\left \{ b \in \{1,\ldots,B \}: i \in \beta_b \right \}  \right| = g \right \}\\
     &= 1. 
\end{align*}
The first equality follows from the definition of $\gamma_{c,g}$. The second equality follows from the  fact that $\beta_1,\ldots,\beta_B$ satisfies constraint~\eqref{prob:distinct:distinct}.  The third equality follows from algebra. The fourth equality follows from our earlier assumption that $\left|\left \{ b \in \{1,\ldots,B \}: i \in \beta_b \right \}  \right| \in \mathcal{N}$ for all candidates $i \in \mathcal{N}$, which together with the fact that $\beta_1,\ldots,\beta_B$ satisfies constraint~\eqref{prob:distinct:positive}   implies that there must exist exactly one candidate $i \in \mathcal{N}$ that satisfies the equality $\left|\left \{ b \in \{1,\ldots,B \}: i \in \beta_b \right \}  \right| = g$.   }
 \item { We next show that $B, \gamma$ satisfies constraint~\eqref{prob:distinct_mip:bound_B}. Indeed, we observe for each contest $c \in \mathcal{C}$ that 
\begin{align*}
    &\frac{1}{v_c} \sum_{g \in \mathcal{N}} g \gamma_{c,g} \\
    &=  \frac{1}{v_c} \sum_{g \in \mathcal{N}} g \mathbb{I} \left \{ \text{there exists } i \in \mathcal{N}_c \text{ such that } \left|\left \{ b \in \{1,\ldots,B \}: i \in \beta_b \right \}  \right| = g \right \}\\
    &=   \frac{1}{v_c} \sum_{i \in \mathcal{N}_c}  \sum_{g \in \mathcal{N}} g \mathbb{I} \left \{ \left|\left \{ b \in \{1,\ldots,B \}: i \in \beta_b \right \}  \right| = g \right \}\\
        &=   \frac{1}{v_c} \sum_{i \in \mathcal{N}_c} \left|\left \{ b \in \{1,\ldots,B \}: i \in \beta_b \right \}  \right| \\
         &= \frac{1}{v_c}  \sum_{b=1}^B  \left| \mathcal{N}_c \cap \beta_b \right|\\
         &\le \frac{1}{v_c}  \sum_{b=1}^B v_c\\
         &= B.
         \end{align*}
The first equality follows from the definition of $\gamma$. The second equality follows from the  fact that $\beta_1,\ldots,\beta_B$ satisfies constraint~\eqref{prob:distinct:distinct}.   The third equality follows from the  fact that $\beta_1,\ldots,\beta_B$   
 satisfies constraint~\eqref{prob:distinct:positive} and from our earlier assumption that the inclusion $\left|\left \{ b \in \{1,\ldots,B \}: i \in \beta_b \right \}  \right| \in \mathcal{N}$ holds for all candidates $i \in \mathcal{N}$.  The fourth equality follows from algebra. The inequality follows from the fact that $\beta_1,\ldots,\beta_B \in \mathscr{B}$. The fifth equality follows from algebra. 
}
\item Finally, we show that $B, \gamma$ satisfies constraint~\eqref{prob:distinct_mip:N}. Indeed, it follows from the  fact that $\beta_1,\ldots,\beta_B$   
 satisfies constraint~\eqref{prob:distinct:positive} and from our assumption that $\left|\left \{ b \in \{1,\ldots,B \}: i \in \beta_b \right \}  \right| \in \mathcal{N}$ for all candidates $i \in \mathcal{N}$ that there must exist a candidate $i  \in \mathcal{N}$ that satisfies the equality $\left|\left \{ b \in \{1,\ldots,B \}: i \in \beta_b \right \}  \right| = N$. Therefore, we conclude that the inequality $B \ge N$ must be satisfied. 
\end{itemize}
In summary, we have shown in  the above bullet points  that the integer $B \in \N$ and the binary vector $\gamma \in \{0,1\}^{\mathcal{C} \times \mathcal{N}}$ is  a feasible but possibly sub-optimal solution  for the mixed-integer linear optimization problem~\eqref{prob:distinct_mip}.  Because  $(B,\beta_1,\ldots,\beta_B)$ is an optimal solution for the optimization problem~\eqref{prob:distinct}, our proof that the optimal objective value of the optimization problem~\eqref{prob:distinct} is greater than or equal to the optimal objective value of the  mixed-integer linear optimization problem~\eqref{prob:distinct_mip} is thus complete. 

It remains for us to show that the optimal objective value of the  mixed-integer linear optimization problem~\eqref{prob:distinct_mip}  is greater than or equal to the optimal objective value of the optimization problem~\eqref{prob:distinct}. To show this,  let $B \in \N$ and $\gamma \in \{0,1\}^{\mathcal{C} \times \mathcal{N}}$ denote any optimal solution of the mixed-integer linear optimization problem~\eqref{prob:distinct_mip}. Moreover, let $\pi: \mathcal{N} \to \mathcal{N}$ denote the function that satisfies the equality  $\pi(i) = \sum_{g \in \mathcal{N}} g \gamma_{c,g}$ for all contests $c \in \mathcal{C}$ and candidates $i \in \mathcal{N}_c$. It follows from the fact that $B,\gamma$ is a feasible solution for the mixed-integer linear optimization problem~\eqref{prob:distinct_mip} that the function $\pi$ is a bijection. Given the bijection $\pi$, we now construct a test deck $(\beta_1,\ldots,\beta_B)$ using the following procedure:
  \begin{center}
  \fbox{
  \begin{minipage}{0.45\linewidth}
    \begin{algorithmic}
        \State $\beta_1,\ldots,\beta_B \gets \emptyset$
        \For{$c \in \mathcal{C}$}
            \State $b \gets 1$
            \For{$i \in \mathcal{N}_c$}
                \For{$\ell \in \{1,\ldots,\pi(i) \}$}
                    \State $\beta_{b} \gets \beta_{b} \cup \{i \}$
                    \State $b \gets \left(b \mod{B} \right)  + 1$
                \EndFor
            \EndFor
        \EndFor
    \end{algorithmic}
    \end{minipage}}
\end{center}
    The procedure begins by initializing $B$ blank ballots. Then, for each contest $c \in\mathcal{C}$, the procedure  iterates through the ballots and adds the candidates to the ballots. It follows from the fact that $B \ge N$ and from the fact that $\pi(j) \in \mathcal{N}$ for all candidates $j \in \mathcal{N}$ that each candidate $i \in \mathcal{N}_c$ will be selected by this procedure by $\pi(i)$ different ballots. Moreover, it follows from the fact that $B \ge \frac{1}{v_c} \sum_{i \in \mathcal{N}_c} \pi(i)$ that   the procedure will select no more than $v_c$ of the targets from $\mathcal{N}_c$ in any ballot. Therefore, we conclude that the procedure will output a test deck that satisfies $\beta_1,\ldots,\beta_B \in \mathscr{B}$ as well as satisfies all of the constraints of the optimization problem~\eqref{prob:distinct}.  Because we have shown that any optimal solution for the mixed-integer linear optimization problem~\eqref{prob:distinct_mip} can be transformed into a feasible solution for the optimization problem~\eqref{prob:distinct} with the same objective value, we conclude that the optimal objective value of the mixed-integer linear optimization problem~\eqref{prob:distinct_mip} must be greater than or equal to the optimal objective value of the optimization problem~\eqref{prob:distinct}. Our proof of Proposition~\ref{prop:heuristic2} is thus complete.  
\end{proof}

\subsection{Proofs from Appendix~\ref{appx:malicious-decks}}

\begin{proof}[Proof of Proposition~\ref{prop:adversary}]
    Construct a deck of ballots $\beta_1, \ldots, \beta_N$ such that for each $i \in \mathcal{N}$, $\beta_i = \{i\}$. It holds that $T^*_i(\beta_1, \ldots, \beta_N) = 1$ for each $i \in \mathcal{N}$, since only ballot $\beta_i$ is interpreted as containing a vote for candidate $i$. It also  holds for any $\sigma \in \Sigma$ and candidate $i \in \mathcal{N}$ that $T^\sigma_i(\beta_1, \ldots, \beta_N) = 1$, since only ballot $\beta_{\sigma(i)}$ is interpreted as containing a vote for candidate $i$. Thus, we conclude that $T^\sigma(\beta_1,\ldots,\beta_B) = T^*(\beta_1,\ldots,\beta_B)$ and $T^*_i(\beta_1, \ldots, \beta_N) = 1 \geq 1$ for all $i \in \mathcal{N}$.
\end{proof}

\begin{proof}[Proof of Theorem~\ref{thm:swap_adversary}]
    Recall that each $\sigma \in \Sigma$ can be interpreted as a permutation on $\mathcal{N}$, which implies that it can be decomposed into a number of cycles with disjoint sets of elements.\footnote{We say that $i,j \in \mathcal{N}$ are in the same cycle if and only if there exists an integer $k \in \N$ that satisfies $\sigma^k(i) = j$. } Let the set of elements in each of the $K$ cycles be denoted $\mathcal{O}_1, \ldots, \mathcal{O}_K$. We now construct a test deck $\beta_1, \ldots, \beta_B$ such that $B \triangleq K(K+1)/2$ and 
    \begin{align*}
        \beta_1 &\triangleq \mathcal{O}_1,\\
        \beta_2,\beta_3 &\triangleq \mathcal{O}_2\\
     \beta_4,\beta_5,\beta_6 &\triangleq \mathcal{O}_3\\
        &\vdots \\
        \beta_{\frac{K(K-1)}{2}+1},\ldots,  \beta_{\frac{K(K+1)}{2}} &\triangleq \mathcal{O}_K.
    \end{align*}
    
    Because the bidirectional implication $\sigma^n(i) \in \mathcal{N}_c \iff \sigma^n(i) = i$ holds for all contests $c \in \mathcal{C}$, candidates $i \in \mathcal{N}_c$, and integers $n \in \N$, we know that each set $\mathcal{O}_k$ includes at most one candidate from each contest. This means at most one candidate from each contest is marked on each ballot, thereby implying that $\beta_1, \ldots, \beta_B \in \mathscr{B}$. Moreover, for each candidate $i \in \mathcal{O}_k$, we observe that 
    \begin{align*}
        T^*_i(\beta_1, \ldots, \beta_B) & = \sum_{b=1}^B  \mathbb{I}\left \{   i \in \beta_b  \textnormal{ and } \left| \beta_b \cap \mathcal{N}_c \right| \leq v_c \right \} \\
        & = \sum_{b=1}^B  \mathbb{I}\left \{   i \in \beta_b \right \} \\
        & = k.
    \end{align*}
    The first equality is the definition of $T^*_i(\cdot)$. The second equality follows because $\beta_1, \ldots, \beta_B \in \mathscr{B}$. The third equality follows from the fact that the test deck has been constructed to contain $k$ ballots that vote for the candidates in $\mathcal{O}_k$.
    
    We conclude by showing that the voting machine with mapping $\sigma$ gives the correct output for each candidate $i \in \mathcal{N}_c$ in each contest $c \in \mathcal{C}$:  
    \begin{align*}
        T^\sigma_i(\beta_1, \ldots, \beta_B) & = \sum_{b=1}^B  \mathbb{I}\left \{   \sigma(i) \in {\beta}_b  \textnormal{ and } \left| \left \{ \sigma(j) \in {\beta}_b: j \in \mathcal{N}_c \right \}  \right| \leq v_c \right \} \\
        & = \sum_{b=1}^B  \mathbb{I}\left \{   i \in {\beta}_b  \textnormal{ and } \left| \left \{ j \in {\beta}_b: j \in \mathcal{N}_c \right \}  \right| \leq v_c \right \} \\
        & = \sum_{b=1}^B  \mathbb{I}\left \{   i \in \beta_b  \textnormal{ and } \left| \beta_b \cap \mathcal{N}_c \right| \leq v_c \right \} \\
        & = T^*_i(\beta_1, \ldots, \beta_B).
    \end{align*}
    The first equality is the definition of $T^\sigma(\cdot)$. The second equality holds because each ballot $\beta_b$ marks every candidate that falls in the same cycle under $\sigma$, which implies that  $\sigma(j) \in \beta_b \iff j \in \beta_b$ for all $j \in \mathcal{N}_c$. The third equality follows from algebra. The fourth equality is the definition of $T^*(\cdot)$. Our proof of Theorem~\ref{thm:swap_adversary} is thus complete. 
    \end{proof}

\subsection{Proofs from Appendix~\ref{appx:statelaws}}

\begin{proof}[Proof of Proposition~\ref{prop:overvote_1}.]

    Let $\bar{\beta} \triangleq  \mathcal{N}$ be the ballot that votes for every target, and let $(B,\beta_1,\ldots,\beta_B)$ denote a feasible solution for the optimization problem~\eqref{prob:robust}. For each candidate $i \in \mathcal{N}$, we know that the reported vote total under any incorrect mapping $\sigma \in \Sigma$ is as follows, where $c \in \mathcal{C}$ is the contest containing candidate $i$:
    \begin{align*}
          T^\sigma_i(\beta_1,\ldots,\beta_B,\bar{\beta}) &= \left( \sum_{b=1}^B  \mathbb{I}\left \{   \sigma(i) \in \beta_b  \textnormal{ and } \left| \left \{ \sigma(j) \in \beta_b: j \in \mathcal{N}_c \right \}  \right| \leq v_c \right \} \right) \\
        &\quad + \mathbb{I}\left \{   \sigma(i) \in \bar{\beta} \textnormal{ and } \left| \left \{ \sigma(j) \in \bar{\beta}: j \in \mathcal{N}_c \right \}  \right| \leq v_c \right \}\\
        &= T^\sigma_i(\beta_1,\ldots,\beta_B) + \mathbb{I}\left \{   \sigma(i) \in \bar{\beta} \textnormal{ and } \left| \left \{ \sigma(j) \in \bar{\beta}: j \in \mathcal{N}_c \right \}  \right| \leq v_c \right \}\\
        &= T_i^\sigma(\beta_1,\ldots,\beta_B) + \mathbb{I} \left \{ |\mathcal{N}_c| \le v_c \right \}. 
    \end{align*}
    The first two equalities follow from the definition of $T_i^\sigma(\cdot)$, and the third equality follows from the fact that $\bar{\beta} = \mathcal{N}$. The reported vote total on a properly functioning voting machine, meanwhile, is given by the following:
    \begin{align*}
        T^*_i(\beta_1,\ldots,\beta_B,\bar{\beta}) &= \left( \sum_{b=1}^B  \mathbb{I}\left \{   i \in \beta_b  \textnormal{ and } \left| \beta_b \cap \mathcal{N}_c \right| \leq v_c \right \} \right) \\
        &\quad + \mathbb{I}\left \{   i \in \bar{\beta} \textnormal{ and } \left| \bar{\beta} \cap \mathcal{N}_c  \right| \leq v_c \right \}\\
        &= T^*_i(\beta_1,\ldots,\beta_B) + \mathbb{I}\left \{   i \in \bar{\beta} \textnormal{ and } \left| \bar{\beta} \cap \mathcal{N}_c \right| \leq v_c \right \}\\
        &= T_i^*(\beta_1,\ldots,\beta_B) + \mathbb{I} \left \{ |\mathcal{N}_c| \le v_c \right \}.
    \end{align*}
    The first two equalities follow from the definition of $T_i^*(\cdot)$, and the third equality again follows from the fact that $\bar{\beta} = \mathcal{N}$.

    We observe that because $(B,\beta_1,\ldots,\beta_B)$ is a feasible solution for the optimization problem~\eqref{prob:robust}, it must be the case that $T^\sigma(\beta_1, \ldots, \beta_B) \neq T^*(\beta_1, \ldots, \beta_B)$ for all $\sigma \in \Sigma$. This means that for each such $\sigma$ the resulting vectors must differ in at least one position; that is, there must exist some $i \in \mathcal{N}$ such that $T_i^\sigma(\beta_1, \ldots, \beta_B) \neq T_i^*(\beta_1, \ldots, \beta_B)$. For this $i$, we can conclude the following:
    \begin{align*}
        T^\sigma_i(\beta_1,\ldots,\beta_B,\bar{\beta}) &= T_i^\sigma(\beta_1,\ldots,\beta_B) + \mathbb{I} \left \{ |\mathcal{N}_c| \le v_c \right \} \\
        &  \neq T_i^*(\beta_1,\ldots,\beta_B) + \mathbb{I} \left \{ |\mathcal{N}_c| \le v_c \right \} \\
        & = T^*_i(\beta_1,\ldots,\beta_B,\bar{\beta}).
    \end{align*}
    The two equalities follow from the chain of equalities derived above, and the non-equality follows from the fact that $T_i^\sigma(\beta_1, \ldots, \beta_B) \neq T_i^*(\beta_1, \ldots, \beta_B)$ for the given $i \in \mathcal{N}$. We have therefore shown that there exists some $i \in \mathcal{N}$ for each $\sigma \in \Sigma$ such that $T^\sigma_i(\beta_1,\ldots,\beta_B,\bar{\beta}) \neq T^*_i(\beta_1,\ldots,\beta_B,\bar{\beta})$. This means the vectors $T^\sigma(\beta_1,\ldots,\beta_B,\bar{\beta})$ and $T^*(\beta_1,\ldots,\beta_B,\bar{\beta})$ differ in at least one position, so Proposition~\ref{prop:overvote_1} is proven.
\end{proof}

\begin{proof}[Proof of Proposition~\ref{prop:overvote_2}.]
    Let $(B, \beta_1, \ldots, \beta_B)$ denote a feasible solution to the optimization problem~\eqref{prob:robust}, let $\sigma \in \Sigma$, and let $\tilde{\beta} \subseteq \mathcal{N}$ denote any filled-out ballot that satisfies the following equality for each contest $c \in \mathcal{C}$:
    \begin{align*}
        |\tilde{\beta} \cap \mathcal{N}_c| &= \begin{cases}
            v_c + 1, &\textnormal{if } | \mathcal{N}_c| > v_c,\\
            0,&\textnormal{otherwise}. 
        \end{cases}
    \end{align*}
    If there exists $b \in \mathcal{B}$ and $c \in \mathcal{C}$ such that $|\{\sigma(j) \in \beta_b : j \in \mathcal{N}_c\}| > v_c$,  then our proof is complete. Therefore, we assume for the rest of the proof of Proposition~\ref{prop:overvote_2} that the inequality $|\{\sigma(j) \in \beta_b : j \in \mathcal{N}_c\}| \leq v_c$  holds for all $b \in \mathcal{B}$ and $c \in \mathcal{C}$. This allows us to determine that the following holds:
    \begin{align}
        \sum_{c \in \mathcal{C}}\sum_{i \in \mathcal{N}_c} T^*_i(\beta_1, \ldots, \beta_B) 
        & = \sum_{c \in \mathcal{C}} \sum_{i \in \mathcal{N}_c} \sum_{b=1}^B \mathbb{I} \{ i \in \beta_b   \textnormal{ and } | \beta_b \cap \mathcal{N}_c | \leq v_c\} \notag \\ 
        & = \sum_{c \in \mathcal{C}} \sum_{i \in \mathcal{N}_c} \sum_{b=1}^B \mathbb{I} \{ i \in \beta_b \} \notag \\
        & = \sum_{c \in \mathcal{C}} \sum_{i \in \mathcal{N}_c} \sum_{b=1}^B \mathbb{I} \{ \sigma(i) \in \beta_b \} \notag \\
        & = \sum_{c \in \mathcal{C}} \sum_{i \in \mathcal{N}_c} \sum_{b=1}^B  \mathbb{I}\left \{ \sigma(i) \in \beta_b  \textnormal{ and } \left| \left \{ \sigma(j) \in \beta_b: j \in \mathcal{N}_c \right \}  \right| \leq v_c \right \} \notag \\
        & = \sum_{c \in \mathcal{C}}\sum_{i \in \mathcal{N}_c} T^\sigma_i(\beta_1, \ldots, \beta_B). \label{line:bradenname}
    \end{align}
    The first equality is the definition of $T^*_i(\cdot)$. The second equality holds because $\beta_1, \ldots, \beta_B \in \mathscr{B}$. The third equality holds because $\sigma$ is a bijection over $\mathcal{N}$, so the transformation only permutes the order in which terms are added to the sum. The fourth equality holds due to our assumption that $|\{\sigma(j) \in \beta_b : j \in \mathcal{N}_c\}| \leq v_c$. The fifth equality is the definition of $T^\sigma_i(\cdot)$.
    
    It follows from the fact that $(B,\beta_1,\ldots,\beta_B)$ is a feasible solution to the optimization problem~\eqref{prob:robust} and from the fact that $\sigma \in \Sigma$ that $T^*(\beta_1, \ldots, \beta_B) \neq T^\sigma(\beta_1, \ldots, \beta_B)$.  It follows from this fact and from the equality derived on line~\eqref{line:bradenname} that there must exist a candidate $i \in \mathcal{N}_c$ in some contest $c \in \mathcal{C}$ that satisfies the strict inequality $T^*_i(\beta_1, \ldots, \beta_B) < T^\sigma_i(\beta_1, \ldots, \beta_B)$. For this candidate $i$, the following must hold:
    \begin{align*}
        T_i^*(\beta_1, \ldots, \beta_B, \tilde{\beta})
        & = \left( \sum_{b=1}^B  \mathbb{I}\left \{   i \in \beta_b  \textnormal{ and } \left| \beta_b \cap \mathcal{N}_c \right| \leq v_c \right \} \right) \\
            &\quad\quad + \mathbb{I}\left \{   i \in \tilde{\beta} \textnormal{ and } \left| \tilde{\beta} \cap \mathcal{N}_c  \right| \leq v_c \right \} \\
        & = \left( \sum_{b=1}^B  \mathbb{I}\left \{   i \in \beta_b  \textnormal{ and } \left| \beta_b \cap \mathcal{N}_c \right| \leq v_c \right \} \right) \\
        & = T_i^*(\beta_1, \ldots, \beta_B) \\
        & < T_i^\sigma(\beta_1, \ldots, \beta_B) \\
        & = \left( \sum_{b=1}^B  \mathbb{I}\left \{   \begin{aligned} \sigma(i) \in \beta_b  \textnormal{ and } \\ \left| \left \{ \sigma(j) \in \beta_b: j \in \mathcal{N}_c \right \}  \right| \leq v_c \end{aligned} \right \} \right) \\
        & \leq \left( \sum_{b=1}^B  \mathbb{I}\left \{   \begin{aligned} \sigma(i) \in \beta_b  \textnormal{ and } \\ \left| \left \{ \sigma(j) \in \beta_b: j \in \mathcal{N}_c \right \}  \right| \leq v_c \end{aligned} \right \} \right) \\
            & \quad\quad + \mathbb{I}\left \{   \begin{aligned} \sigma(i) \in \beta_b  \textnormal{ and } \\ \left| \left \{ \sigma(j) \in \beta_b: j \in \mathcal{N}_c \right \}  \right| \leq v_c \end{aligned} \right \} \\
        & = T_i^\sigma(\beta_1, \ldots, \beta_B, \tilde{\beta}).
    \end{align*}
    The first equality is the definition of $T^*_i(\cdot)$. The second equality holds because the construction of the filled-out ballot $\tilde{\beta}$ implies that either $|\tilde{\beta} \cap \mathcal{N}_c| = v_c + 1$ or $i \notin \tilde{\beta}$. The third equality is the definition of $T^*_i(\cdot)$. The first inequality follows for candidate $i$ by earlier reasoning. The  fourth equality is the definition of $T^\sigma_i(\cdot)$. The second inequality holds because $\mathbb{I}\{\cdot\}$ is non-negative. The fifth equality is the definition of $T^\sigma_i(\cdot)$.

    In summary, we have shown that if the inequality  $|\{\sigma(j) \in \beta_b : j \in \mathcal{N}_c\}| \leq v_c$  holds for all $b \in \mathcal{B}$ and $c \in \mathcal{C}$, then there must exist a candidate $i \in \mathcal{N}$ that satisfies $T^*_i(\beta_1, \ldots, \beta_B, \tilde{\beta}) \neq T^\sigma_i(\beta_1, \ldots, \beta_B, \tilde{\beta})$. Our proof of Proposition~\ref{prop:overvote_2} is thus complete. 
\end{proof}

\subsection{Proofs from Appendix~\ref{appx:reuse}}

\begin{proof}[Proof of Lemma~\ref{lem:solution_reduction}.]
    Let $(B, \beta_1, \ldots, \beta_B)$ be an optimal solution for the optimization problem~\eqref{prob:robust} for a ballot style parameterized by the tuple $(\mathcal{N}, \mathcal{C}, \{\mathcal{N}_c\}_{c \in \mathcal{C}}, \{v_c\}_{c \in \mathcal{C}})$. Let $\bar{\mathcal{C}} \subset \mathcal{C}$ be a subset of that ballot style's contests which we are removing from the ballot style, and let  $\bar{\mathcal{N}} \triangleq \bigcup_{c \in \bar{\mathcal{C}}} \mathcal{N}_c$ be the candidates in those contests. Define $\mathcal{C}' \triangleq \mathcal{C} \setminus \bar{\mathcal{C}}$ and $\mathcal{N}' \triangleq \mathcal{N} \setminus \bar{\mathcal{N}}$ as the contests and candidates left over when the subsets $\bar{\mathcal{C}}$ and $\bar{\mathcal{N}}$ are removed.
    Consider the ballot style created when the candidates $\bar{\mathcal{N}}$ and contests $\bar{\mathcal{C}}$ are removed, which is parameterized by the tuple $(\mathcal{N}', \mathcal{C}', \{\mathcal{N}_c\}_{c \in \mathcal{C}'}, \{v_c\}_{c \in \mathcal{C}'})$. Define $\Sigma'$ as the set of non-identity bijections over $\mathcal{N}'$; that is, allow it to be the set of possible mappings for this new ballot style.

    Consider some particular $\sigma' \in \Sigma'$, and let the extension of this mapping to the domain $\mathcal{N}$ be defined for each candidate $i \in \mathcal{N}$ as
    \begin{align*}
        \sigma(i) & \triangleq \begin{cases}
            \sigma'(i), & \textnormal{ if } i \in \mathcal{N}', \\
            i, & \textnormal{ if } i \in \bar{\mathcal{N}}.
        \end{cases}
    \end{align*}
We first observe for each candidate $i \in \bar{\mathcal{N}}$ that
    \begin{align*}
        T^\sigma_i(\beta_1, \ldots, \beta_B) 
        & = \sum_{b=1}^B  \mathbb{I}\left \{   \sigma(i) \in {\beta}_b  \textnormal{ and } \left| \left \{ \sigma(j) \in {\beta}_b: j \in \mathcal{N}_c \right \}  \right| \leq v_c \right \} \\
        & = \sum_{b=1}^B  \mathbb{I}\left \{   i \in {\beta}_b  \textnormal{ and } \left| \left \{ j \in {\beta}_b: j \in \mathcal{N}_c \right \}  \right| \leq v_c \right \} \\
        & = \sum_{b=1}^B  \mathbb{I}\left \{   i \in {\beta}_b  \textnormal{ and } \left| \beta_b \cap \mathcal{N}_c \right| \leq v_c \right \} \\
        & = T^*_i(\beta_1, \ldots, \beta_B).
    \end{align*}
    The first equality holds by the definition of $T^\sigma_i(\cdot)$. The second equality holds because $\sigma(i) = i$ for all $i \in \bar{\mathcal{N}}$. The third equality follows from algebra. The fourth equality follows from the definition of $T^\sigma_i(\cdot)$.

    It follows from the fact that $(B, \beta_1, \ldots, \beta_B)$ is a feasible solution for the optimization problem~\eqref{prob:robust} that that $T^\sigma(\beta_1, \ldots, \beta_B) \neq T^*(\beta_1, \ldots, \beta_B)$. With the equality derived above, this means  that there must exist a candidate $i \in \mathcal{N}'$ that satisfies $T_i^\sigma(\beta_1, \ldots, \beta_B) \neq T_i^*(\beta_1, \ldots, \beta_B)$. Take that candidate $i$ and let $c \in \mathcal{C}'$ be the contest that satisfies $i \in \mathcal{N}_c$. Then, it holds that
    \begin{align*}
        T^*_i(\beta_1 \setminus \bar{\mathcal{N}}, \ldots, \beta_B \setminus \bar{\mathcal{N}})
        & = \sum_{b=1}^B  \mathbb{I}\left \{   i \in {\beta}_b \setminus \bar{\mathcal{N}}  \textnormal{ and } \left| (\beta_b \setminus \bar{\mathcal{N}}) \cap \mathcal{N}_c \right| \leq v_c \right \} \\
        & = \sum_{b=1}^B  \mathbb{I}\left \{   i \in {\beta}_b \textnormal{ and } \left| (\beta_b \setminus \bar{\mathcal{N}}) \cap \mathcal{N}_c \right| \leq v_c \right \} \\
        & = \sum_{b=1}^B  \mathbb{I}\left \{   i \in {\beta}_b \textnormal{ and } \left| \beta_b \cap \mathcal{N}_c \right| \leq v_c \right \} \\
        & = T^*_i(\beta_1, \ldots, \beta_B) \\
        & \neq T^\sigma_i(\beta_1, \ldots, \beta_B) \\
        & = \sum_{b=1}^B  \mathbb{I}\left \{   \sigma(i) \in {\beta}_b  \textnormal{ and } \left| \left \{ \sigma(j) \in {\beta}_b: j \in \mathcal{N}_c \right \}  \right| \leq v_c \right \} \\
        & = \sum_{b=1}^B  \mathbb{I}\left \{   \sigma'(i) \in {\beta}_b  \textnormal{ and } \left| \left \{ \sigma'(j) \in {\beta}_b: j \in \mathcal{N}_c \right \}  \right| \leq v_c \right \} \\
        & = \sum_{b=1}^B  \mathbb{I}\left \{   \sigma'(i) \in {\beta}_b \setminus \bar{\mathcal{N}} \textnormal{ and } \left| \left \{ \sigma'(j) \in {\beta}_b \setminus \bar{\mathcal{N}}: j \in \mathcal{N}_c \right \}  \right| \leq v_c \right \} \\
        & = T^{\sigma'}_i(\beta_1 \setminus \bar{\mathcal{N}}, \ldots, \beta_B \setminus \bar{\mathcal{N}}).
    \end{align*}
    The first equality holds by the definition of $T^*_i(\cdot)$. The second equality holds because $i \notin \bar{\mathcal{N}}$. The third equality holds because $\bar{\mathcal{N}} \cap \mathcal{N}_c = \emptyset$. The fourth equality holds by the definition of $T^*_i(\cdot)$. The non-equality follows from our choice of $i$. The fifth equality holds by the definition of $T^\sigma_i(\cdot)$. The sixth equality holds because $\sigma(i) = \sigma'(i)$ for all $i \in \mathcal{N}'$. The seventh equality holds because $\sigma'$ has a range which excludes $\bar{\mathcal{N}}$. The eighth equality is the definition of $T^{\sigma'}_i(\cdot)$.

    In summary, we have shown for each $\sigma' \in \Sigma'$ that there exists a candidate $i \in \mathcal{N}'$ that satisfies $T^*_i(\beta_1 \setminus \bar{\mathcal{N}}, \ldots, \beta_B \setminus \bar{\mathcal{N}}) \neq T^{\sigma'}_i(\beta_1 \setminus \bar{\mathcal{N}}, \ldots, \beta_B \setminus \bar{\mathcal{N}})$. This fact, along with the observation that $\beta_1 \setminus \bar{\mathcal{N}}, \ldots, \beta_B \setminus \bar{\mathcal{N}} \in \mathscr{B}$ since $\beta_1, \ldots, \beta_B \in \mathscr{B}$, allows us to conclude that $(B, \beta_1 \setminus \bar{\mathcal{N}}, \ldots, \beta_B \setminus \bar{\mathcal{N}})$ is a feasible solution to the optimization problem~\eqref{prob:robust} for the ballot style parameterized by the tuple $(\mathcal{N}', \mathcal{C}', \{\mathcal{N}_c\}_{c \in \mathcal{C}'}, \{v_c\}_{c \in \mathcal{C}'})$. 
\end{proof}

\end{document}